\newcommand{\kin}{\text{kin\,}}
\newcommand{\inter}{\text{int\,}}
\def\C{{\mathbbm C}}
\newtheorem{theorem}{Theorem}
\newtheorem{proposition}{Proposition}
\newcommand{\bea}{\begin{eqnarray}}
\newcommand{\eea}{\end{eqnarray}}
\newcommand{\beq}{\begin{equation}}
\newcommand{\eeq}{\end{equation}}
\newcommand{\vp}{\varphi}
\newcommand{\bvp}{\bar{\varphi}}
\begin{document}

\begin{center}

{\LARGE\bf Closed equations of the two-point functions\\ 
\vspace{5pt} for tensorial group field theory }
\vspace{15pt}

{\large  Dine Ousmane Samary}

\vspace{15pt}

{\sl Perimeter Institute for Theoretical Physics\\
31 Caroline St. N.
Waterloo, ON N2L 2Y5, Canada\\

\vspace{0.5cm}

International Chair in Mathematical Physics and Applications\\ (ICMPA-UNESCO Chair), University of Abomey-Calavi,\\
072B.P.50, Cotonou, Republic of Benin\\
}
\vspace{5pt}
E-mails:  {\sl dsamary@perimeterinstitute.ca\quad 
}

\vspace{10pt}

\begin{abstract}
In this paper we  provide the closed equations that satisfy two-point correlation functions of the rank 3 and 4 tensorial group field theory. The formulation of the present problem extends the method used by Grosse and Wulkenhaar in [arXiv 0909.1389] to the tensor case. Ward-Takahashi identities and Schwinger-Dyson equations are combined to establish a nonlinear integral equation for the two-point functions. In the 3D case
 the solution of this equation is given perturbatively at second order of the coupling constant.  
\end{abstract}

\end{center}

\noindent  Pacs numbers:  11.10.Gh, 04.60.-m
\\
\noindent  Key words: Renormalization, tensorial group field theory, Ward-Takahashi identities, Schwinger-Dyson equation, two-point correlation functions. 

\setcounter{footnote}{0}



\section{Introduction} 
Random Tensor Models \cite{Rivasseau:2013uca}\cite{Rivasseau:2012yp}\cite{Rivasseau:2011hm} extends Matrix Models \cite{Di Francesco:1993nw} as  promising candidates to understand Quantum Gravity in higher dimension, $D 
\geq 3$. The formulation of such models is based on a Feynman path integral generating randomly graphs representing simplicial pseudo manifolds of dimension $D$. The equivalent of the t'Hooft large $N$ limit \cite{'tHooft:1982tz}\cite{'tHooft:1983wm}  for these Tensor Models      has been recently discovered by Gurau \cite{Gurau:2010ba}\cite{Gurau:2011aq}\cite{Gurau:2011xq}. The large $N$ limit behaviour is a powerful tool which  allows to understand the continuous limit of these models through, for instance, the study of critical exponents and phase transitions \cite{Dartois:2013sra}\cite{Bonzom:2011ev}\cite{Bonzom:2011zz}. 

With the advent of the field theory formulation of Random Tensor Models, henceforth called Tensorial Group Field Theory (TGFT) \cite{BenGeloun:2011rc}--\cite{Samary:2013xla}, one addresses several different questions such as Renormalizability (for removing divergences) and the study UV behaviour of these models. It turns out that Renormalization can be consistently defined for TGFTs and most of them, for the higher rank $D 
\geq 3$ are UV asymptotically free \cite{Geloun:2013saa}. This is of course very encouraging for the Geomeogenesis Scenario \cite{BenGeloun:2012pu}\cite{BenGeloun:2012yk}\cite{Samary:2013xla}.

It becomes more and more convincing that  Random Tensors and TGFT's will take a growing role  for giving answers for the Quantum Gravity conundrum. Despite all these results, a lot of questions (both conceptual and technical) arise in this framework for obtaining a final and emergent theory of General Relativity \cite{Rivasseau:2011hm}. Among other goals, it would be strongly desirable to establish more connections with other studies and important results around Gravity. This is the purpose of this paper which provides the first glimpses of the extension of the recent full resolution of the correlation functions in the Grosse-Wulkenhaar (GW) model \cite{Grosse:2009pa}\cite{Grosse:2013iva}
\cite{Grosse:2012uv}.

One of the main purpose of a field theory is to find  the exact value of the Green's functions  also called correlation functions. Obviously, this can be a highly nontrivial task.  In almost scarce cases where this is successfully done, one calls the model exactly solvable.  In a recent work,  the renormalizable noncommutative scalar field theory called the GW model was solved  \cite{Grosse:2013iva}--\cite{Grosse:2003aj}.  This particular noncommutative field theory projects on a matrix model and then can be seen a model for QG in $2D$. Let us review this model arising in Noncommutative Geometry. Grosse and Wulkenhaar   modified the propagator of the noncommutative  field theory by adding a harmonic term and showed that the resulting   functional action is renormalizable at all orders of pertubation.
  The proof of this claim   was given using the matrix basis dual to the Moyal space of functions.  In \cite{Rivasseau:2005bh}\cite{Rivasseau:2007rz}\cite{Gurau:2005gd}  a new proof of the renormalizability was given in direct space using  multiscale analysis \cite{Riv1}. The GW propagator breaks the $U(N)$ symmetry invariance  in the
infrared regime, but is asymptotically safe in the ultraviolet regime  \cite{Disertori:2006nq}\cite{Disertori:2006uy}\cite{Grosse:2004by}.
The model is also non invariant under translation and rotation of spacetime. The only known invariance satisfied by the model is the so-called  Langman Szabo duality \cite{Langmann:2002cc}. At the perturbative level, the associated Feynman graphs are ribbon graphs.   In a recent remarkable contribution, Grosse and Wulkenhaar  solve successfully all correlators in this model.  Using both Ward-Takahashi identities and the Schwinger-Dyson equation, these authors provide, via Hilbert transform, a nonlinear integral equation for the two-point functions \cite{Grosse:2012uv}. From this result, they were able to generate solutions for all correlators. Thus, the GW model is exactly nonperturbatively solvable. The question is whether or not this method may apply to other models, in particular to TGFTs dealing with higher rank tensors. We give a partial positive answer of this question. Indeed, as we will show in the following, the resolution method can be applied to find nonlinear equations for the correlations here as well. Due to the highly nontrivial equations and combinatorics, the full resolution of all correlators deserves more work which should be addressed elsewhere. 

The present paper is organized as follows.
In the section \ref{sec2},  we  derive the Ward-Takahashi identities of arbitrary rank $D$ TGFT.  In section \ref{sec3} we give the closed equation of the two-point correlation functions for the rank $3$ TGFT.  We also give the solution of this equation at second order of perturbation. In section \ref{sec4} we provide the closed equation of rank $4$ tensor field.  We give a summary of our results and outlook of the paper in section \ref{sec5}.



\section{Ward-Takahashi identities for arbitrary $D$-tensor field model}\label{sec2}
TGFT's are generally defined by an action  $S[\bvp,\vp]$, that depends on the field $\vp$ and its conjugate $\bvp$ defined on the compact Lie group $G$  i.e.
$
\vp:   G^D\longrightarrow \C;\,\,\,
(g_1,\cdots,g_D)\longmapsto \vp(g_1,\cdots,g_D).
$ 
 For simplicity, we will always consider $G=U(1)$.
We are using the Fourier transformation of the field and are defining the momentum variable  associated to the group elements $[g]=(g_1,g_2,\cdots, g_D)\in U(1)^D$ as $[p]=(p_{1},p_{2},\cdots ,p_{D})\in \mathbb{Z}^D$. Using the parametrization  $g_k=e^{i\theta_k}$  we write
\bea
\vp({g_1,\cdots,g_D})=\sum_{p_i\in\mathbb{Z}} \vp({p_1,\cdots, p_D})e^{i\sum_k\theta_k p_k},\quad \theta_i\in[0,2\pi).
\eea
The Fourier transform of the field $\vp$  is denoted by  $\vp_{12\cdots D}=:\vp(p_1,\cdots,p_D)=:\vp_{[D]}$ for simplicity. 
The functional action $S[\bvp,\vp]$ is written in general case as
\bea\label{actiontensor}
S[\bvp,\vp]=\sum_{p_i}\bvp_{12\cdots D}C^{-1}(p_1,p_2,\cdots,p_D;p'_1,p'_2,\cdots p'_D)\vp_{12\cdots D}\prod_{i=1}^D\delta_{p_ip'_i}+S^{\inter}
\eea 
where $C$ stands for the propagator and $S^{\inter}$ collects all  vertex contributions of the interaction.
Let $d\mu_C$ be the field measure associated with  the covariance $C$, we have the relation
\beq
C([p];[p'])=\int\,d\mu_C\,\vp_{[p]}\bvp_{[p']},\quad d\mu_C=\prod_{[p]}d\bvp_{[p]}\,d\vp_{[p]}
e^{-\bvp_{[p]}\,C^{-1}([p],[p])\,\vp_{[p]}}.
\eeq
The Green's functions or $N$-point correlation functions are defined by  
 the relation
\bea\label{Green}
G([p]_1,[p]_2,\cdots [p]_N)=\frac{1}{\mathcal Z}\int\,d\mu_C\,\,\vp_{[p]_1}\bvp_{[p]_1}\cdots\vp_{[p]_N}\bvp_{[p]_N}e^{-S^{\inter}},
\eea
 where $\mathcal Z$ is the normalization factor also called partition function given by
\bea
\mathcal Z=\int\,d\mu_C\, e^{-S^{\inter}}.
\eea
Let us write the interaction term of the action \eqref{actiontensor} as $S^{\inter}=\lambda V[\bvp,\vp]=:\sum_k \lambda_k V_k[\bvp,\vp]$. The main idea of the pertubative theory is to expand the Green's functions \label{Green} as 
\bea
G([p]_1,[p]_2,\cdots [p]_N)&=&\sum_{n=0}^\infty\,\frac{(-\lambda)^n}{n!}\int\,d\mu_C\, V^n[\bvp,\vp] \,\vp_{[p]_1}\bvp_{[p]_1}\cdots\vp_{[p]_N}\bvp_{[p]_N}\cr
&=&\sum_{n=0}^\infty\, \lambda^n\,G^{(n)}_{ N}.
\eea 
Using this formula, the Green's functions can be computed order by order using Dyson's theorem.

We consider the rank $D$ tensor
 field $\vp$ and its conjugate $\bvp$, which 
  are transformed under 
the tensor product of  $D$ fundamental representations of the unitary group 
$\mathcal U_{\otimes}^{N_D}:=\otimes_{i=1}^D U(N_i)$.   Let $U^{(a)}\in U(N_a)$, $a=1,2,\cdots,D$. The field $\vp$ and its conjugate $\bvp$ are transformed under $U(N_a)$ as
\bea
\vp_{12\cdots D}\rightarrow [U^{(a)}\vp]_{12\cdots a\cdots D}=\sum_{p'_a\in Z} U^{(a)}_{ p_ap'_a}\vp_{12\cdots  a'\cdots D},\\
\bvp_{12\cdots D}\rightarrow [\bvp U^{\dag(a)}]_{12\cdots  a\cdots D}=\sum_{p'_a\in Z} \bar U^{(a)}_{p_ap'_a}\bvp_{12\cdots  a'\cdots D}.
\eea 
$p'_a$ or simply $a'$ is the momentum index at the position $a$ in the expression $\vp_{12\cdots a'\cdots D}$. The kinetic action in \eqref{actiontensor} is re-expressed as follows
\bea\label{actionintddd}
S^{\kin}[\bvp,\vp]=\sum_{p_1,\cdots,p_D}\vp_{12\cdots D}
M_{12\cdots D}\bvp_{12\cdots D},\quad M_{12\cdots D}=C^{-1}_{12\cdots D}.
\eea
$M_{12\cdots D}$ is the inverse of propagator associated to the model.  Rank $D$ tensor fields are represented by half lines made with $D$ segments called strands. A propagator is a $D$ stranded line and as usual connects vertices. 
The variation of the action $S^{\kin}$  under infinitesimal $U(N_a)$ transformation  is given by
\bea
\delta^{(a)} [S^{\kin}]=-i\sum_{p_1,\cdots,p_D}\Big[M\Big(\vp[\bar B\bvp]^{(a)}-[B\vp]^{(a)}\bvp\Big)
\Big]_{12\cdots D}
\eea
where $B$ is the infinitesimal  Hermitian operator corresponding to the generator of unitary group  $U(N_a)$ i.e.
\bea
U_{pp'}^{(a)}=\delta_{pp'}^{(a)}+iB_{pp'}^{(a)}
+O(B^2),\quad
\bar{U}_{pp'}^{(a)}=\delta_{pp'}^{(a)}-i\bar{B}_{pp'}^{(a)}+O(\bar B^2),
\eea
with $\bar{B}_{pp'}^{(a)}={B}_{p'p}^{(a)}$. Consider now the  theory defined with external source $F[\vp,\bvp;\eta,\bar\eta]$ as
\bea
F[\eta,\bar\eta]=\sum_{p_1,\cdots,p_D} \bar{\vp}_{12\cdots D}\eta_{12\cdots D}+\bar{\eta}_{12\cdots D}\vp_{12\cdots D}.
\eea 
 The partition function of the model is re-expressed as
\bea
\mathcal Z[\eta,\bar\eta]=\int\,d\vp d\bvp\, e^{-S[\vp,\bvp]+F[\vp,\bvp;\eta,\bar\eta]}.
\eea
Under $U(N_a)$ infinitesimal transformation 
\bea
\delta^{(a)} [F]=i\sum_{p_1,\cdots,p_D}\Big[\bar\eta [B\vp]^{(a)}-[\bar B\bvp]^{(a)}\eta\Big]_{12\cdots D}.
\eea
 Let $\delta^{(\otimes)}$ be the total variation under the action of the group element $U^{(1)}\otimes U^{(2)}\otimes \cdots \otimes U^{(D)}\in \mathcal U_{\otimes}^{N_D}$. Then we get the following proposition
\begin{proposition} \label{propcol}The kinetic term of the action \eqref{actiontensor}, i.e.  $S^{\kin}$ and  $F$ are respectively  transformed linearly as
\bea
\delta^{(\otimes)} S^{\kin}=\sum_{a=1}^D\delta^{(a)} S^{\kin},\quad 
\delta^{(\otimes)} F=\sum_{a=1}^D\delta^{(a)} F.
\eea
Then $\delta^{(\otimes)}S=0$ if and only if  $\delta^{(a)}S=0$ for all functional quantity $S$, which depends on $\vp$, $\bvp$, $\eta$ and $\bar{\eta}$.
\end{proposition}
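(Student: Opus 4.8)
The plan is to reduce the whole statement to one elementary observation: at first order in the group generators, the transformation induced by $U^{(1)}\otimes\cdots\otimes U^{(D)}$ on an elementary field is additive over the strands. First I would insert $U^{(a)}_{pp'}=\delta_{pp'}+iB^{(a)}_{pp'}+O(B^2)$ into
\[
[U^{(1)}\otimes\cdots\otimes U^{(D)}\,\vp]_{12\cdots D}=\sum_{p'_1,\dots,p'_D}U^{(1)}_{p_1p'_1}\cdots U^{(D)}_{p_Dp'_D}\,\vp_{1'2'\cdots D'},
\]
expand the product, and drop every term containing two or more generators $B^{(a)}$, which is $O(B^2)$ since the $U^{(a)}$ act on independent strand indices. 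The surviving zeroth-order term is $\vp_{12\cdots D}$ and the first-order term is exactly $\sum_{a=1}^D i[B^{(a)}\vp]^{(a)}_{12\cdots D}=\sum_{a=1}^D\delta^{(a)}\vp_{12\cdots D}$; the same computation applies to $\bvp$, while the sources $\eta,\bar\eta$ are inert. Hence $\delta^{(\otimes)}=\sum_{a=1}^D\delta^{(a)}$ as operators on each of $\vp,\bvp,\eta,\bar\eta$.

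Because an infinitesimal variation obeys the Leibniz and chain rules, this identity lifts verbatim to any functional built from these fields: $\delta^{(\otimes)}S=\sum_{a=1}^D\delta^{(a)}S$ for an arbitrary $S[\vp,\bvp,\eta,\bar\eta]$. Specializing to $S=S^{\kin}$ and $S=F$ gives the two displayed equations — and since the argument never uses the explicit form of $S^{\kin}$ or $F$ beyond polynomiality in the fields, the same conclusion holds for a general functional, which is what makes the second assertion meaningful. The implication ``$\delta^{(a)}S=0$ for all $a$ $\Rightarrow$ $\delta^{(\otimes)}S=0$'' is then immediate.

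For the converse I would exploit that $U(N_1),\dots,U(N_D)$ sit inside $\mathcal U_\otimes^{N_D}$ as mutually commuting subgroups acting on disjoint strand labels, so that $\delta^{(a)}S$ depends only on the generator $B^{(a)}$ and is unaffected by the $B^{(b)}$ with $b\neq a$. Invariance $\delta^{(\otimes)}S=0$ means $\sum_{a=1}^D\delta^{(a)}S=0$ for \emph{every} choice of $(B^{(1)},\dots,B^{(D)})$; taking $B^{(b)}=0$ for all $b\neq a$ isolates $\delta^{(a)}S=0$, and letting $a$ range over $1,\dots,D$ finishes the proof. Equivalently, the Lie algebra of $\mathcal U_\otimes^{N_D}$ is the direct sum $\bigoplus_{a}\mathfrak u(N_a)$, and the linear map $S\mapsto\delta^{(\otimes)}S$ vanishes on the whole sum iff it vanishes on each summand. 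The computation is routine; the only point needing care is the bookkeeping in the expansion above — verifying that every cross-term is genuinely $O(B^2)$ — so I do not anticipate a real obstacle.
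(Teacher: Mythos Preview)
Your argument is correct. The paper states this proposition without proof, evidently regarding it as an elementary observation about infinitesimal variations; your write-up supplies the routine verification that the paper omits. The key points --- that the first-order piece of $U^{(1)}\otimes\cdots\otimes U^{(D)}$ acting on a single field is additive over strands because cross-terms $B^{(a)}B^{(b)}$ are second order, that this additivity propagates to functionals via the Leibniz rule, and that the converse follows by specializing the generators to a single strand --- are exactly what is needed, and there is no gap.
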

We assume that $N_i=N,\,\,i=1,2 \cdots D$, and we  take the interaction terms such that there are invariant under the transformation $U^{(a)}$ i.e. 
$
\delta^{(a)} S^{\inter}=0.
$ This is the new input in TGFT's: the $U(N_a)$ tensor invariance must be the one defining the interaction \cite{Rivasseau:2013uca}.    
 Note that the measure $d\vp d\bvp$ is also invariant under 
$U^{(a)}$. The variation of the partition function can be performed for $a=1$  and the results for all value of $a\in\{1,2,\cdots,D\}$ may be deduced using  proposition \eqref{propcol}. We write
\bea
\frac{\delta^{(1)}\ln \mathcal Z[\eta,\bar\eta]}{\delta B_{p_mp_n}}
&=&\frac{1}{\mathcal Z[\eta,\bar\eta]}\int\,d\vp d\bvp\,\Big\{i\sum_{p_2,\cdots,p_D}\Big(M_{n\,2\cdots D}\vp_{n\,2\cdots D}\bvp_{ m\,2\cdots D }-M_{m\,2\cdots D}\bvp_{m\,2\cdots D}\vp_{ n\,2\cdots D}\Big)
\cr
&+&
i\sum_{p_2,\cdots,p_D}\Big(\bar\eta_{m\,2\cdots D} \vp_{n\,2\cdots D}-\bvp_{m\,2\cdots D}\eta_{n\,2\cdots D}\Big)\Big\}e^{-S[\vp,\bvp]+F[\vp,\bvp;\eta,\bar\eta]}=0.
\eea
Now take $\partial_{\bar\eta}\partial_{\eta}$ of the above expression,
we get only the connected components of the correlation functions as
\bea
&&\sum_{[p]}\Big(M_{m\,2\cdots D}-M_{n\,2\cdots D}\Big)\langle\Big[\frac{\partial (\bar\eta\vp)}{\partial\bar\eta}\frac{\partial (\bvp\eta)}{\partial\eta}\Big]\vp_{n\,2\cdots D}\bvp_{ m\,2\cdots D }
\rangle_c\cr
&&=\sum_{[p]}\langle\frac{\partial( \bar{\eta}_{m\,2\cdots D}\vp_{n\,2\cdots D})}{\partial \bar\eta}\Big[\frac{\partial (\bvp\eta)}{\partial \eta}\Big]-\frac{\partial(\bvp_{m\,2\cdots D}\eta_{n\,2\cdots D})}{\partial\eta}\Big[\frac{\partial( \bar{\eta}\vp)}{\partial\bar\eta}\Big]\rangle_c,
\eea
which can be simply written as
\bea\label{gros}
\sum_{[p]}\Big(M_m-M_n\Big)\langle\Big[\frac{\partial (\bar\eta\vp)}{\partial\bar\eta}\frac{\partial (\bvp\eta)}{\partial\eta}\Big]\vp_{n}\bvp_{ m}
\rangle_c\cr
=\sum_{[p]}\langle\frac{\partial(\bar{\eta}_m\vp_n)}{\partial\bar\eta}\frac{\partial (\bvp\eta)}{\partial \eta}\rangle_c-\sum_{[p]}\langle\frac{\partial (\bvp_m\eta_n)}{\partial\eta}\frac{\partial( \bar{\eta}\vp)}{\partial\bar\eta}\rangle_c.
\eea
Note that the equation \eqref{gros} is valid for all positions indices $a=1,2,\cdots,D$. Let us also remark   that for $m=n$ the left hand side (lhs) of the equation \eqref{gros}  vanishes. In the double derivative $\partial_{\bar\eta} \partial_\eta$, we fix the indices such that $\bar\eta_{[\alpha]} \eta_{[\beta]}$. Then comes  the following proposition:
\begin{proposition}
 For  index $a=1$ (corresponding to  $U^{(1)}$),   we get the Ward-Takahashi identity
\bea
&&\sum_{p_2,\cdots,p_D}\big(M_{m2\cdots D}-M_{n2\cdots D}\big)\langle \vp_{[\alpha]}\bvp_{[\beta]}\vp_{n2\cdots D}\bvp_{ m2\cdots D}\rangle_c\cr
&&=\delta_{m\alpha_1}\langle
\vp_{n\alpha_2\cdots\alpha_D}\bvp_{\beta_1\cdots\beta_D}
\rangle_c-\delta_{n\beta_1}\langle\bvp_{m\beta_2\cdots\beta_D}
\vp_{\alpha_1\cdots\alpha_D}\rangle_c,
\eea
which can be re-expressed for arbitrary position $a$ taking any value in $\{1,2,\cdots, D\}$ as
\bea\label{Ward1}
\big(M_m-M_n\big)\langle [\vp_m\bvp_n]\vp_{n}\bvp_{ m}\rangle_c=\langle
\vp_n\bvp_n\rangle_c-\langle\bvp_m\vp_m\rangle_c,\quad [\vp_m\bvp_n]=\sum_{p_2,\cdots,p_D} \vp_{n2\cdots D}\bvp_{ m2\cdots D}.
\eea
\end{proposition}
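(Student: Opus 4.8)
The plan is to read the statement straight off equation \eqref{gros}, which has already been established from the $U(N_1)$-invariance of $\mathcal Z[\eta,\bar\eta]$; the only work left is to carry out the two functional derivatives $\partial_{\bar\eta}\partial_{\eta}$ in \eqref{gros} for a definite choice of external indices. Concretely, I would let the left derivative act on the component $\bar\eta_{[\alpha]}$ and the right one on $\eta_{[\beta]}$, so that in \eqref{gros} one simply replaces $\partial(\bar\eta\vp)/\partial\bar\eta\mapsto\vp_{[\alpha]}$ and $\partial(\bvp\eta)/\partial\eta\mapsto\bvp_{[\beta]}$. Recalling that the sum $\sum_{[p]}$ in \eqref{gros} runs over the internal strands $p_2,\cdots,p_D$ (position $1$ carrying the fixed momenta $m$ and $n$, exactly as in the explicit form displayed just before \eqref{gros}), the left-hand side becomes $\sum_{p_2,\cdots,p_D}(M_{m2\cdots D}-M_{n2\cdots D})\langle\vp_{[\alpha]}\bvp_{[\beta]}\vp_{n2\cdots D}\bvp_{m2\cdots D}\rangle_c$, which is the left-hand side of the asserted identity.

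Next I would evaluate the two source terms on the right of \eqref{gros}. For the first one, $\partial/\partial\bar\eta_{[\alpha]}$ hits the partial bilinear $\sum_{p_2,\cdots,p_D}\bar\eta_{m2\cdots D}\vp_{n2\cdots D}$ and produces $\delta_{m\alpha_1}\prod_{i\geq2}\delta_{p_i\alpha_i}\,\vp_{n2\cdots D}$; the internal sum over $p_2,\cdots,p_D$ then collapses the trailing deltas and leaves $\delta_{m\alpha_1}\vp_{n\alpha_2\cdots\alpha_D}$, which paired with $\bvp_{[\beta]}$ gives the first term $\delta_{m\alpha_1}\langle\vp_{n\alpha_2\cdots\alpha_D}\bvp_{\beta_1\cdots\beta_D}\rangle_c$. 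Symmetrically, $\partial/\partial\eta_{[\beta]}$ acting on $\sum_{p_2,\cdots,p_D}\bvp_{m2\cdots D}\eta_{n2\cdots D}$ yields, after the internal sum, $\delta_{n\beta_1}\bvp_{m\beta_2\cdots\beta_D}$, and together with $\vp_{[\alpha]}$ this produces $-\delta_{n\beta_1}\langle\bvp_{m\beta_2\cdots\beta_D}\vp_{\alpha_1\cdots\alpha_D}\rangle_c$. Collecting the pieces gives precisely the first displayed equation of the proposition.

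The compact reformulation then follows by specializing the free indices ($\alpha_1=m$, $\beta_1=n$, and all remaining $\alpha_i=\beta_i$) and invoking the diagonality of the connected two-point function — momentum is conserved strand by strand on $\langle\vp\bvp\rangle_c$ — so that each term on the right collapses to a single two-point function, which one writes in shorthand as $\langle\vp_n\bvp_n\rangle_c$ and $\langle\bvp_m\vp_m\rangle_c$, while the insertion on the left is abbreviated by $[\vp_m\bvp_n]=\sum_{p_2,\cdots,p_D}\vp_{n2\cdots D}\bvp_{m2\cdots D}$; the fact that the identity holds for an arbitrary position $a\in\{1,\cdots,D\}$, rather than only $a=1$, is exactly the remark already recorded after \eqref{gros} (equivalently, Proposition \ref{propcol}), since the same computation goes through verbatim after relabelling the strand singled out by the unitary transformation. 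The one genuinely delicate point in the whole argument is the index bookkeeping in the second paragraph: one must keep straight which strand indices are summed and which are fixed when differentiating the "partial" source terms $\bar\eta_{m2\cdots D}$ and $\eta_{n2\cdots D}$, so that the internal sums over $p_2,\cdots,p_D$ collapse to exactly the single Kronecker deltas $\delta_{m\alpha_1}$ and $\delta_{n\beta_1}$ with no spurious leftover factors, and so that the distinction between the $\langle\vp\bvp\rangle_c$-type and $\langle\bvp\vp\rangle_c$-type two-point functions on the right is tracked correctly.
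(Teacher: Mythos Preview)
Your proposal is correct and follows exactly the route the paper takes: the proposition is simply equation \eqref{gros} specialised to the fixed external indices $\bar\eta_{[\alpha]},\eta_{[\beta]}$ (as announced in the sentence just before the proposition), and the compact form then follows by the further specialization you describe together with Proposition~\ref{propcol}. Your write-up in fact spells out the index bookkeeping more carefully than the paper itself, which essentially states the result without further comment.
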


We  emphasize that the position taken by the indices $m$ and $n$ in the relation \eqref{Ward1} are the position of the momentum index $p_a$ used  in the transformation $U^{(a)}$. In conclusion, there are exactly $D$ Ward-Takahashi identities for the 
rank $D$ TGFT's associated with this type
of invariance. Note that the Ward-Takahashi identities for Boulatov model can be found  in reference \cite{BenGeloun:2011xu}. The result obtained therein radically differs from the present identities found in \eqref{Ward1}.      Furthermore, we mention that  we are not considering the
TGFT with gauge invariance condition on the fields like in the works \cite{Carrozza:2013wda}\cite{Carrozza:2012uv}. We consider here  the simplest the  TGFT as treated in \cite{BenGeloun:2011rc}\cite{Geloun:2012fq}. Most of  the result of this work might be
extended to this different framework with not much work since only the propagator will be modified. Thus, one expects  similar
Ward identities in that gauge invariant framework .



\section{Two-point functions of rank $3$ TGFT}\label{sec3}
In this section we consider the just renormalizable rank $3$ TGFT on compact $U(1)$ group, addressed firstly in \cite{BenGeloun:2012pu}. 
The  rank $3$ tensor field is defined by $\varphi:U(1)^3\longrightarrow\mathbb{C}$,  and we expand in Fourier modes as
\beq
\varphi(g_1, g_2, g_3)=\sum_{p_j\in\mathbb{Z}}\varphi_{123}e^{ip_1\theta_1}e^{ip_2\theta_2}e^{ip_3\theta_3},\quad \theta_i\in [0,2\pi).
\eeq
We write as usual $ \varphi_{123}:= \varphi_{p_1p_2p_3}$.
The renormalizable $3D$ tensor model is defined by the action $S_{3D}$, in which the  kinetic term take's the form
\beq
S^{\kin}_{3D}=\sum_{[p]}  \bar{\varphi}_{123}\,C^{-1}_{123}\,\varphi_{123},
\eeq
where  $C_{123}$ is the propagator.  
We  write  the resulting action  for the bare quantities which involves the bare mass $m_{bar}$ and  the three  wave functions renormalizations
 $Z_{\rho=1,2,3}$, each of which is associated with a strand index  $a=1,2,3$.  The field strength
can be modified as follows:
\beq
\varphi \longrightarrow \left(Z_1Z_2Z_3\right)^{\frac16} \varphi=Z^{1/2}\vp,\quad Z_{\rho} = 1 -   \partial_{b_{\rho}} 
\Gamma_{b_1b_2b_3} \Big|_{b_{1,2,3} = 0},
\eeq
where $\Gamma_{b_1b_2b_3}$ is  the self-energy or one particle irreducible (1PI) two-point functions. Then,   the renormalized propagator takes the form
\beq
C_{abc}=Z^{-1}(|a|+|b|+|c|+m^2)^{-1},\quad a,b,c\in\mathbb{Z}.
\eeq
$m$ is the renormalized mass parameter. 
The interaction of the model is defined by the three contributions $V_1$, $V_2$, and $V_3$ expressed in momentum space as
\bea
S^{\inter}_{3D}&=&\lambda_1Z^2
\sum_{\stackrel{1,2,3}{1',2',3'}}\vp_{123}\bvp_{321'}\vp_{1'2'3'}
\bvp_{3'2'1}+\lambda_2 Z^2\sum_{\stackrel{1,2,3}{1',2',3'}}\vp_{123}\bvp_{32'1}\vp_{1'2'3'}
\bvp_{3'21'}
\cr
&+&\lambda_3 Z^2\sum_{\stackrel{1,2,3}{1',2'3'}}\vp_{123}\bvp_{3'21}\vp_{1'2'3'}
\bvp_{32'1'}
=\lambda_1V_1+\lambda_2V_2+\lambda_3V_3,
\eea
and  are represented in the figure \ref{fig:Vertex4}.

\begin{figure}[htbp]
\begin{center}
 $V_1$
\includegraphics[scale=0.10]{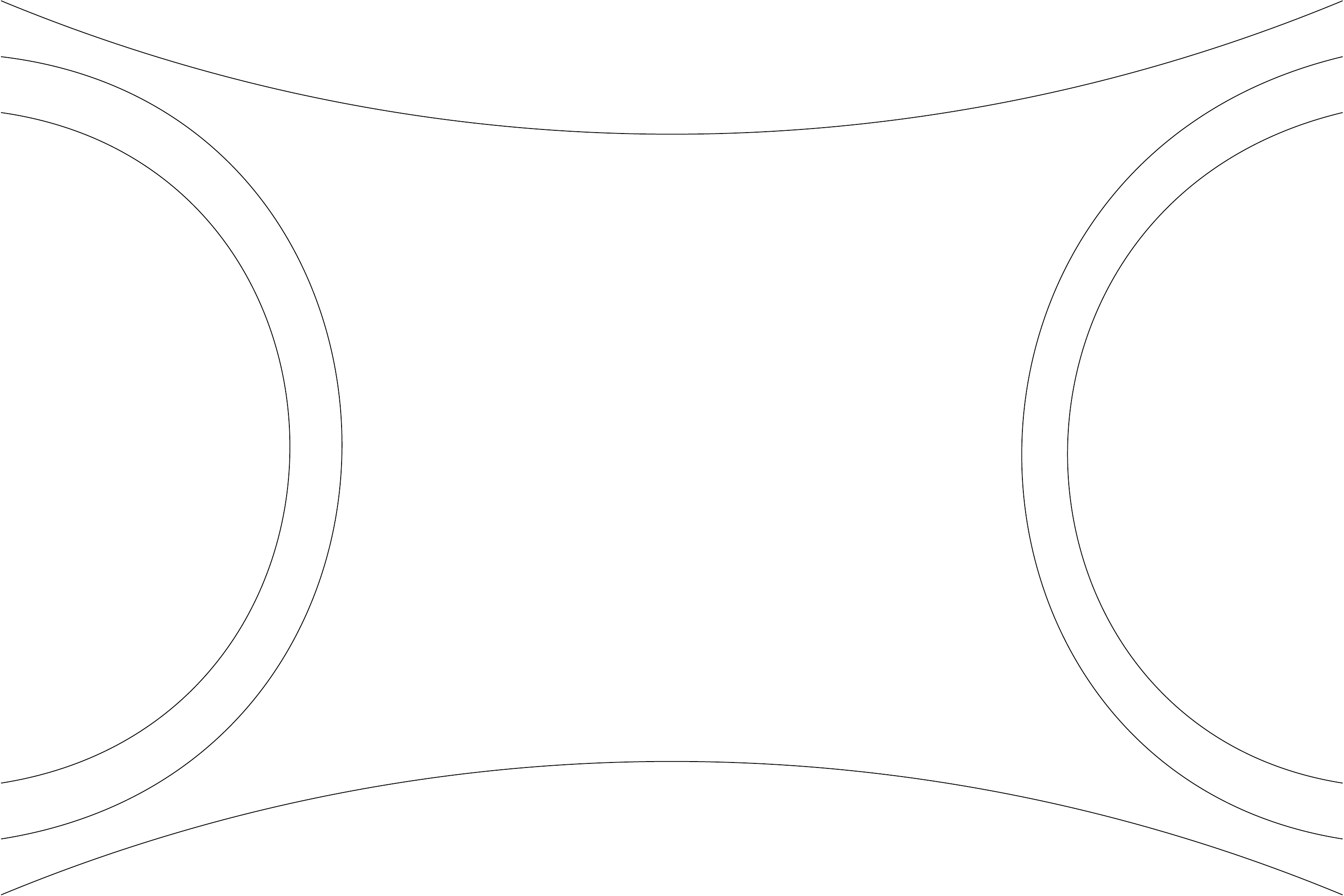}\hspace{0.5cm}
 $V_2$
\includegraphics[scale=0.10]{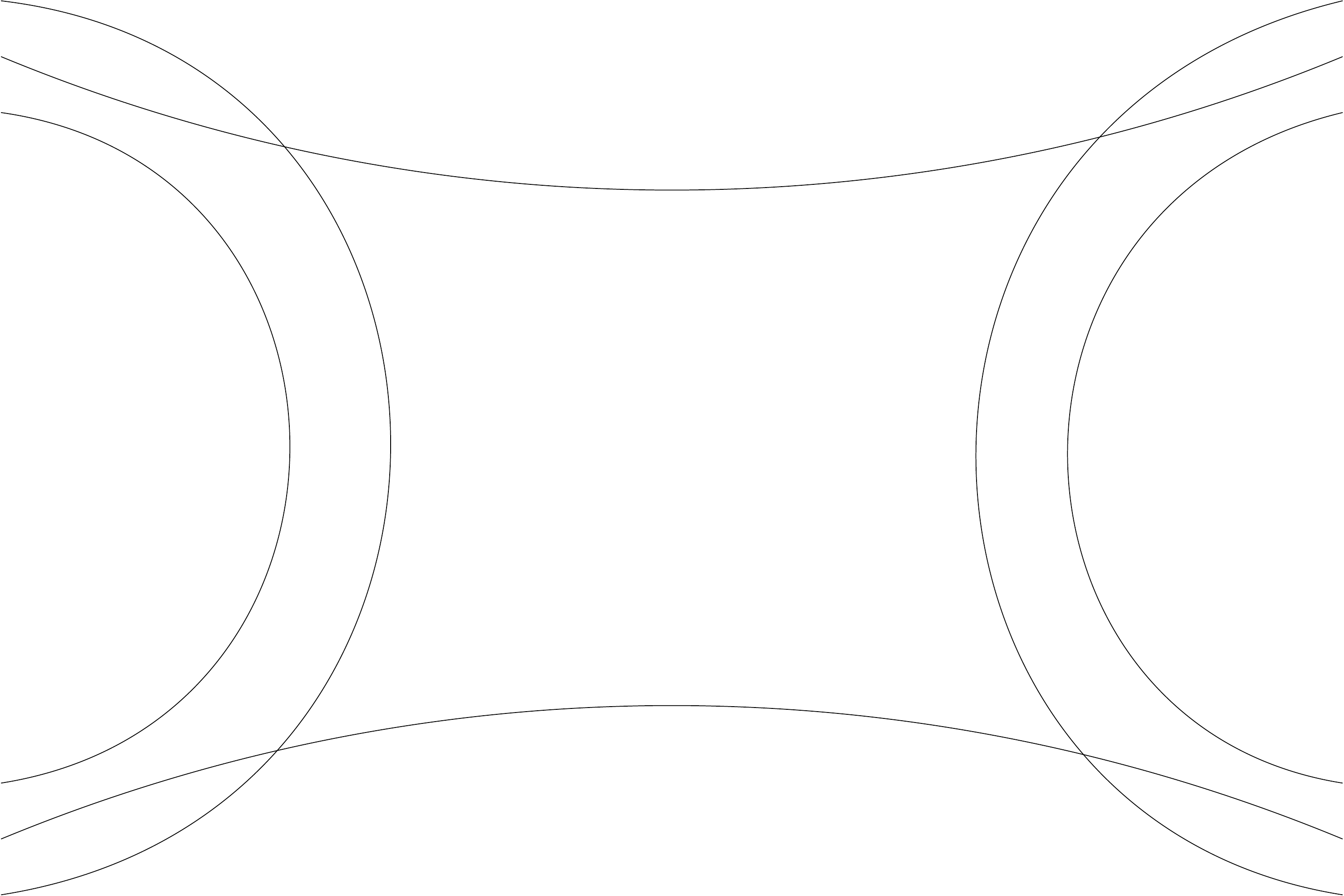}\hspace{0.5cm}
 $V_3$
\includegraphics[scale=0.10]{ward-45.pdf}\hspace{0.5cm}
\end{center}
 \caption{The vertices  of rank 3  tensor model}
  \label{fig:Vertex4} 
\end{figure}
\begin{figure}[htbp]
\begin{center}
\includegraphics[scale=0.25]{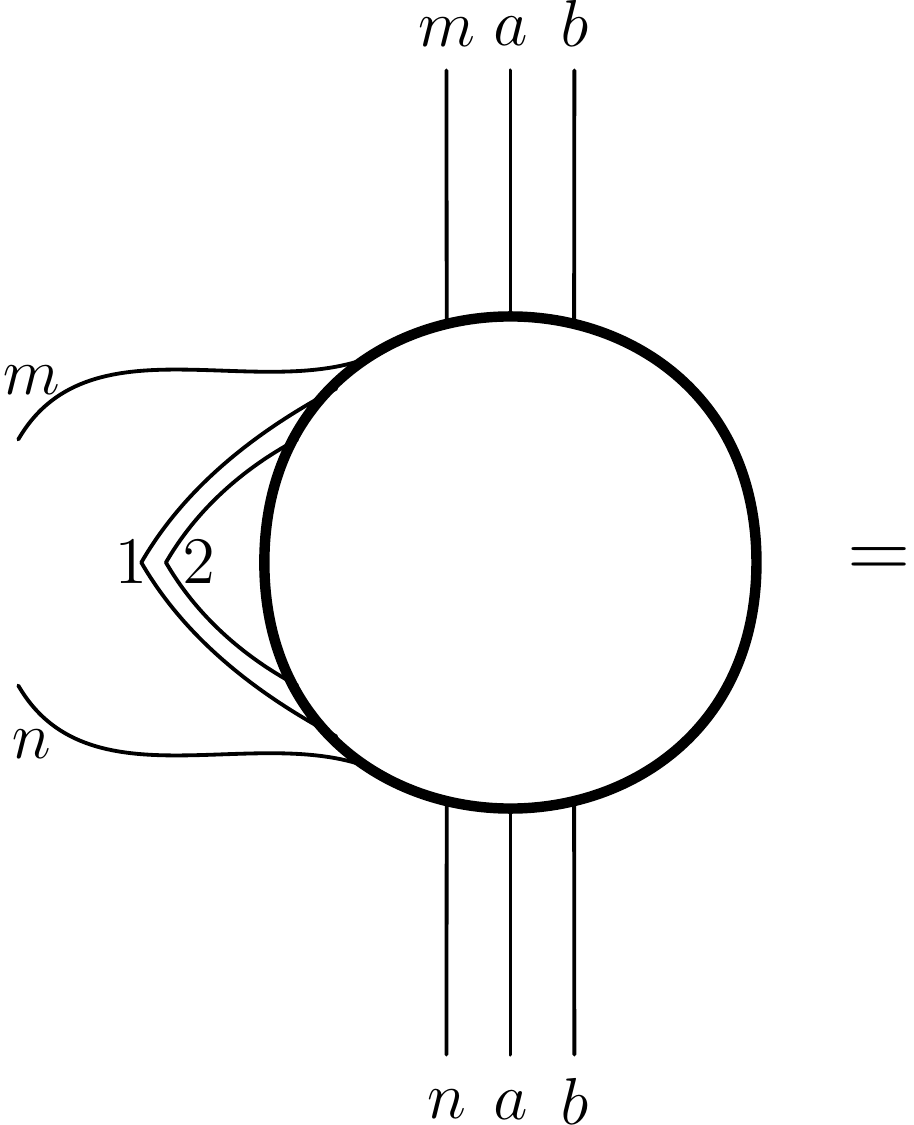}
\includegraphics[scale=0.25]{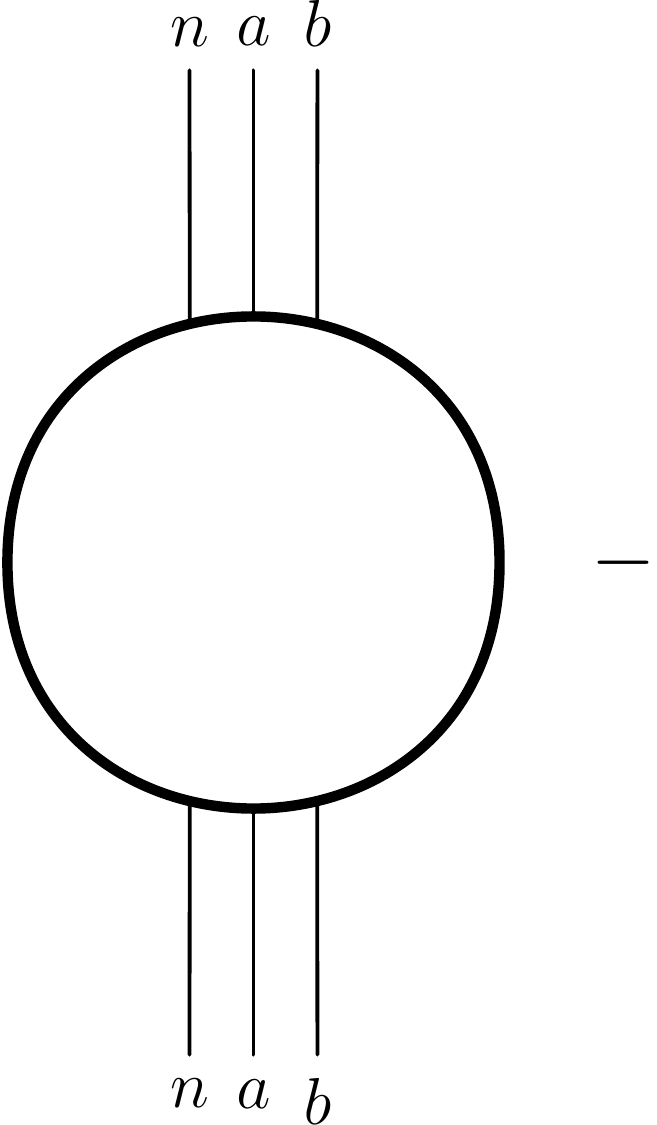}
\includegraphics[scale=0.25]{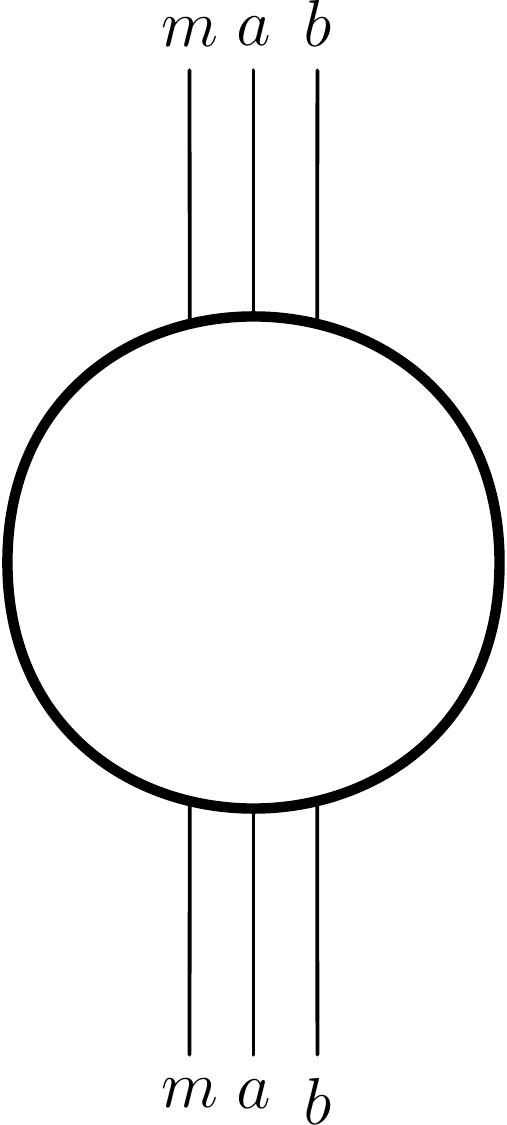}\hspace{1cm}\\
\includegraphics[scale=0.25]{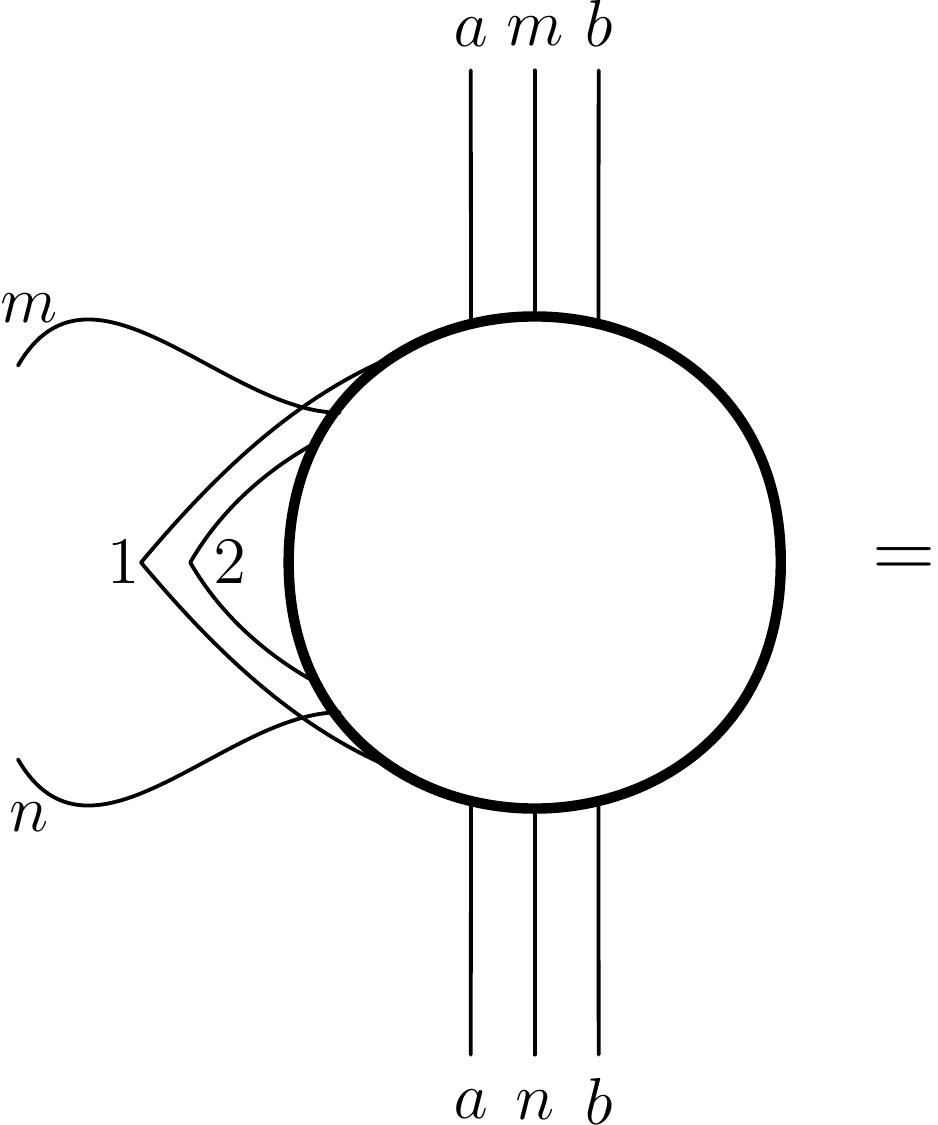}
\includegraphics[scale=0.25]{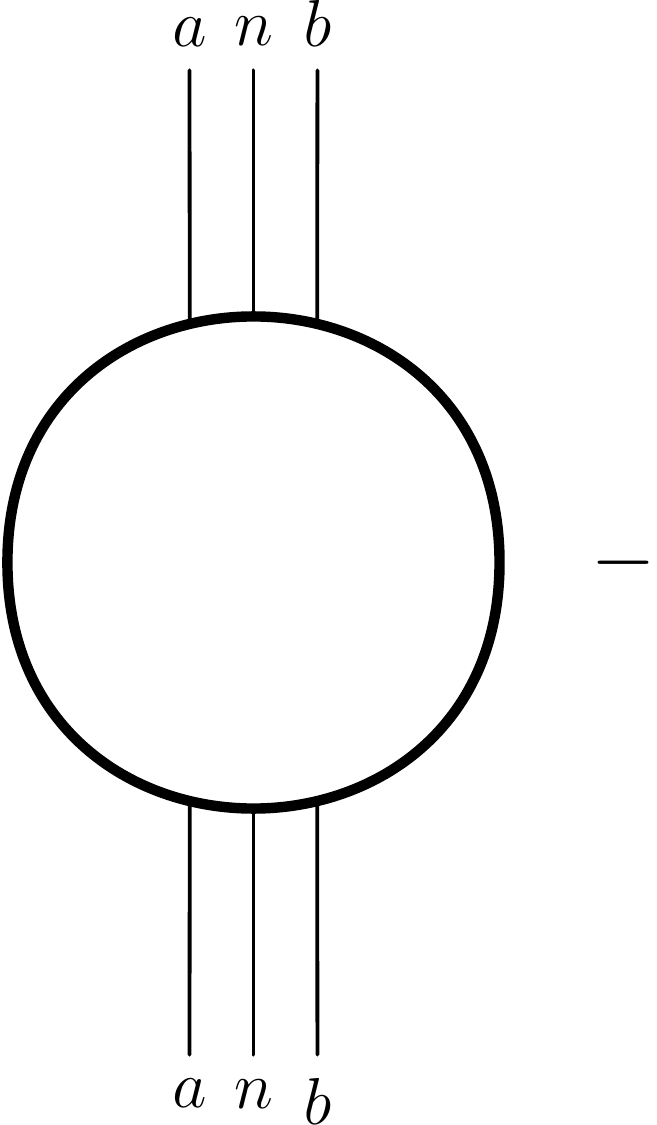}
\includegraphics[scale=0.25]{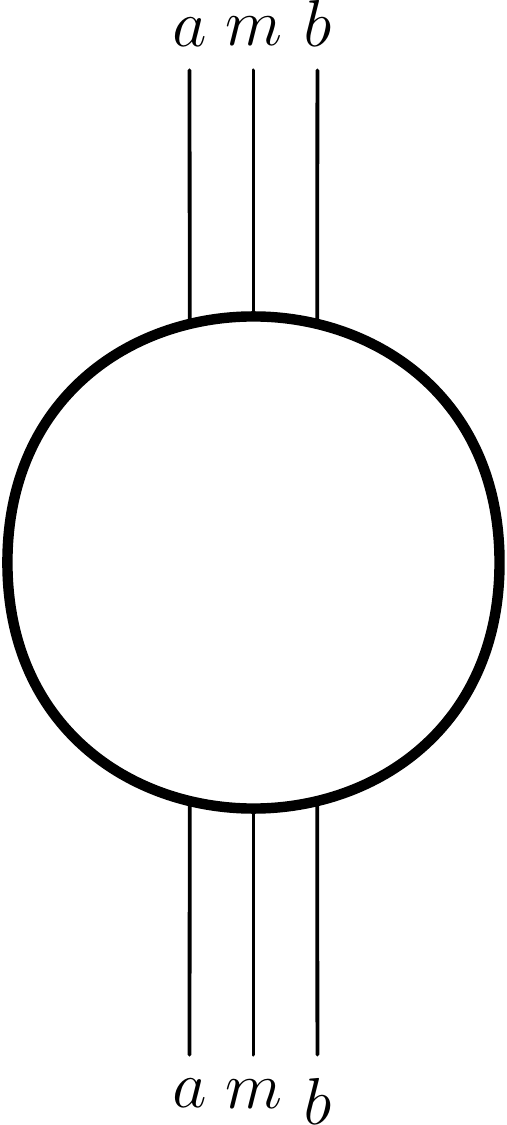}\\
\includegraphics[scale=0.25]{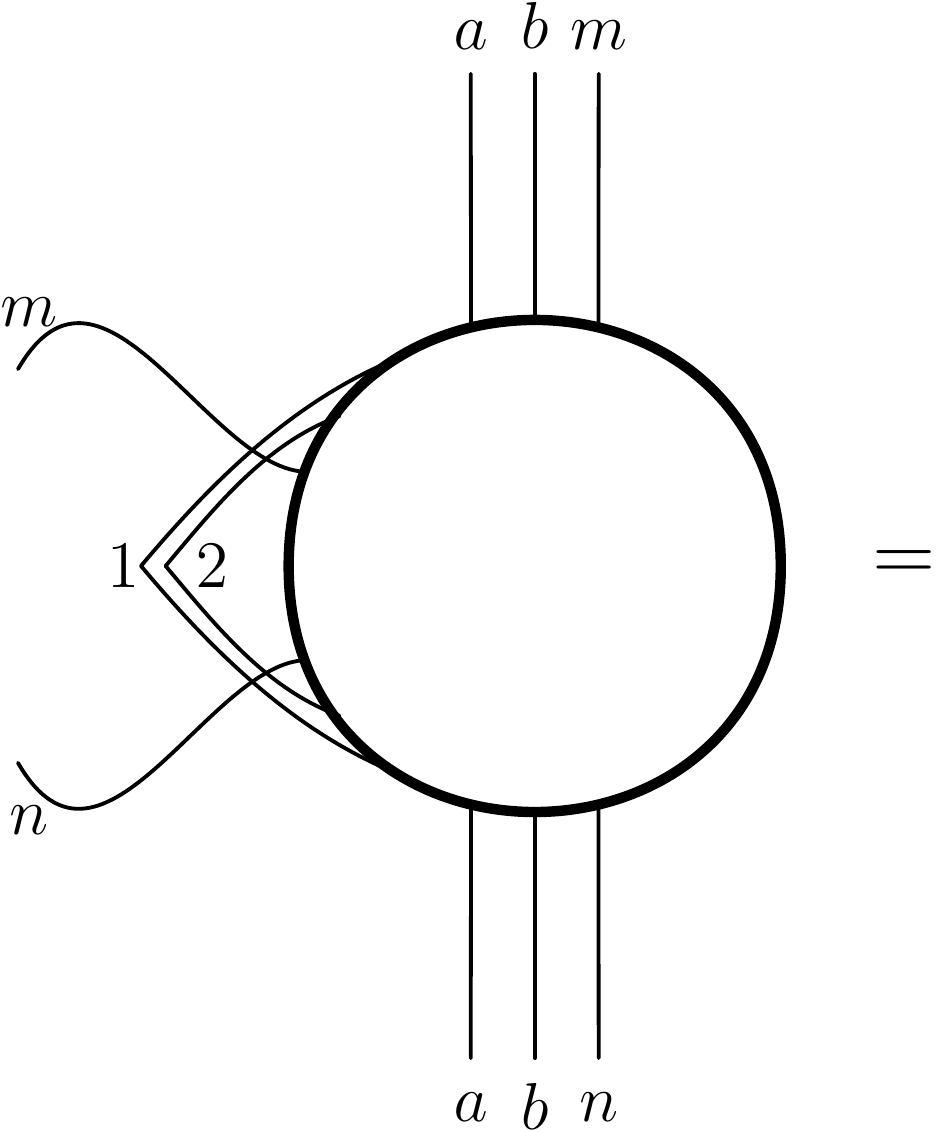}
\includegraphics[scale=0.25]{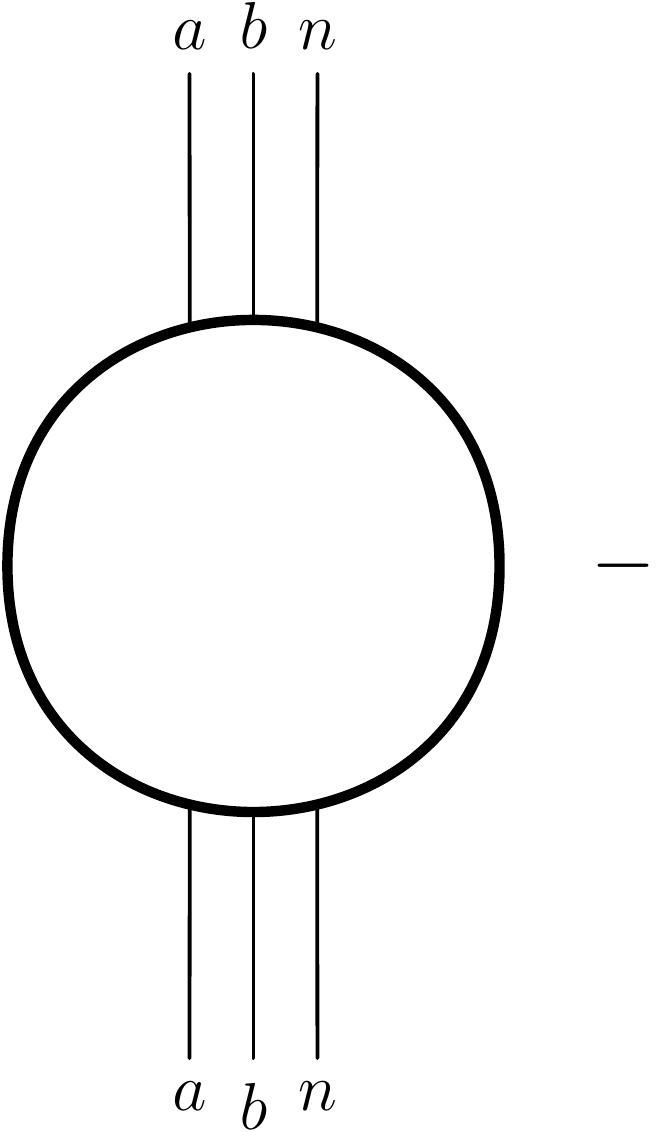}
\includegraphics[scale=0.25]{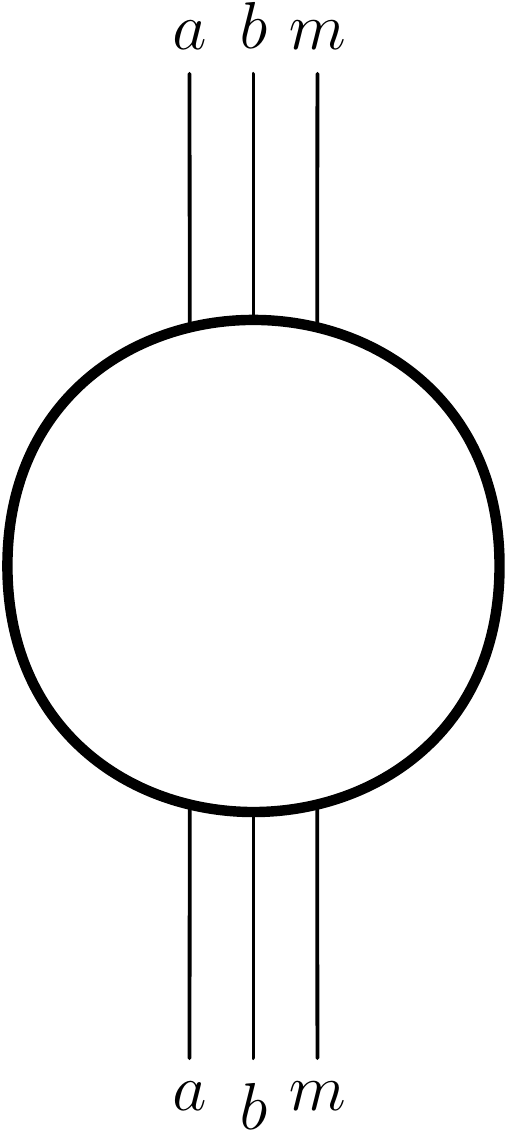}
\end{center}
 \caption{Ward-Takahashi identities}
  \label{fig:Ward} 
\end{figure}

The Ward-Takahashi identities \eqref{Ward1} 
 now find the form after reducing some constraints
\bea\label{Ward123}
\sum_{p_2,p_3}\big(M_{m23}-M_{n23}\big)\langle \vp_{m23}\bvp_{n23}\vp_{nab}\bvp_{ mab}\rangle_c=\langle
\vp_{nab}\bvp_{nab}\rangle_c-\langle\bvp_{mab}
\vp_{mab}\rangle_c\\
\label{Ward231}
\sum_{p_1,p_3}\big(M_{1m3}-M_{1n3}\big)\langle \vp_{1m3}\bvp_{1n3}\vp_{anb}\bvp_{ amb}\rangle_c=\langle
\vp_{anb}\bvp_{anb}\rangle_c-\langle\bvp_{amb}
\vp_{amb}\rangle_c
\\
\label{Ward312}
\sum_{p_1,p_2}\big(M_{12m}-M_{12n}\big)\langle \vp_{12m}\bvp_{12n}\vp_{abn}\bvp_{ abm}\rangle_c=\langle
\vp_{abn}\bvp_{abn}\rangle_c-\langle\bvp_{abm}
\vp_{abm}\rangle_c
\eea
with $M_{abc}=C_{abc}^{-1}$. Graphically the equations \eqref{Ward123}, \eqref{Ward231} and \eqref{Ward312} are given in figure \ref{fig:Ward}. Let  $G^{ins}_{[mn]ab}$ be  the two-point functions with insertion $(2,3)$ i.e.
\bea
G^{ins}_{[mn]ab}=\sum_{p_2,p_3}\langle \vp_{m23}\bvp_{n23}\vp_{nab}\bvp_{ mab}\rangle_c.
\eea
The rest of this section is devoted to find pertubatively, the exact value of  the  renormalizable two- and four-point  functions. We will use the Schwinger-Dyson equation, and  then combine it with Ward-Takahashi identities to yield the closed equation that satisfies the connected two- and four-point functions.   
The  Schwinger-Dyson  equation is represented graphically in figure  \ref{fig:Dyson}. In this figure the quantities $T^\rho_{abc}$ and 
$\Sigma^\rho_{abc}$ for $\rho=1,2,3,$ are given  in the figures \ref{fig:A} and \ref{fig:B}.
\begin{figure}[htbp]
\begin{center}
\includegraphics[scale=0.45]{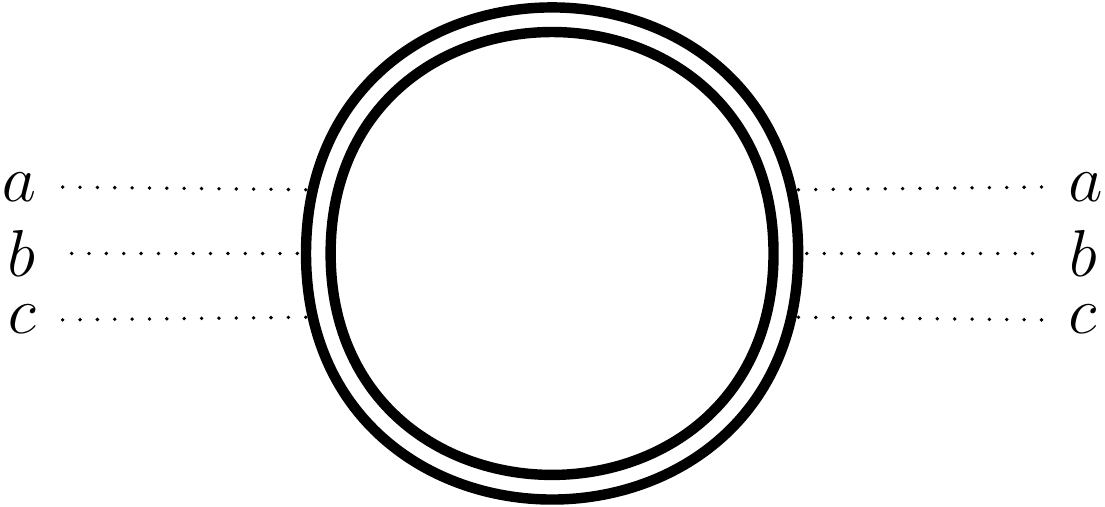}\hspace{0.2cm}
\put(10,30){$=\,\sum_{\rho=1}^3\Big(T^\rho_{abc}+\Sigma_{abc}^\rho\Big)$}
\put(-190,30){$\Gamma_{abc}=$}
\hspace{2.4cm}
\end{center}
 \caption{Schwinger-Dyson equation for   1PI two-point functions}
  \label{fig:Dyson} 
\end{figure}
\begin{figure}[htbp]
\begin{center}
\includegraphics[scale=0.32]{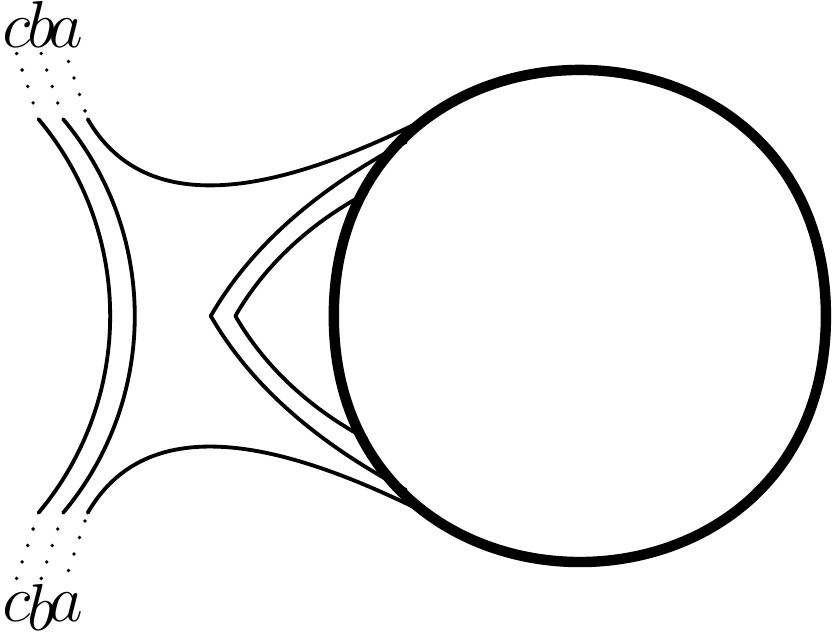}
\put(-110,26){$T^1_{abc}=$}
\hspace{2.4cm}
\includegraphics[scale=0.32]{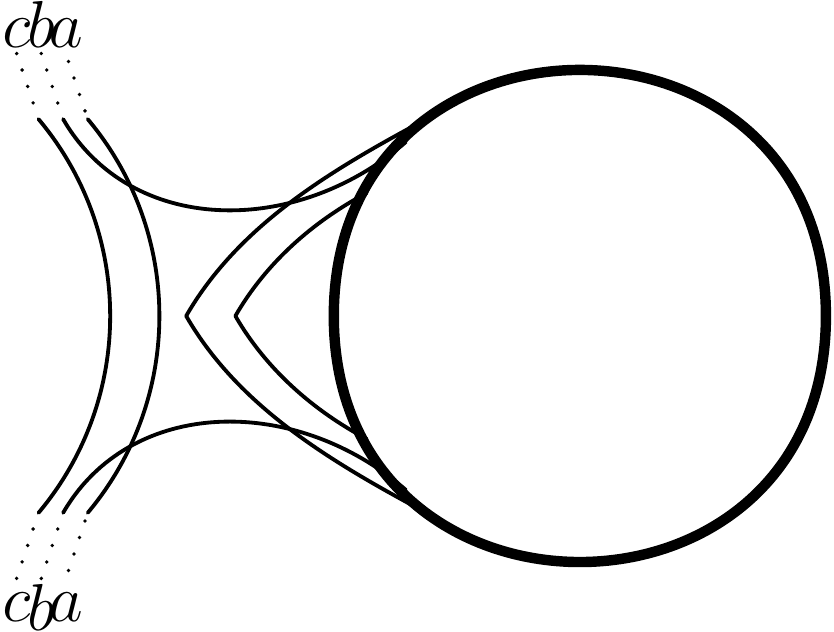}
\put(-110,26){$T^2_{abc}=$}
\hspace{2.4cm}
\includegraphics[scale=0.32]{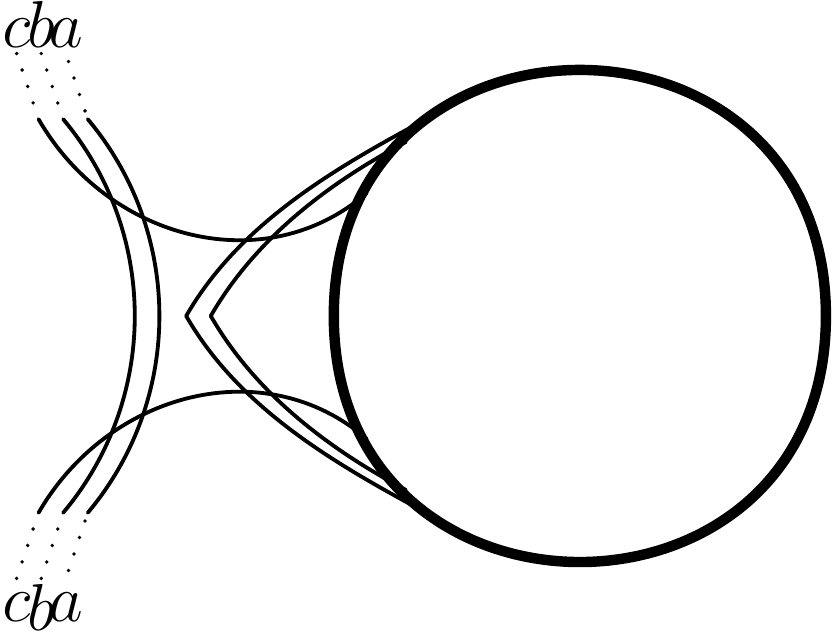}
\put(-110,26){$T^3_{abc}=$}
\end{center}
 \caption{}
  \label{fig:A} 
\end{figure}

\begin{figure}[htbp]
\begin{center}
\includegraphics[scale=0.32]{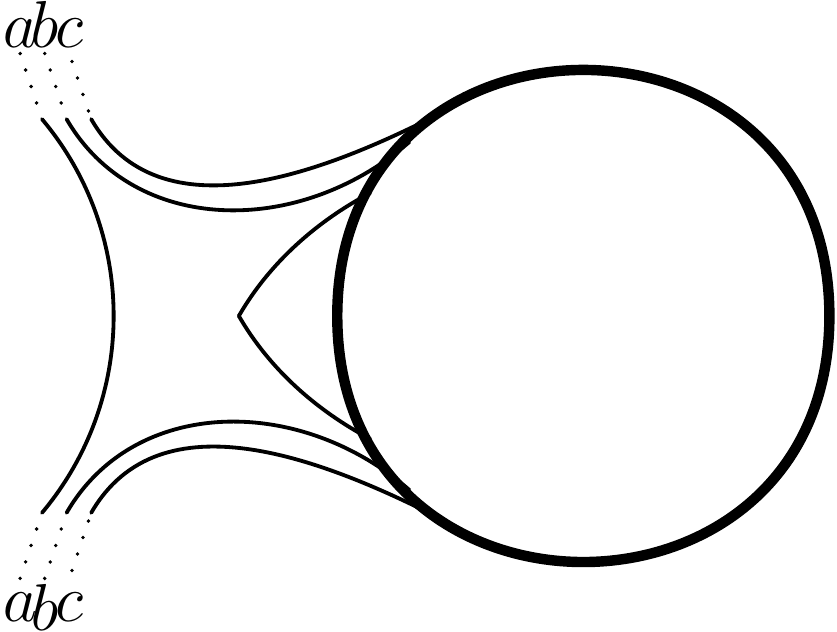}
\hspace{1.2cm}
\includegraphics[scale=0.32]{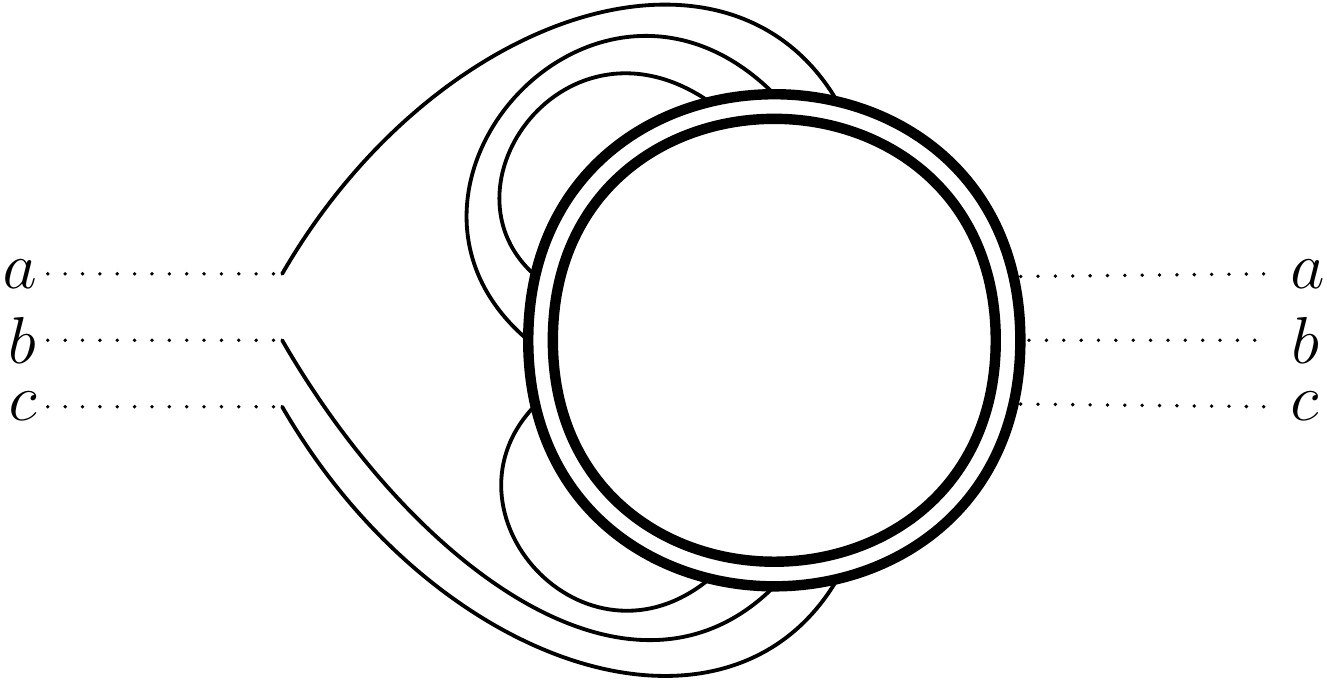}
\put(-280,25){$\Sigma^1_{abc}=$}
\put(-142,25){$+$}
\\
\includegraphics[scale=0.32]{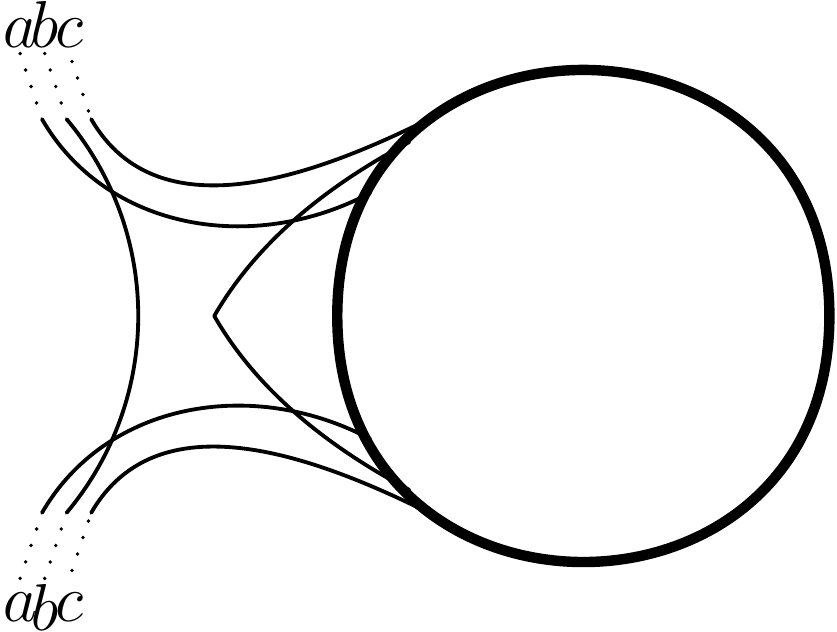}
\hspace{1.2cm}
\includegraphics[scale=0.32]{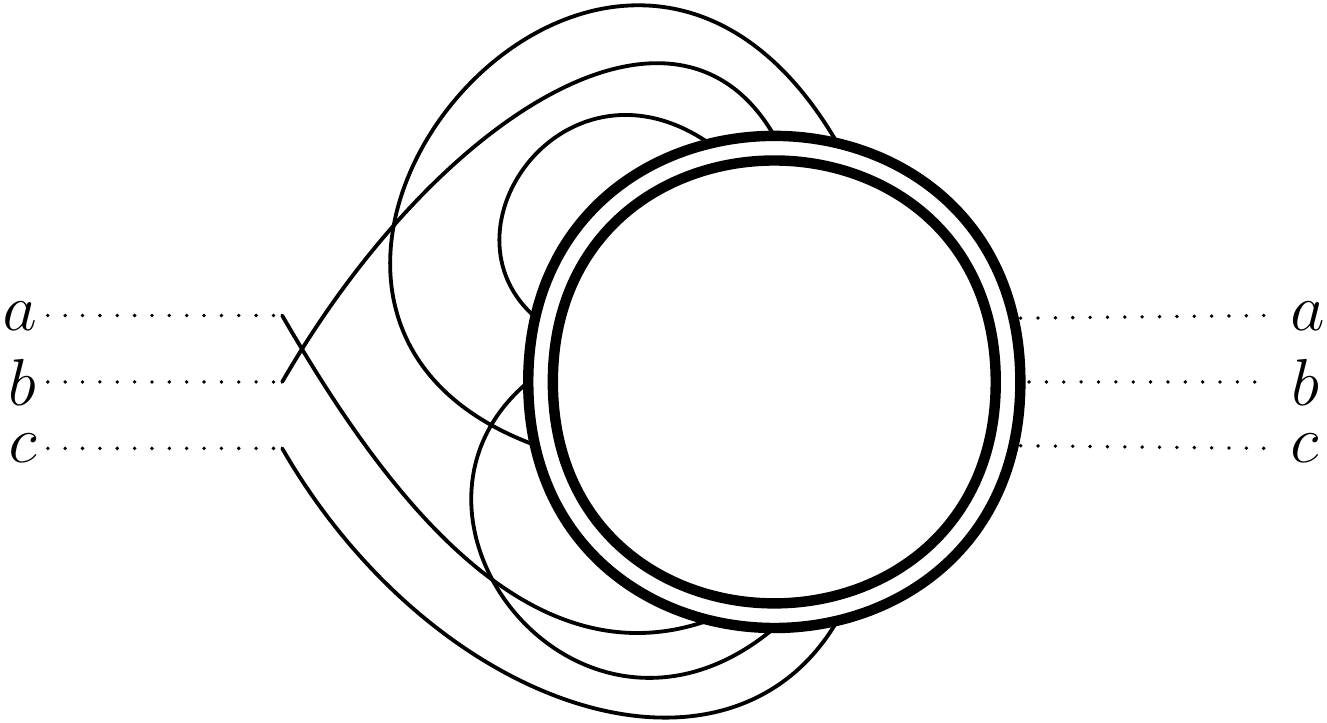}
\put(-280,25){$\Sigma^2_{abc}=$}
\put(-142,25){$+$}\\
\includegraphics[scale=0.32]{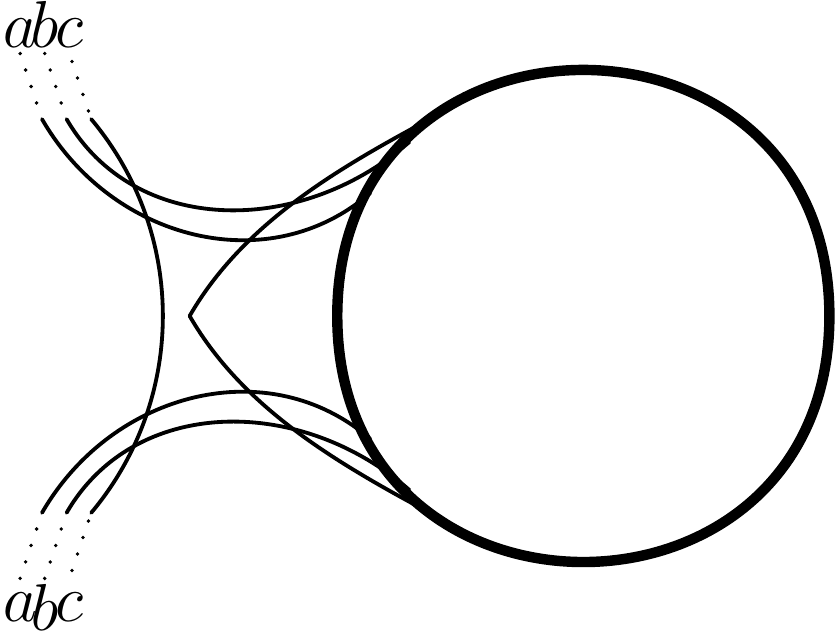}
\hspace{1.2cm}
\includegraphics[scale=0.32]{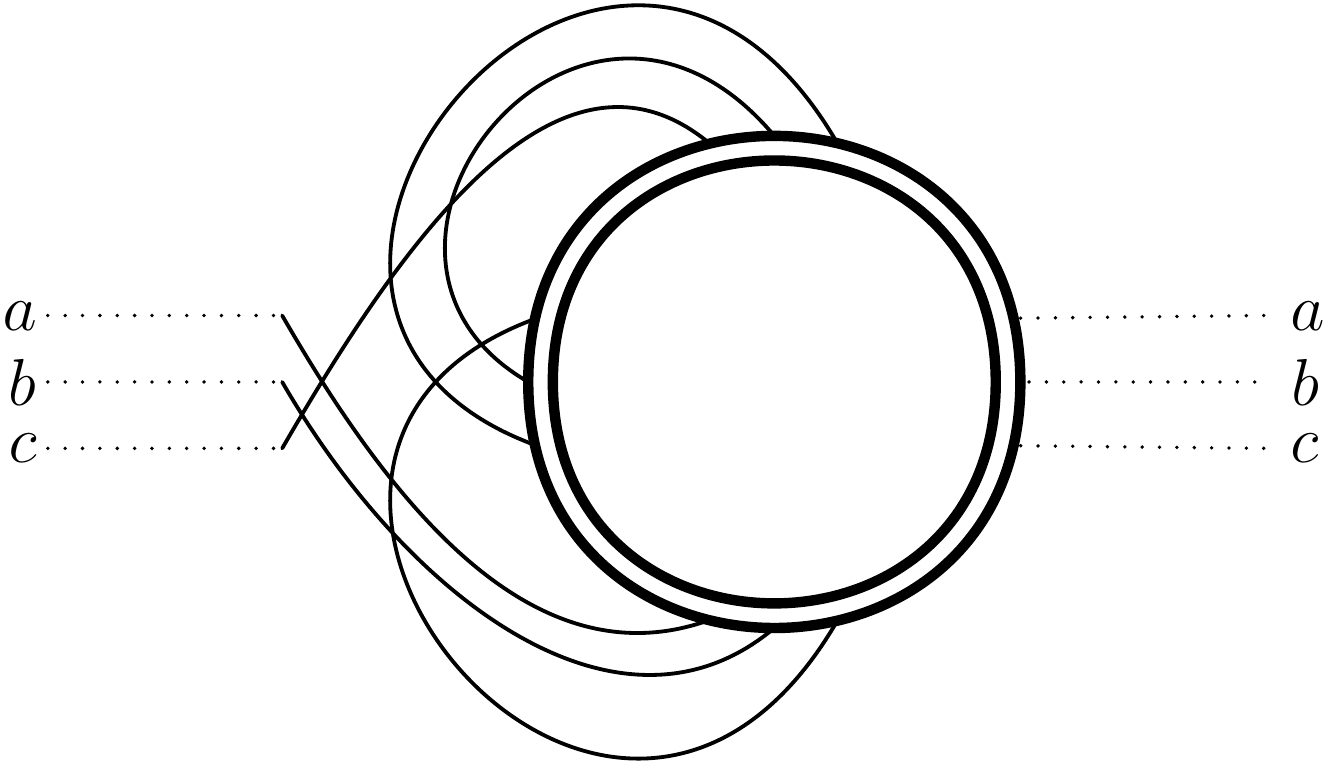}
\put(-280,25){$\Sigma^3_{abc}=$}
\put(-142,25){$+$}
\end{center}
 \caption{}
  \label{fig:B} 
\end{figure}
In the figure \ref{fig:Dyson}   the quantity $\Gamma_{abc}$ is the
 self-energy  or 1PI two-point functions that expresses as
\bea\label{rares}
\Gamma_{abc}=\sum_{\rho=1}^3\Gamma_{abc}^\rho,\quad \mbox{where}\quad \Gamma_{abc}^\rho=T^\rho_{abc}+\Sigma^\rho_{abc}.
\eea
Also, in figures  \ref{fig:Dyson}, \ref{fig:A} and \ref{fig:B}  a single circle represents a connected graph and a double circle stands for  a 1PI subgraph.
\begin{figure}[htbp]
\begin{center}
\includegraphics[scale=0.3]{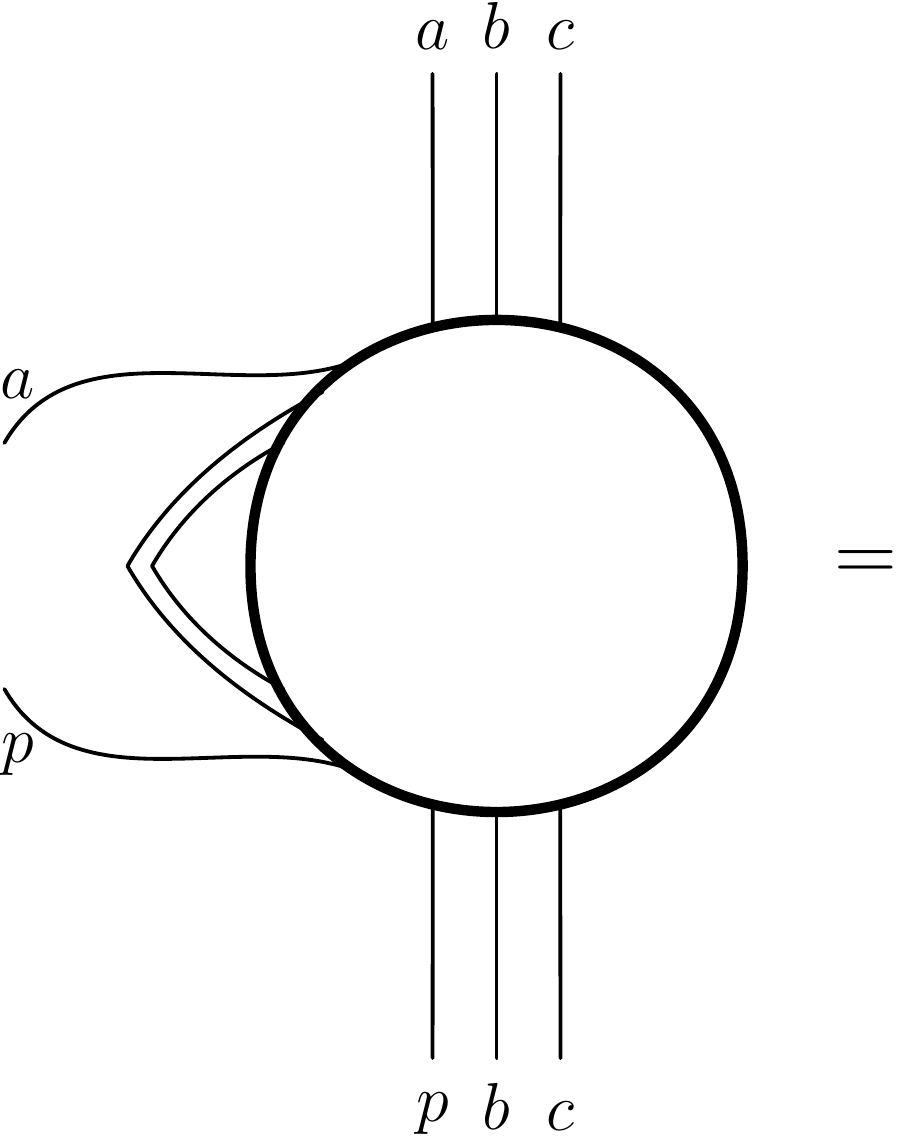}\hspace{0.2cm}
\includegraphics[scale=0.4]{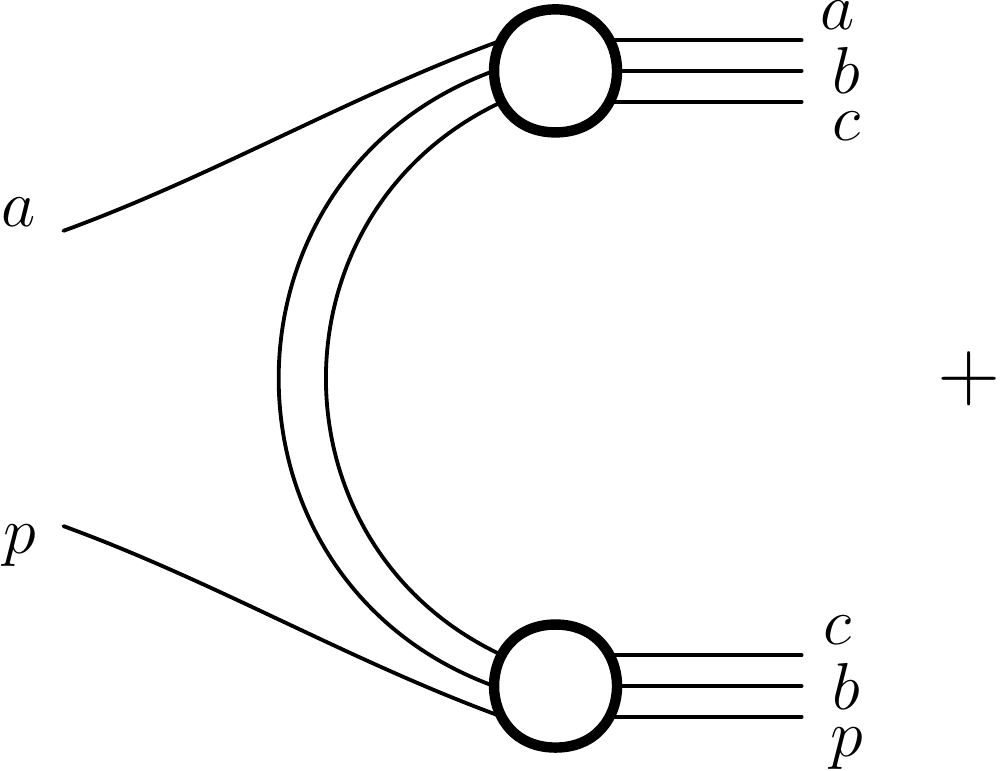}\hspace{0.2cm}
\includegraphics[scale=0.3]{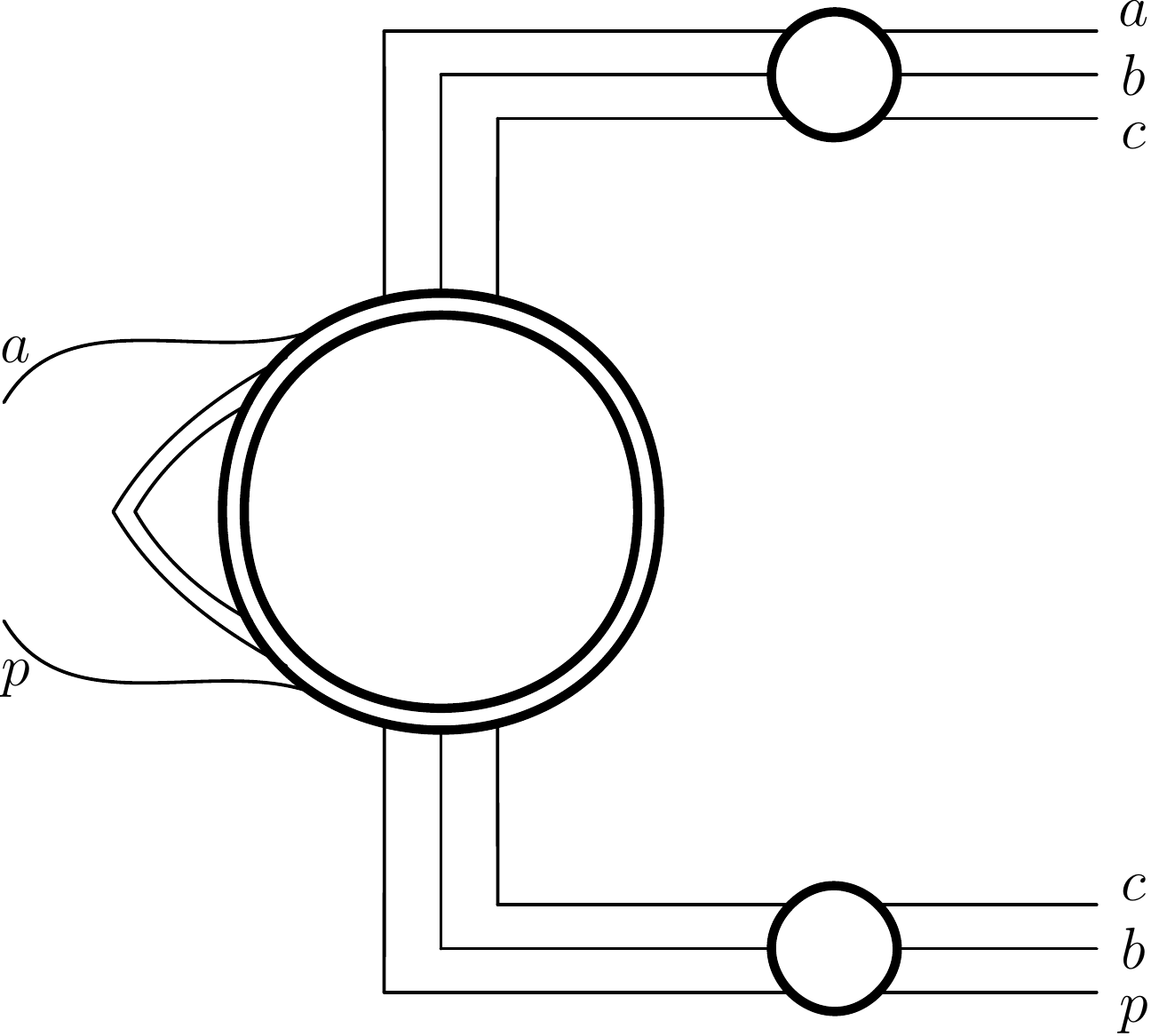}\hspace{0.2cm}
\end{center}
 \caption{Decomposition of the two-point functions with insertion: Case where $\rho=1$}
  \label{fig:Decomp1} 
\end{figure}
Let us consider now the decomposition given in figure \ref{fig:Decomp1}. The  lhs of this equation collects all connected graphs having the vertex insertion. Cutting this vertex out one gets a four-point functions, but the four-point functions can either be disconnected (first graph on the right hand side (rhs)), or connected (second graph on the rhs). The connected four-point functions must somewhere have a 1PI four-point functions as its core and then full connected two-point functions attached to its four legs. 
Now, multiplying this equation by $G_{abc}^{-1}$ means on the rhs to remove in the first graph the upper (bc)-branch attached to the insertion vertex and in the second graph the $(abc)$-branch attached to the 1PI four-point functions. If one now sums over $p$ and uses the fact that the newly created vertex is $\lambda_1 Z^2$ one gets precisely the function $\Sigma^\rho_{abc}$.
Then the equation \eqref{rares} can be written  explicitly using the decomposition of figure \ref{fig:Decomp1} as
\bea\label{1}
&&\Sigma^1_{abc}=Z^2\lambda_1\sum_p G_{abc}^{-1}G_{[ap]bc}^{ins}, \quad T^1_{abc}=Z^2\lambda_1\sum_{p,q}G_{apq}.
\eea
In the same manner we can obtain the decomposition of figure \ref{fig:Decomp23}, 
\begin{figure}[htbp]
\begin{center}
\includegraphics[scale=0.3]{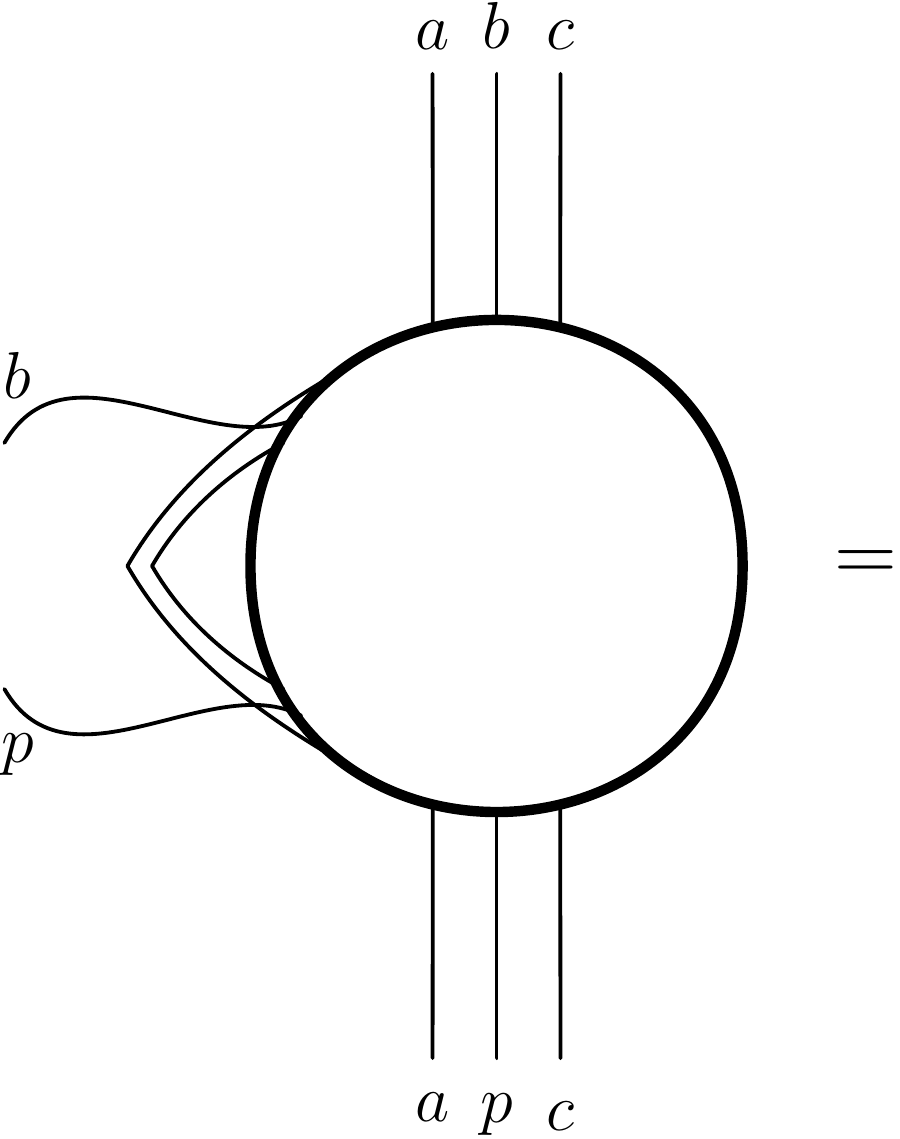}\hspace{0.2cm}
\includegraphics[scale=0.4]{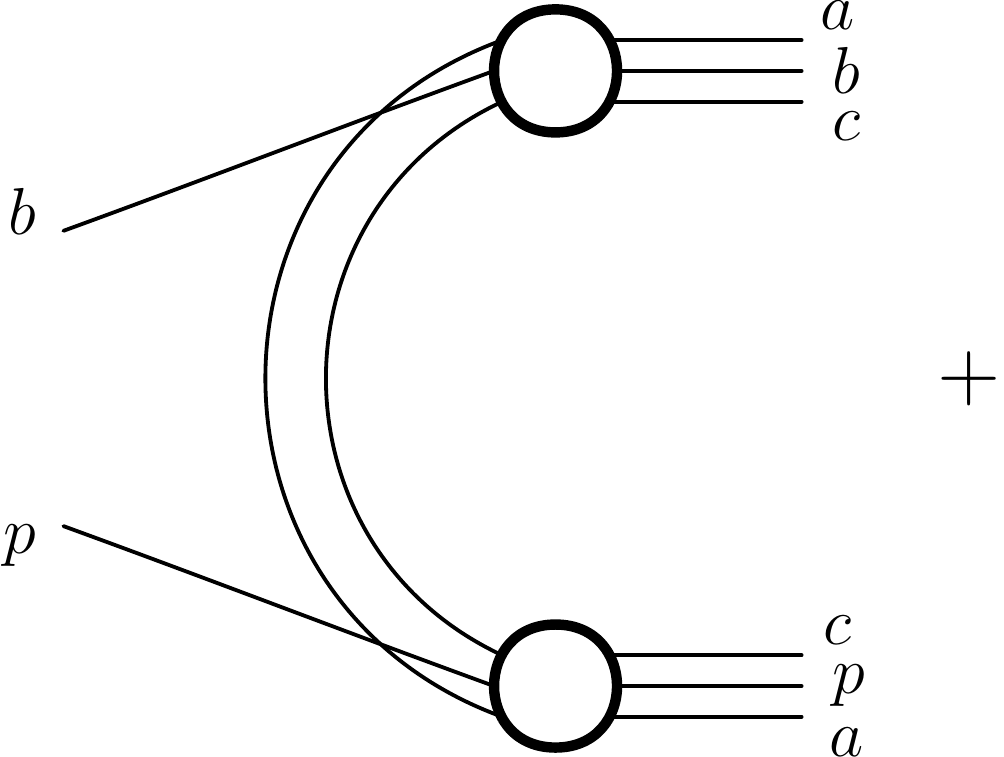}\hspace{0.2cm}
\includegraphics[scale=0.3]{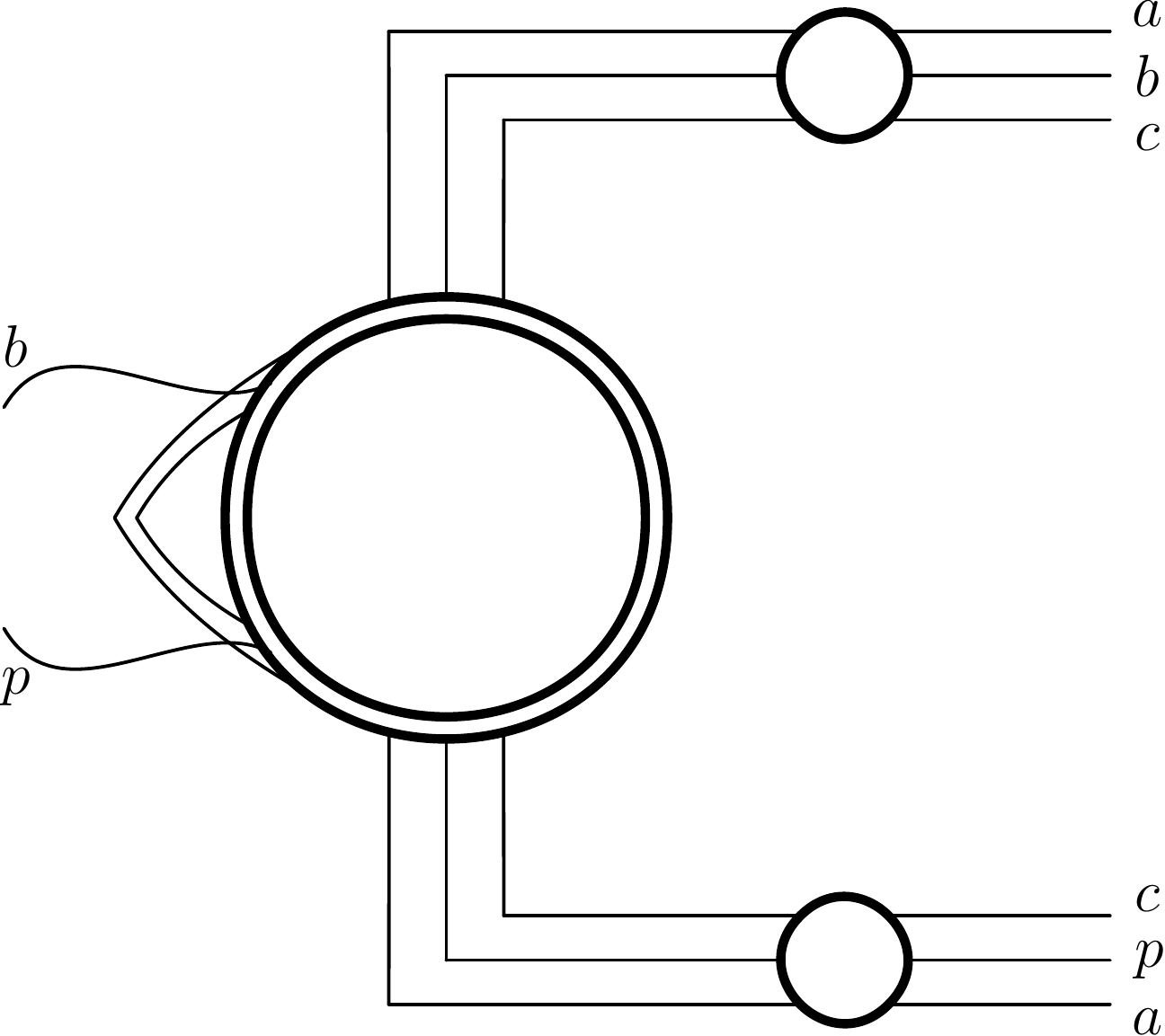}\hspace{0.2cm}\\
\includegraphics[scale=0.3]{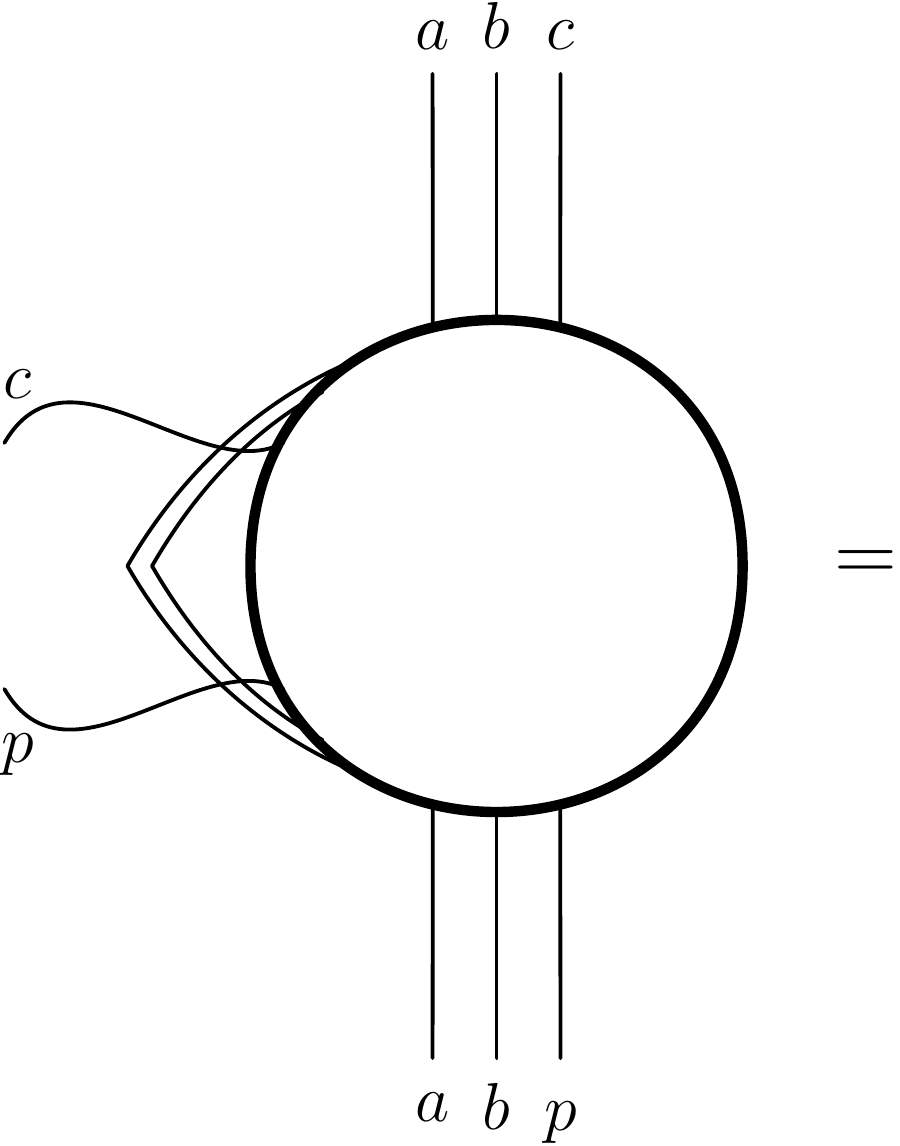}\hspace{0.2cm}
\includegraphics[scale=0.4]{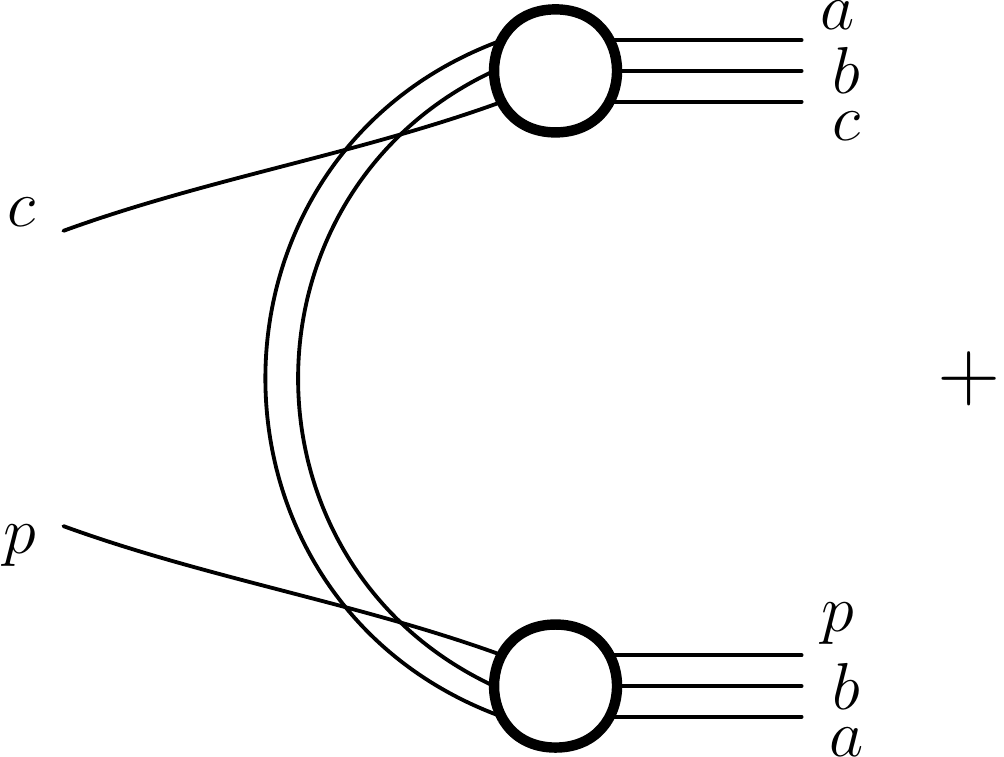}\hspace{0.2cm}
\includegraphics[scale=0.3]{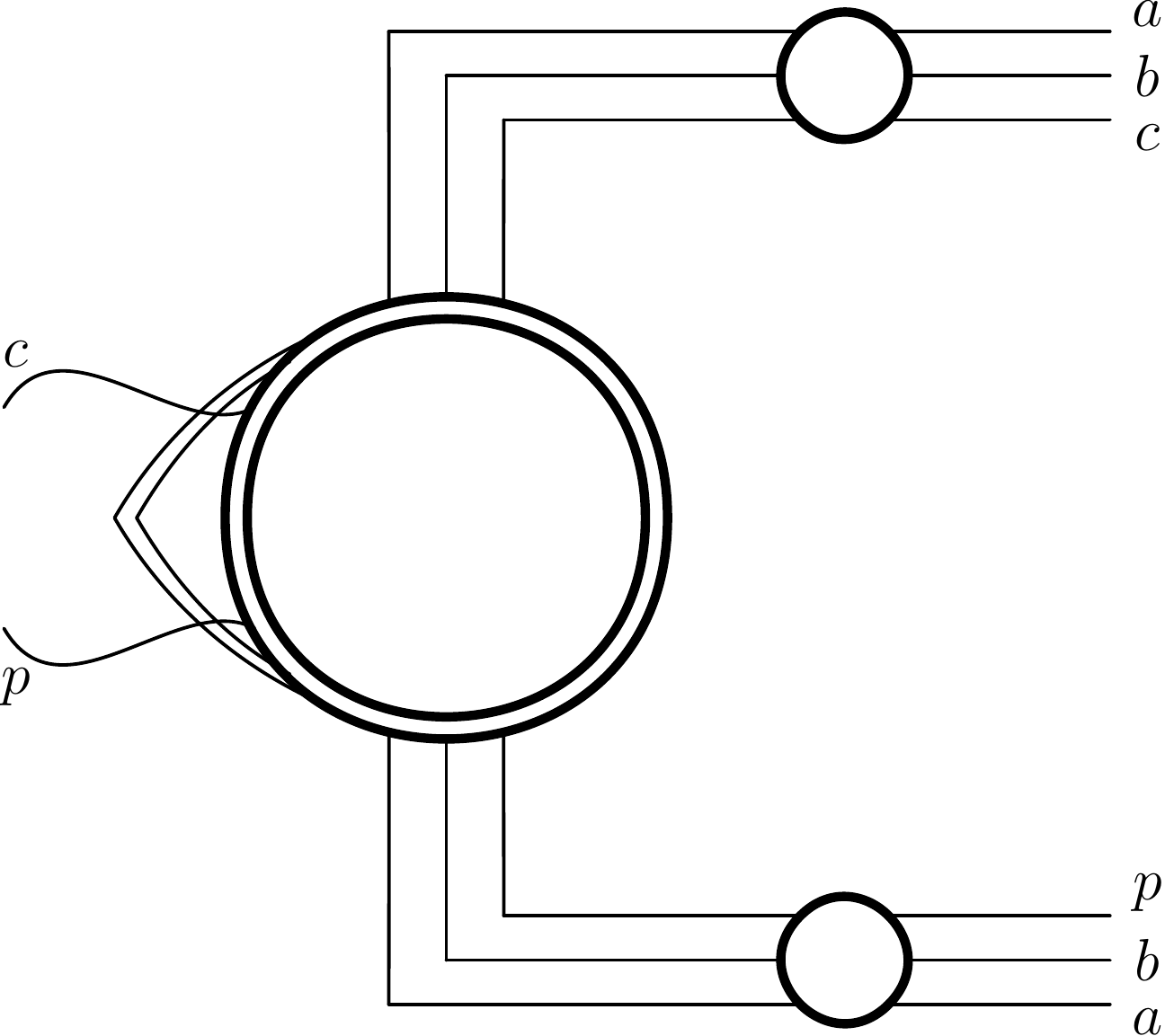}\hspace{0.2cm}
\end{center}
 \caption{Decomposition of the two-point functions with insertion: Case where $\rho=2$ and $\rho=3$}
  \label{fig:Decomp23} 
\end{figure}
which allows to obtain the relations
\beq\label{2}
\Sigma^2_{abc}=Z^2\lambda_2\sum_p G_{abc}^{-1}G_{[bp]ca}^{ins}, \quad T^2_{abc}=Z^2\lambda_2\sum_{p,q}G_{pbq}
\eeq
and
\beq
\label{3}\Sigma^3_{abc}=Z^2\lambda_3\sum_p G_{abc}^{-1}G_{[cp]ab}^{ins}, \quad T^3_{abc}=Z^2\lambda_3\sum_{p,q}G_{pqc}.
\eeq
Therefore  using the last expressions \eqref{1}, \eqref{2} and \eqref{3}, the 1PI two-point functions take the form
\bea\label{gam}
\Gamma_{abc}&=&Z^2\lambda_1
\sum_{p,q}G_{apq}+Z^2\lambda_2
\sum_{p,q}G_{pbq}+Z^2\lambda_3
\sum_{p,q}G_{pqc}\cr
&+&Z^2\lambda_1\sum_p G_{abc}^{-1}G_{[ap]bc}^{ins}+Z^2\lambda_2\sum_p G_{abc}^{-1}G_{[bp]ca}^{ins}+Z^2\lambda_3\sum_p G_{abc}^{-1}G_{[cp]ab}^{ins}\cr
&=&Z^2\lambda_1\sum_{p,q}G_{apq}+Z^2
\lambda_2\sum_{p,q}G_{pbq}+
Z^2\lambda_3\sum_{p,q}G_{pqc}+Z\lambda_1\sum_p G_{abc}^{-1}\frac{G_{abc}-G_{pbc}}{|p|-|a|}\cr
&+&Z\lambda_2\sum_p G_{abc}^{-1}\frac{G_{bca}-G_{pca}}{|p|-|b|}+Z\lambda_3\sum_p G_{abc}^{-1}\frac{G_{cab}-G_{pab}}{|p|-|c|}.
\eea
We assume now that the function $G_{abc}$ satisfy the condition 
\bea\label{ben}
G_{abc}=G_{bca}=G_{cab}
\eea
and  then, we get the following proposition:
\begin{proposition}\label{symnewcal}{Symmetry properties:\,\,}
The connected two-point functions $\Gamma_{abc}^2$  can be obtained using $\Gamma_{abc}^1$ and  replace respectively $a\rightarrow b$ and  $b\rightarrow c$ and $c\rightarrow a$. In the same manner $\Gamma_{abc}^3$  can be obtained using $\Gamma_{abc}^1$ and  replacing respectively $a\rightarrow c$, $b\rightarrow a$ and $c\rightarrow b$. 
\end{proposition}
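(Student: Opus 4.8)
The plan is to read the symmetry off directly from the closed form \eqref{gam} for $\Gamma^\rho_{abc}=T^\rho_{abc}+\Sigma^\rho_{abc}$, using only the cyclic property \eqref{ben} of the connected two-point function together with harmless relabelings of the dummy summation indices. Introduce the cyclic relabeling $\sigma:\,a\to b,\ b\to c,\ c\to a$ (understood to carry $\lambda_1\to\lambda_2$ as well) and its square $\sigma^2:\,a\to c,\ b\to a,\ c\to b$ (carrying $\lambda_1\to\lambda_3$). The assertion to be proved is then $\sigma(\Gamma^1_{abc})=\Gamma^2_{abc}$ and $\sigma^2(\Gamma^1_{abc})=\Gamma^3_{abc}$, and I will check it term by term against the right-hand side of \eqref{gam}.

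For the $\Sigma$-pieces the check is immediate: applying $\sigma$ to the term $Z\lambda_1\sum_p G_{abc}^{-1}(G_{abc}-G_{pbc})/(|p|-|a|)$ of $\Gamma^1_{abc}$ gives $Z\lambda_2\sum_p G_{bca}^{-1}(G_{bca}-G_{pca})/(|p|-|b|)$, and \eqref{ben} turns $G_{bca}^{-1}$ into $G_{abc}^{-1}$ and $G_{bca}$ into $G_{abc}$, reproducing exactly the $\Sigma^2$ term in \eqref{gam}; the $\rho=3$ case is identical, using $G_{cab}=G_{abc}$. For the $T$-pieces (the tadpoles) a little more care is needed, because $\sigma$ sends $Z^2\lambda_1\sum_{p,q}G_{apq}$ to $Z^2\lambda_2\sum_{p,q}G_{bpq}$, whereas the $T^2$ term appears in \eqref{gam} as $Z^2\lambda_2\sum_{p,q}G_{pbq}$. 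These agree since
\bea
\sum_{p,q}G_{pbq}=\sum_{p,q}G_{bqp}=\sum_{p,q}G_{bpq},
\eea
the first equality being \eqref{ben} applied pointwise and the second the relabeling $p\leftrightarrow q$ of the summation variables. For $\rho=3$ the analogous step is shorter: $\sigma^2$ sends $\sum_{p,q}G_{apq}$ to $\sum_{p,q}G_{cpq}$, and $G_{cpq}=G_{pqc}$ already by \eqref{ben}.

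Assembling the two contributions gives $\Gamma^2_{abc}=\sigma(\Gamma^1_{abc})$ and $\Gamma^3_{abc}=\sigma^2(\Gamma^1_{abc})$, which is the stated replacement rule. The only genuine subtlety, and the step I expect to require the most attention, is the $T$-term identity displayed above: the expressions $\sum_{p,q}G_{apq}$ and $\sum_{p,q}G_{pbq}$ are not cyclic images of one another as pointwise functions of three arguments, and it is precisely the combination of \eqref{ben} with the freedom to rename the dummy pair $(p,q)$ that reconciles them. Everything else is bookkeeping within \eqref{gam}.
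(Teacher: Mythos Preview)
Your proof is correct and follows exactly the route the paper implicitly intends: the proposition is stated right after \eqref{ben} with no separate proof, and the verification amounts to inspecting \eqref{gam} (equivalently \eqref{zon}--\eqref{zon2}) under the cyclic relabeling together with \eqref{ben}, which is precisely what you do. Your explicit isolation of the $T$-term identity $\sum_{p,q}G_{pbq}=\sum_{p,q}G_{bqp}=\sum_{p,q}G_{bpq}$ is the only point that is not literal bookkeeping, and it is handled correctly.
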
 
  Now 
using the relation $G_{abc}^{-1}=M_{abc}-\Gamma_{abc}$    , we  get
\bea\label{zon}
\Gamma_{abc}^1=Z^2\lambda_1\Big[\sum_{pq}\frac{1}{M_{apq}-\Gamma_{apq}}+
\sum_p\frac{1}{M_{pbc}-\Gamma_{pbc}}-\sum_p\frac{1}{M_{pbc}-\Gamma_{pbc}}
\frac{\Gamma_{abc}-\Gamma_{pbc}}{Z(|a|-|p|)}\Big],
\eea
\bea\label{zon1}
\Gamma_{abc}^2=Z^2\lambda_2\Big[\sum_{pq}\frac{1}{M_{pbq}-\Gamma_{pbq}}+
\sum_p\frac{1}{M_{pca}-\Gamma_{pca}}-\sum_p\frac{1}{M_{pca}-\Gamma_{pca}}
\frac{\Gamma_{bca}-\Gamma_{pca}}{Z(|b|-|p|)}\Big],
\eea
\bea\label{zon2}
\Gamma_{abc}^3=Z^2\lambda_3\Big[\sum_{pq}\frac{1}{M_{pqc}-\Gamma_{pqc}}+
\sum_p\frac{1}{M_{pab}-\Gamma_{pab}}-\sum_p\frac{1}{M_{pab}-\Gamma_{pab}}
\frac{\Gamma_{cab}-\Gamma_{pab}}{Z(|c|-|p|)}\Big].
\eea

For the rest of this section we consider the connected two-point functions $\Gamma_{abc}^{1}$ and finally $\Gamma_{abc}^{2}$ and $\Gamma_{abc}^{3}$ will be deduced using the proposition \eqref{symnewcal}.  Then  we  pass to renormalized quantities using the Taylor expansion as
\bea
\Gamma_{abc}^1=ZM_{abc}^{bar}-M_{abc}^{phys}+\Gamma_{abc}^{phys},\quad \Gamma_{000}^{phys}=0=
\partial\Gamma_{000}^{phys}
\eea
 such that
\beq
M_{abc}^{phys}=|a|+|b|+|c|+m^2,\quad M_{abc}^{bar}=|a|+|b|+|c|+m_{bar}^2.
\eeq
We get  after replacing the expression of $M_{abc}$, 
\bea
\Gamma_{abc}^1=(Z-1)(|a|+|b|+|c|)+Zm^2_{bar}-m^2+\Gamma_{abc}^{phys},
\eea
which expresses the relation between renormalized and bare quantities.
The equation \eqref{gam} takes the form (we set $\lambda_1=\lambda$)
\bea\label{reply}
&&Zm_{bar}^2-m^2+(Z-1)(|a|+|b|+|c|)+\Gamma_{abc}^{phys}=Z^2\lambda
\sum_{p,q}\frac{1}{|p|+|q|+|a|+m^2-\Gamma^{phys}_{pqa}}\cr
&&+Z\lambda\Big[\sum_{p}\frac{1}{|p|+|b|+|c|+m^2-\Gamma^{phys}_{pbc}}-
\frac{1}{|p|+|b|+|c|+m^2-\Gamma^{phys}_{pbc}}
\frac{\Gamma_{abc}^{phys}-\Gamma_{pbc}^{phys}}{(|a|-|p|)}\Big].
\eea
For $a=b=c=0$  the relation of mass variation after renormalization is written as
\bea\label{reply1}
Zm^2_{bar}-m^2&=&Z^2\lambda\sum_{p,q}\frac{1}{|p|+|q|+m^2-\Gamma_{pq0}^{phys}}+Z\lambda\sum_{p}
\frac{1}{|p|+m^2-\Gamma_{p00}^{phys}}\cr
&-&Z\lambda\sum_{p}
\frac{1}{|p|+m^2-\Gamma_{p00}^{phys}}\frac{
\Gamma_{p00}^{phys}}{|p|}.
\eea
Inserting the equation \eqref{reply1} in \eqref{reply}, we get the closed equation of the two-point functions of  renormalizable rank $3$ TGFT as
\begin{multline}\label{eqself}
(Z-1)(|a|+|b|+|c|)+\Gamma_{abc}^{phys}=Z^2\lambda
\sum_{p,q}\Big[\frac{1}{|p|+|q|+|a|+m^2-\Gamma^{phys}_{pqa}}-\frac{1}{|p|+|q|+m^2-\Gamma_{pq0}^{phys}}\Big]\cr
+Z\lambda\sum_{p}\Big[\frac{1}{|p|+|b|+|c|+m^2-\Gamma^{phys}_{pbc}}-
\frac{1}{|p|+|b|+|c|+m^2-\Gamma^{phys}_{pbc}}
\frac{\Gamma_{abc}^{phys}-\Gamma_{pbc}^{phys}}{(|a|-|p|)}\cr
-\frac{1}{|p|+m^2-\Gamma_{p00}^{phys}}
+
\frac{1}{|p|+m^2-\Gamma_{p00}^{phys}}\frac{
\Gamma_{p00}^{phys}}{|p|}\Big].
\end{multline}

The equation \eqref{eqself} is still very complicated  compared to an equivalent one in \cite{Grosse:2009pa}. To simplify it and get explicit solution, we pass to  the  integral transforms. The process is to  set
\beq
\sum_{p\in\mathbb{Z}}=2\int_0^\infty\,d|p|,\quad\sum_{p,q\in\mathbb{Z}}=2\int_0^\infty\,|p|d|p|.
\eeq
We also assume that  $\Gamma_{abc}=\Gamma_{|a||b||c|}$. Then we get the integral equation of \eqref{eqself} as
\begin{align}\label{kyle}
 &(Z-1)(|a|+|b|+|c|)+\Gamma_{abc}^{phys}\cr&=2Z^2\lambda
\int_0^\infty\,|p|d|p|\Big[\frac{1}{2|p|+|a|+m^2-\Gamma^{phys}_{ppa}}-\frac{1}{2|p|+m^2-\Gamma_{pp0}^{phys}}\Big]\cr
&+2Z\lambda\int_0^\infty\,d|p|\Big[\frac{1}{|p|+|b|+|c|+m^2-\Gamma^{phys}_{pbc}}-\frac{1}{|p|+m^2-\Gamma_{p00}^{phys}}\cr
&-
\frac{1}{|p|+|b|+|c|+m^2-\Gamma^{phys}_{pbc}}
\frac{\Gamma_{abc}^{phys}-
\Gamma_{pbc}^{phys}}{(|a|-|p|)}
+
\frac{1}{|p|+m^2-\Gamma_{p00}^{phys}}\frac{
\Gamma_{p00}^{phys}}{|p|}\Big]
\end{align}
with $p\in\mathbb{R^+}.$ We introduce a change of variables
\bea\label{t1}
&&|a|=m^2\frac{\alpha}{1-\alpha},\quad |b|=m^2\frac{\beta}{1-\beta},\quad |c|=m^2\frac{\gamma}{1-\gamma},\quad |p|=m^2\frac{\rho}{1-\rho},\\
&&\label{t2}
\Gamma_{abc}^{phys}=m^2\frac{
\Gamma_{\alpha\beta\gamma}}{(1-\alpha)(1-\beta)(1-\gamma)}.
\eea
We also  take  the cutoff $\Lambda$ such that
$
p_\Lambda=m^2\frac{\Lambda}{1-\Lambda}.
$
Let us now define the quantity $G_{\alpha\beta\gamma}$ as \beq\label{relationsa}
1-\alpha\beta-\alpha\gamma-\beta\gamma+
2\alpha\beta\gamma-\Gamma_{\alpha\beta\gamma}
=\frac{1-\alpha\beta-\alpha\gamma-\beta\gamma+
2\alpha\beta\gamma}{G_{\alpha\beta\gamma}}.
\eeq
Let $\mathcal{J}_{\alpha\beta\gamma}$, $\mathcal{L}_{\alpha\beta\gamma}$ and $\mathcal{K}_\alpha$ are three integrals relation  given by
\bea 
&&\mathcal{J}_{\alpha\beta\gamma}=\int_0^1 \,\frac{d\rho}{(\alpha-\rho)}\frac{G_{\rho\beta\gamma}}{(1-\beta\rho-\gamma\rho-\gamma\beta+
2\rho\gamma\beta)},
\\
&&\mathcal{L}_{\alpha\beta\gamma}=\int_0^1\,\frac{d\rho}{(1-\rho)}\frac{G_{\rho \beta\gamma}-1}{(\alpha-\rho)},
\\
&&
\mathcal{K}_\alpha=m^2\frac{
\int_0^1\,\frac{\rho d\rho}{(1-\rho)}\Big(\frac{(1-\alpha)G_{\rho\rho\alpha}}{1-\rho^2-2\alpha\rho+2\alpha\rho^2}-\frac{G_{\rho\rho 0}}{1-\rho^2}\Big)}{1+\frac{2\lambda}{m^2}\int_0^1\,d\rho\Big(\frac{G'_{\rho 00}}{\rho}+G_{\rho 00}\Big) }.
\eea
Then
we get the following theorem
\begin{theorem}\label{theo1}
The connected two-point functions $G_{\alpha\beta\gamma}$    of the renormalizable rank  $3$  TGFT on $U(1)$  satisfies the closed integral equation
\bea\label{master3}
G_{\alpha\beta\gamma}&=&1+\lambda'\Big\{\mathcal Y +\int_0^1\,d\rho\,G_{\rho 00}+(1-\alpha)(1-\beta)(1-\gamma)\mathcal J_{\alpha\beta\gamma}\cr
&&+\frac{(1-\alpha)(1-\beta)(1-\gamma)G_{\alpha\beta\gamma}}{1-\alpha\beta-\alpha\gamma-\beta\gamma+
2\alpha\beta\gamma}\Big[-\mathcal Y-\int_0^1\,d\rho\,G_{\rho 00}+\mathcal K_{\alpha}\cr
&+&\int_0^1\,d\rho\,\frac{G_{\rho \beta\gamma}-G_{\rho 00}}{1-\rho}-\int_0^1\,d\rho\,\frac{G_{\rho \beta\gamma}}{\alpha-\rho}+(1-\alpha)\mathcal L_{\alpha\beta\gamma}-\mathcal{L}_{000}\Big]\Big\}
\eea
where
\bea
\mathcal{Y}=\lim_{\epsilon\rightarrow 0}\int_0^1\,d\rho\,\frac{G_{\rho\epsilon 0}-G_{\rho 00}}{\epsilon\rho},\qquad \lambda'=\frac{2\lambda}{m^2}.
\eea
\end{theorem}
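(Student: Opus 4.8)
\emph{Proof idea.} The theorem is the endpoint of the chain already assembled above, so the plan is to close that chain. The backbone is: the Schwinger--Dyson equation for the $1$PI two-point function, organized through the graphical decompositions of figures~\ref{fig:Decomp1}--\ref{fig:Decomp23}, expresses $\Gamma_{abc}$ through the full connected two-point functions $G$ and through the two-point functions with insertion $G^{\,ins}$; the Ward--Takahashi identities \eqref{Ward123}--\eqref{Ward312} then replace each $G^{ins}_{[mn]ab}$ by the difference quotient $(G_{abc}-G_{pbc})/(|p|-|a|)$, which is exactly what makes the equation \emph{closed} in \eqref{gam}; the symmetry assumption \eqref{ben} together with Proposition~\eqref{symnewcal} collapses the three sectors to $\Gamma^1_{abc}$ in \eqref{zon}; inverting $G^{-1}=M-\Gamma$, imposing the Taylor (BPHZ-type) subtractions $\Gamma^1_{abc}=ZM^{bar}_{abc}-M^{phys}_{abc}+\Gamma^{phys}_{abc}$ with $\Gamma^{phys}_{000}=0=\partial\Gamma^{phys}_{000}$, and subtracting the $a=b=c=0$ relation \eqref{reply1} to remove the divergent bare mass, gives \eqref{eqself}; finally the continuum replacement of the sums together with $\Gamma_{abc}=\Gamma_{|a||b||c|}$ yields \eqref{kyle}. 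I therefore take \eqref{kyle} as the starting line, and the remaining task is the passage to the compact variables.

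That passage is a long but mechanical substitution of \eqref{t1}--\eqref{t2} and of the definition \eqref{relationsa} of $G_{\alpha\beta\gamma}$ into \eqref{kyle}, followed by identification of the integrals. First I would record the algebraic identity $|a|+|b|+|c|+m^2=m^2(1-\alpha\beta-\alpha\gamma-\beta\gamma+2\alpha\beta\gamma)/[(1-\alpha)(1-\beta)(1-\gamma)]$ and its specializations at the arguments $(p,p,a)$, $(p,b,c)$, $(p,q,0)$ and $(p,0,0)$ that actually occur; with \eqref{t2} and \eqref{relationsa} these turn every denominator $M^{phys}_{\cdots}-\Gamma^{phys}_{\cdots}$ in \eqref{kyle} into $G_{\cdots}$ over the matching polynomial. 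Feeding in the Jacobian $d|p|=m^2\,d\rho/(1-\rho)^2$ (so $|p|\,d|p|=m^4\rho\,d\rho/(1-\rho)^3$) and clearing the overall $m^2/[(1-\alpha)(1-\beta)(1-\gamma)]$ coming from the left side via \eqref{t2}, one matches term by term: the bubble term $Z^2\lambda\sum_{p,q}[\cdots]$ becomes the numerator of $\mathcal K_\alpha$; the tadpole terms give the $\mathcal Y$, $\int_0^1 d\rho\,G_{\rho00}$ and $\int_0^1 d\rho\,(G_{\rho\beta\gamma}-G_{\rho00})/(1-\rho)$ pieces; the Ward term with $(\Gamma^{phys}_{abc}-\Gamma^{phys}_{pbc})/(|a|-|p|)$ splits, after using $|a|-|p|=m^2(\alpha-\rho)/[(1-\alpha)(1-\rho)]$ and adding/subtracting, into $\mathcal J_{\alpha\beta\gamma}$, $\int_0^1 d\rho\,G_{\rho\beta\gamma}/(\alpha-\rho)$ and the $\mathcal L$'s; the $(000)$ subtraction supplies $-\mathcal L_{000}$; and the terms carrying an explicit $\Gamma^{phys}_{abc}$ regroup into the prefactor $(1-\alpha)(1-\beta)(1-\gamma)G_{\alpha\beta\gamma}/(1-\alpha\beta-\alpha\gamma-\beta\gamma+2\alpha\beta\gamma)$ once one uses $\Gamma_{\alpha\beta\gamma}/(1-\alpha\beta-\alpha\gamma-\beta\gamma+2\alpha\beta\gamma)=(G_{\alpha\beta\gamma}-1)/G_{\alpha\beta\gamma}$ from \eqref{relationsa}. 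Assembling all of this is precisely \eqref{master3}.

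Two points need genuine care rather than bookkeeping, and I expect them to be the main obstacle. The first is the wave-function constant $Z$: differentiating \eqref{reply} in $|a|$ at $a=b=c=0$ produces a relation of the form $Z=1+(2\lambda/m^2)\,Z\!\int_0^1 d\rho\,(G'_{\rho00}/\rho+G_{\rho00})+\dots$, which must be solved for $Z$ and reinserted everywhere; this is exactly the origin of the denominator $1+(2\lambda/m^2)\int_0^1 d\rho\,(G'_{\rho00}/\rho+G_{\rho00})$ in $\mathcal K_\alpha$, and keeping the several factors of $Z$ and $Z^2$ coherent through the nonlinear substitution is the step most prone to error; the leftover $(Z-1)(|a|+|b|+|c|)$ on the left of \eqref{kyle} must be shown to combine with the right-hand side so that only the $\mathcal Y$ contribution survives. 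The second is the $\epsilon\to0$ limit defining $\mathcal Y=\lim_{\epsilon\to0}\int_0^1 d\rho\,(G_{\rho\epsilon0}-G_{\rho00})/(\epsilon\rho)$: it encodes, in the new variables, the finite remainder of the linearly divergent subtraction tied to $\partial\Gamma^{phys}_{000}=0$, and one must check it exists, i.e.\ that $\partial_\epsilon G_{\rho\epsilon0}|_{\epsilon=0}$ is integrable against $d\rho/\rho$ near $\rho=0$, which follows from the renormalization conditions. One should also note that the $1/(\alpha-\rho)$ denominators in $\mathcal J$, $\mathcal L$ and in $\int_0^1 d\rho\,G_{\rho\beta\gamma}/(\alpha-\rho)$ are individually principal values, whereas the specific combination in which they occur in \eqref{master3} is regular at $\rho=\alpha$ — a direct inheritance of the difference-quotient structure handed down by the Ward identity.

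Finally I would include two cheap sanity checks, since they pin down the overall normalization that a change of variables is most likely to spoil: at $\alpha=\beta=\gamma=0$ the bracket prefactor equals $1$, the $\pm\mathcal Y$ and $\pm\int_0^1 d\rho\,G_{\rho00}$ copies cancel, $\mathcal L_{000}-\mathcal L_{000}=0$, and what remains must collapse to $G_{000}=1$; and the $O(\lambda)$ truncation of \eqref{master3} must reproduce the first-order self-energy read off directly from \eqref{gam}.
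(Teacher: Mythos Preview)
Your proposal is correct and follows essentially the same route as the paper: start from \eqref{kyle}, push through the change of variables \eqref{t1}--\eqref{t2} and \eqref{relationsa}, extract $Z$ by differentiating at the origin, reinsert it, and collect the integrals into $\mathcal J$, $\mathcal L$, $\mathcal K_\alpha$ and $\mathcal Y$. One small refinement worth noting: the paper takes the $\partial/\partial\beta|_{\alpha=\beta=\gamma=0}$ derivative of the transformed equation (rather than $\partial/\partial|a|$), which yields directly the clean closed form $Z^{-1}=1+\lambda'\int_0^\Lambda d\rho\,(G'_{\rho00}/\rho+G_{\rho00})$ without having to ``solve'' for $Z$; this is what makes the identification $\mathcal Y+\int_0^1 d\rho\,G_{\rho00}=(Z^{-1}-1)/\lambda'$ immediate and explains at once both the denominator of $\mathcal K_\alpha$ and the cancellation of the $(Z-1)(|a|+|b|+|c|)$ term you flag.
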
 
\begin{proof}
 Using the transformations given in the equations
\eqref{t1} and \eqref{t2}, the expression  \eqref{kyle} takes the form
\bea\label{zoro}
&&(Z-1)\Big(\frac{\alpha}{1-\alpha}+\frac{\beta}{1-\beta}+\frac{\gamma}{1-\gamma}\Big)
+\frac{\Gamma_{\alpha\beta\gamma}}{(1-\alpha)(1-\beta)(1-\gamma)}\cr
&&=2Z^2\lambda
\int_0^\Lambda\,\frac{\rho d\rho}{(1-\rho)}\Big[\frac{(1-\alpha)}{1-\rho^2-2\alpha\rho+2\alpha\rho^2-
\Gamma_{\rho\rho\alpha}}-\frac{1}{1-\rho^2-\Gamma_{\rho\rho 0}}\Big]\cr
&&+\frac{2Z\lambda}{m^2} \int_0^\Lambda \,\frac{d\rho}{(1-\rho)}\Big[\frac{(1-\beta)(1-\gamma)}{1-\beta\rho-\gamma\rho-\gamma\beta+
2\rho\gamma\beta-\Gamma_{\rho\beta\gamma}}-\frac{1}{1-\Gamma_{\rho 00}}\cr
&&-\frac{1}{1-\beta\rho-\gamma\rho-\gamma\beta+
2\rho\gamma\beta-\Gamma_{\rho\beta\gamma}}
\frac{(1-\rho)\Gamma_{\alpha\beta\gamma}-(1-\alpha)
\Gamma_{\rho\beta\gamma}}{\alpha-\rho}\cr
&&+\frac{1}{1-\Gamma_{\rho 00}}\frac{\Gamma_{\rho 00}}{\rho}\Big].
\eea
Noting that $\beta$ and $\gamma$ are symmetric parameters in the equation \eqref{zoro}. This implies that $\Gamma_{\alpha\beta\gamma}=\Gamma_{\alpha\gamma\beta}$. Let us now take $\frac{\partial}{\partial\alpha}\Big|_{\alpha=\beta=\gamma=0}$ and $\frac{\partial}{\partial\beta}\Big|_{\alpha=\beta=\gamma=0}$ of the above equation. We come to the relations that satisfies the   renormalized wave function $Z$: 
\bea
Z-1=2Z^2\lambda
\int_0^\Lambda\,\frac{\rho d\rho}{(1-\rho)}\frac{(-1+2\rho-\rho^2+\Gamma'_{\rho\rho 0}+\Gamma_{\rho\rho 0})}{(1-\rho^2-\Gamma_{\rho\rho 0})^2}-\frac{2Z\lambda}{m^2} \int_0^\Lambda \,d\rho\frac{\Gamma_{\rho 00}}{\rho^2(1-\Gamma_{\rho 00})},
\eea
and
\bea\label{moimeme}
Z-1&=&\frac{2Z\lambda}{m^2} \int_0^\Lambda \,\frac{d\rho}{(1-\rho)}\Big[\frac{-1+\rho+\Gamma_{\rho 00}+\Gamma'_{\rho 00}}{(1-\Gamma_{\rho 00})^2}-\frac{(\rho+\Gamma'_{\rho 00})\Gamma_{\rho 00}}{\rho(1-\Gamma_{\rho 00})^2}- \frac{\Gamma'_{\rho 00}}{\rho(1-\Gamma_{\rho 00})}\Big],
\eea
where we take  $\Gamma'_{\rho 00}=:\frac{\partial\Gamma_{\rho\beta\gamma}}{\partial \beta}\Big|_{\beta=\gamma=0}$ or $\Gamma'_{\rho 00}=:\frac{\partial\Gamma_{\rho\beta\gamma}}{\partial \gamma}\Big|_{\beta=\gamma=0}$ and $\Gamma'_{\rho \rho 0}=:\frac{\partial\Gamma_{\rho\rho\alpha}}{\partial \alpha}\Big|_{\alpha=0}$. Now let us pass to the new function  $G_{\alpha\beta\gamma}$ given in  \eqref{relationsa}.
We find the following relations
\bea
\rho+\Gamma'_{\rho 00}=\frac{\rho}{G_{\rho 00}}+\frac{G'_{\rho 00}}{G_{\rho 00}^2},\quad\,\, 2\rho-2\rho^2+\Gamma'_{\rho\rho 0}=\frac{2\rho(1-\rho)}{G_{\rho\rho 0}}+\frac{(1-\rho^2)G'_{\rho\rho 0}}{G_{\rho\rho 0}}.
\eea
Therefore the equation  \eqref{moimeme}  reduces to
\bea\label{Z-1}
 \label{Z}Z^{-1}=1+\frac{2\lambda}{m^2}\int_0^\Lambda\,d\rho\Big[\frac{G'_{\rho 00}}{\rho}+G_{\rho 00}\Big],
\eea
 and \eqref{zoro} takes the form 
\bea\label{eqnew}
&&ZG_{\alpha\beta\gamma}-1-(Z-1)\frac{(1-\alpha)(1-\beta)(1-\gamma)}{1-\alpha\beta-\alpha\gamma-\beta\gamma+
2\alpha\beta\gamma}G_{\alpha\beta\gamma}\cr
&&=\frac{(1-\alpha)(1-\beta)(1-\gamma)}{1-\alpha\beta-\alpha\gamma-\beta\gamma+
2\alpha\beta\gamma}G_{\alpha\beta\gamma}\Big\{2Z^2\lambda
\int_0^\Lambda\,\frac{\rho d\rho}{(1-\rho)}\Big[\frac{(1-\alpha)G_{\rho\rho\alpha}}{1-\rho^2-2\alpha\rho+2\alpha\rho^2}-\frac{G_{\rho\rho 0}}{1-\rho^2}\Big]\cr
&&+\frac{2Z\lambda}{m^2} \int_0^\Lambda \,\frac{d\rho}{(1-\rho)}\Big[\frac{(1-\beta)(1-\gamma)G_{\rho\beta\gamma}}{1-\beta\rho-\gamma\rho-\gamma\beta+
2\rho\gamma\beta}-G_{\rho 00}+
\frac{(1-\alpha)(G_{\rho\beta\gamma}-1)
}{(\alpha-\rho)}+\frac{G_{\rho 00}-1}{\rho}\cr
&&-\frac{G_{\rho\beta\gamma}}{1-\beta\rho-\gamma\rho-\gamma\beta+
2\rho\gamma\beta}
\frac{(1-\rho)(1-\alpha\beta-\alpha\gamma-
\beta\gamma+2\alpha\beta\gamma)(G_{\alpha\beta\gamma}-1)}{(\alpha-\rho)G_{\alpha\beta\gamma}}\Big]\Big\}.
\eea
Inserting \eqref{Z-1} into the left hand side of \eqref{eqnew} and dividing by $Z$, one gets
\bea\label{sadnew}
G_{\alpha\beta\gamma}&=&Z^{-1}-\frac{2\lambda}{m^2}\frac{(1-\alpha)(1-\beta)(1-\gamma)}{1-\alpha\beta-\alpha\gamma-\beta\gamma+
2\alpha\beta\gamma}G_{\alpha\beta\gamma}
\int_0^\Lambda\,d\rho\Big(\frac{G'_{\rho 00}}{\rho}+G_{\rho 00}\Big)\cr
&-&\frac{2\lambda}{m^2} \int_0^\Lambda \,d\rho\frac{(1-\alpha)(1-\beta)(1-\gamma)G_{\rho\beta\gamma}}{1-\beta\rho-\gamma\rho-\gamma\beta+
2\rho\gamma\beta}\cdot
\frac{(G_{\alpha\beta\gamma}-1)}{(\alpha-\rho)}\cr
&+&\frac{(1-\alpha)(1-\beta)(1-\gamma)G_{\alpha\beta\gamma}}{1-\alpha\beta-\alpha\gamma-\beta\gamma+
2\alpha\beta\gamma}\Big
\{\frac{2\lambda}{Z^{-1}}
\int_0^\Lambda\,\frac{\rho d\rho}{(1-\rho)}\Big[\frac{(1-\alpha)G_{\rho\rho\alpha}}{1-\rho^2-2\alpha\rho+2\alpha\rho^2}-\frac{G_{\rho\rho 0}}{1-\rho^2}\Big]\cr
&+&\frac{2\lambda}{m^2} \int_0^\Lambda \,\frac{d\rho}{(1-\rho)}\Big[\frac{(1-\beta)(1-\gamma)G_{\rho\beta\gamma}}{1-\beta\rho-\gamma\rho-\gamma\beta+
2\rho\gamma\beta}-G_{\rho 00}
+
\frac{(1-\alpha)(G_{\rho\beta\gamma}-1)
}{(\alpha-\rho)}\cr
&+&\frac{G_{\rho 00}-1}{\rho}\Big]\Big\}.
\eea
Replacing \eqref{Z}  \eqref{sadnew} yields
\bea\label{lama}
G_{\alpha\beta\gamma}&=&1+\frac{2\lambda}{m^2}\Big\{\int_0^\Lambda\,d\rho\Big(\frac{G'_{\rho 00}}{\rho}+G_{\rho 00}\Big) -\frac{(1-\alpha)(1-\beta)(1-\gamma)}{1-\alpha\beta-\alpha\gamma-\beta\gamma+
2\alpha\beta\gamma}G_{\alpha\beta\gamma}\cr
&\times&
\int_0^\Lambda\,d\rho\Big(\frac{G'_{\rho 00}}{\rho}+G_{\rho 00}\Big)
- \int_0^\Lambda \,d\rho\frac{(1-\alpha)(1-\beta)(1-\gamma)G_{\rho\beta\gamma}}{1-\beta\rho-\gamma\rho-\gamma\beta+
2\rho\gamma\beta}\cdot
\frac{(G_{\alpha\beta\gamma}-1)}{(\alpha-\rho)}\cr
&+&\frac{(1-\alpha)(1-\beta)(1-\gamma)G_{\alpha\beta\gamma}}{1-\alpha\beta-\alpha\gamma-\beta\gamma+
2\alpha\beta\gamma}\Big
[m^2\frac{
\int_0^\Lambda\,\frac{\rho d\rho}{(1-\rho)}\Big(\frac{(1-\alpha)G_{\rho\rho\alpha}}{1-\rho^2-2\alpha\rho+2\alpha\rho^2}-\frac{G_{\rho\rho 0}}{1-\rho^2}\Big)}{1+\frac{2\lambda}{m^2}\int_0^\Lambda\,d\rho\Big(\frac{G'_{\rho 00}}{\rho}+G_{\rho 00}\Big) }\cr
&+& \int_0^\Lambda \,\frac{d\rho}{(1-\rho)}\Big(\frac{(1-\beta)(1-\gamma)G_{\rho\beta\gamma}}{1-\beta\rho-\gamma\rho-\gamma\beta+
2\rho\gamma\beta}-G_{\rho 00}
+
\frac{(1-\alpha)(G_{\rho\beta\gamma}-1)
}{(\alpha-\rho)}\cr
&+&\frac{G_{\rho 00}-1}{\rho}\Big)\Big]\Big\}. 
\eea
Simplifying identical terms we get the result of  Theorem \ref{theo1}.
\end{proof}
Note that $G_{000}=1$ and $\partial G_{000}=0$.  The equation \eqref{lama} shows
the occurrence of the singular integral kernel $\int_0^\Lambda
\frac{d\rho}{\rho-\alpha}$, $\int_0^\Lambda
\frac{d\rho}{1-\rho}$ and $\int_0^\Lambda
\frac{d\rho}{\rho}$ for $\Lambda=1$, which needs to be removed.  We will use the Cauchy principal value of the divergent integrals  and  also take the limit value at points $0$ and $1$ i.e.
\beq
\int_0^1=\lim_{\epsilon\rightarrow 0}\Big[\int_0^{a-\epsilon}+\int_{a+\epsilon}^1\Big],\quad a\in(0,1),\quad
\int_0^1=\lim_{\epsilon\rightarrow 0,\epsilon'\rightarrow 1}\int_\epsilon^{\epsilon'}
\eeq
The nonlinear  integral equation \eqref{master3} is  of the form
\bea
G_{\alpha\beta\gamma}=1+\lambda\int_0^1 \,f(G_{\alpha\beta\gamma},G_{\rho\beta\gamma},G_{\rho\alpha 0},G_{\rho 00},\mathcal{Y},\alpha,\beta,\gamma)d\rho.
\eea
Now we can easily see that \eqref{master3} suffers for the lack of symmetry. This inconvenience is due to the position of parameter $\alpha$. So taken $\alpha=0$  we get the symmetric solution given in the following proposition  
\begin{proposition}
At first order in $\lambda$ the solution of the equation \eqref{master3} for $\alpha=0$ is given by
\bea
G_{0\beta\gamma}&=&
1+\lambda'\Big[1+\frac{(1-\beta)(1-\gamma)}{1-\beta\gamma}\Big(\ln\frac{1+\beta\gamma-\beta-\gamma}{1-\beta\gamma}-1\Big)\Big]=1+\lambda'\mathcal K_{0\beta\gamma}.
\eea
 Then, using the symmetry properties of proposition \ref{symnewcal}  we get the symmetric solution $G_{\alpha\beta\gamma}^{sym} $  as
\bea
G^{sym}_{\alpha\beta\gamma}=1+\lambda'_1\mathcal K_{0\beta\gamma}+\lambda'_2\mathcal K_{0\gamma\alpha }+\lambda'_3\mathcal K_{0\alpha\beta }
\eea
with $\lambda'_\rho=\frac{2\lambda_\rho}{m^2};\,\, \rho=1,2,3$ and  $\alpha,\beta,\gamma\in[0,1).$
\end{proposition}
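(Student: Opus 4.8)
The plan is to extract the first-order coefficient from the master equation \eqref{master3} by setting $\alpha=0$ and working to order $\lambda'$. At zeroth order all two-point functions equal $1$ (since $G_{000}=1$ and the coupling is off), so in every integrand on the right-hand side of \eqref{master3} I would replace $G_{\rho\beta\gamma}$, $G_{\rho 00}$, $G_{\rho\rho\alpha}$, etc.\ by $1$ plus an $O(\lambda')$ correction that can be dropped, because it is already multiplied by an explicit $\lambda'$. The key simplification is that at $\alpha=0$ the prefactor $(1-\alpha)(1-\beta)(1-\gamma)/(1-\alpha\beta-\alpha\gamma-\beta\gamma+2\alpha\beta\gamma)$ collapses to $(1-\beta)(1-\gamma)/(1-\beta\gamma)$, and the bracket multiplying it in \eqref{master3} simplifies dramatically because $\mathcal J_{0\beta\gamma}$, $\mathcal L_{0\beta\gamma}$, the $\mathcal Y$ terms and the $\int d\rho\, G_{\rho 00}$ terms either cancel in pairs or reduce to elementary logarithmic integrals when $G\to 1$.

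First I would tabulate the zeroth-order values of the auxiliary objects: $\mathcal J_{0\beta\gamma}\to\int_0^1 \frac{d\rho}{(-\rho)}\,\frac{1}{1-\beta\rho-\gamma\rho-\beta\gamma+2\rho\beta\gamma}$, $\mathcal L_{0\beta\gamma}\to\int_0^1\frac{d\rho}{(1-\rho)}\frac{1-1}{-\rho}=0$, $\mathcal L_{000}\to 0$, $\mathcal Y\to 0$ (the integrand $\frac{G_{\rho\epsilon 0}-G_{\rho 00}}{\epsilon\rho}$ vanishes at $G\equiv1$), $\int_0^1 d\rho\,G_{\rho 00}\to 1$, and $\mathcal K_\alpha$ at $\alpha=0$ becomes $m^2\int_0^1\frac{\rho d\rho}{1-\rho}\big(\frac{G_{\rho\rho 0}}{1-\rho^2}-\frac{G_{\rho\rho 0}}{1-\rho^2}\big)/(1+O(\lambda'))=0$. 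Note also that at $\alpha=0$ the term $\int_0^1 d\rho\,\frac{G_{\rho\beta\gamma}-G_{\rho 00}}{1-\rho}$ vanishes at zeroth order, while $-\int_0^1 d\rho\,\frac{G_{\rho\beta\gamma}}{\alpha-\rho}$ and $(1-\alpha)(1-\beta)(1-\gamma)\mathcal J_{\alpha\beta\gamma}$ are the two pieces that survive. Plugging all of this in, the bracket in \eqref{master3} reduces at $\alpha=0$ to $-1 + (1-\beta)(1-\gamma)\mathcal J_{0\beta\gamma} + \int_0^1\frac{d\rho}{\rho}\frac{1}{1-\beta\rho-\gamma\rho-\beta\gamma+2\rho\beta\gamma}$, and the explicit terms outside give $\mathcal Y + \int_0^1 d\rho\,G_{\rho00}\to 1$, so
\[
G_{0\beta\gamma}=1+\lambda'\Big[1+\frac{(1-\beta)(1-\gamma)}{1-\beta\gamma}\Big(1+(1-\beta)(1-\gamma)\mathcal J_{0\beta\gamma}+\int_0^1\frac{d\rho}{\rho}\,\frac{1}{1-\beta\rho-\gamma\rho-\beta\gamma+2\rho\beta\gamma}-1\Big)\Big].
\]
The remaining task is the explicit evaluation of the two elementary integrals. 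Writing $1-\beta\rho-\gamma\rho-\beta\gamma+2\rho\beta\gamma = (1-\beta\gamma) - \rho(\beta+\gamma-2\beta\gamma)$, a linear function of $\rho$, both $\int_0^1 \frac{d\rho}{\rho}\big(\cdots\big)^{-1}$ and $\int_0^1 \frac{d\rho}{\rho}\big(\cdots\big)^{-1}$-type integrals (after partial fractions, using the Cauchy principal value prescription stated before the proposition to tame the $1/\rho$ singularity against the constant piece) produce a single logarithm $\ln\frac{1+\beta\gamma-\beta-\gamma}{1-\beta\gamma}$; the $-1$ inside the parenthesis is what is left over after the $+1$ from the prefactor-times-$\mathcal J$ piece combines. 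Collecting, one obtains exactly $G_{0\beta\gamma}=1+\lambda'\big[1+\frac{(1-\beta)(1-\gamma)}{1-\beta\gamma}\big(\ln\frac{1+\beta\gamma-\beta-\gamma}{1-\beta\gamma}-1\big)\big]=1+\lambda'\mathcal K_{0\beta\gamma}$.

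For the symmetric solution, the point is that the asymmetry of \eqref{master3} came solely from singling out $\alpha$ (equivalently, singling out the vertex $V_1$ with coupling $\lambda_1$). By Proposition \ref{symnewcal}, the contribution of $V_2$ to $\Gamma^2_{\alpha\beta\gamma}$ is obtained from $\Gamma^1$ by the cyclic relabeling $a\to b\to c\to a$, and $V_3$ by $a\to c\to b\to a$; at first order these three contributions are independent (cross terms are $O(\lambda^2)$), so they simply add. Applying the relabelings to $G_{0\beta\gamma}=1+\lambda'_1\mathcal K_{0\beta\gamma}$ gives the $V_2$ piece $\lambda'_2\mathcal K_{0\gamma\alpha}$ and the $V_3$ piece $\lambda'_3\mathcal K_{0\alpha\beta}$, and summing (subtracting the two redundant $1$'s so that $G^{sym}_{000}=1$ is respected, or equivalently keeping the linear-in-$\lambda'$ part additively) yields $G^{sym}_{\alpha\beta\gamma}=1+\lambda'_1\mathcal K_{0\beta\gamma}+\lambda'_2\mathcal K_{0\gamma\alpha}+\lambda'_3\mathcal K_{0\alpha\beta}$. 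I expect the main obstacle to be the careful bookkeeping of the divergent pieces: the $1/\rho$, $1/(1-\rho)$ and $1/(\alpha-\rho)$ kernels all appear, and one must verify that with the stated principal-value-plus-endpoint regularization the spurious divergences cancel among the many terms of \eqref{master3} before the finite logarithm emerges — in particular that the $\mathcal Y$ limit and the $\mathcal K_\alpha$ denominator really do contribute nothing at this order, and that the partial-fraction decomposition of the $\mathcal J_{0\beta\gamma}$ integrand against the $\int d\rho/\rho$ term is done consistently.
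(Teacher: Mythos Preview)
Your overall strategy --- set $\alpha=0$, replace every $G$ by $1$ on the right-hand side of \eqref{master3}, and evaluate the surviving elementary integrals --- is the natural one and is what the paper (which states the proposition without a detailed proof) has in mind. Your treatment of the symmetric solution via Proposition~\ref{symnewcal} is also fine.

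However, your intermediate bookkeeping is garbled. In \eqref{master3} the term $(1-\alpha)(1-\beta)(1-\gamma)\mathcal J_{\alpha\beta\gamma}$ sits \emph{outside} the square bracket and is therefore \emph{not} multiplied by the prefactor $\frac{(1-\alpha)(1-\beta)(1-\gamma)G_{\alpha\beta\gamma}}{1-\alpha\beta-\alpha\gamma-\beta\gamma+2\alpha\beta\gamma}$; you have placed it inside. Conversely, the term $-\int_0^1 d\rho\,\frac{G_{\rho\beta\gamma}}{\alpha-\rho}$ that \emph{is} inside the bracket becomes, at $\alpha=0$ and $G\to1$, simply $+\int_0^1\frac{d\rho}{\rho}$, with no $1/(1-\beta\rho-\gamma\rho-\beta\gamma+2\rho\beta\gamma)$ factor attached. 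The correct first-order reduction of \eqref{master3} at $\alpha=0$ is
\[
G_{0\beta\gamma}=1+\lambda'\Big\{1+(1-\beta)(1-\gamma)\,\mathcal J_{0\beta\gamma}+\frac{(1-\beta)(1-\gamma)}{1-\beta\gamma}\Big[\int_0^1\frac{d\rho}{\rho}-1\Big]\Big\}.
\]
Writing $1-\beta\rho-\gamma\rho-\beta\gamma+2\rho\beta\gamma=(1-\beta\gamma)-\rho(\beta+\gamma-2\beta\gamma)$ and doing partial fractions on $\mathcal J_{0\beta\gamma}$, its $\frac{1}{(1-\beta\gamma)}\int_0^1\frac{d\rho}{\rho}$ piece cancels the bare $\int_0^1\frac{d\rho}{\rho}$ in the bracket exactly (this is the cancellation of spurious divergences you correctly anticipate), and the finite remainder is $\frac{(1-\beta)(1-\gamma)}{1-\beta\gamma}\ln\frac{1+\beta\gamma-\beta-\gamma}{1-\beta\gamma}$. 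Together with the $-1$ from the bracket and the overall $+1$ this gives $\mathcal K_{0\beta\gamma}$ as stated. So the route is right; just redo the placement of the $\mathcal J$ term and of the $\int d\rho/\rho$ term before the partial-fraction step.
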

We use the symmetry relation \eqref{ben}  and
 get the following result
\begin{theorem}
The  closed equation of the symmetric two-point functions $G_{\alpha\beta\gamma}$ satisfies the nonlinear integral equation
\bea\label{lama123}
G_{\alpha\beta\gamma}&=&1+\lambda'\Big\{\mathcal Y+\int_0^1d\rho\, G_{\rho 00}+\frac{(1-\alpha)(1-\beta)(1-\gamma)}{1-\alpha\beta-\alpha\gamma-\beta\gamma+
2\alpha\beta\gamma}\Big[\int_0^1\, d\rho\Big(\frac{G_{\rho\beta\gamma}}{\alpha-\rho}\cr
&+&\frac{(2\beta\gamma-\beta-\gamma)
G_{\rho\beta\gamma}}{1-\beta\rho-\gamma\rho-\gamma\beta+
2\rho\gamma\beta}\Big)
+G_{\alpha\beta\gamma}\Big
[\int_0^1\,d\rho\,\frac{G_{\rho\alpha 0}-G_{\rho 0 0}}{1-\rho}
+\int_0^1\,d\rho\,\frac{G_{\rho 00}}{\rho}-\mathcal Y\cr
&-&\int_0^1d\rho\, G_{\rho 00}-G_{0\alpha0}^{-1}\Big(\int_0^1\,d\rho\,\frac{G_{\rho\alpha 0}}{\rho}+\alpha\int_0^1\,d\rho\,\frac{G_{\rho\alpha 0}}{1-\alpha\rho}
+\int_0^1\,d\rho\,\frac{G_{\rho 00}}{\alpha-\rho}\Big)
\Big]\Big\}.
\eea
\end{theorem}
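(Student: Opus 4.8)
The plan is to start from the master equation \eqref{master3} of Theorem \ref{theo1} and reduce it to \eqref{lama123} in three steps: two purely algebraic rearrangements of the kernels $\mathcal J_{\alpha\beta\gamma}$, $\mathcal L_{\alpha\beta\gamma}$, and then a dynamical rewriting of $\mathcal K_\alpha$ that uses the cyclic symmetry \eqref{ben} (together with the reflection symmetry $\Gamma_{\alpha\beta\gamma}=\Gamma_{\alpha\gamma\beta}$, equivalently $G_{\alpha\beta\gamma}=G_{\alpha\gamma\beta}$, noted in the proof of Theorem \ref{theo1}).

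Write $D_{\beta\gamma}(\rho):=1-\beta\rho-\gamma\rho-\beta\gamma+2\beta\gamma\rho$, so the denominator of the prefactor in \eqref{master3} is $D_{\beta\gamma}(\alpha)$ and the one inside $\mathcal J_{\alpha\beta\gamma}$ is $D_{\beta\gamma}(\rho)$. Since $D_{\beta\gamma}$ is affine in $\rho$, one has $D_{\beta\gamma}(\rho)=D_{\beta\gamma}(\alpha)+(\rho-\alpha)(2\beta\gamma-\beta-\gamma)$, hence
\[
\frac{1}{(\alpha-\rho)\,D_{\beta\gamma}(\rho)}=\frac{1}{D_{\beta\gamma}(\alpha)}\Big(\frac{1}{\alpha-\rho}+\frac{2\beta\gamma-\beta-\gamma}{D_{\beta\gamma}(\rho)}\Big);
\]
multiplying by $G_{\rho\beta\gamma}$ and integrating turns $(1-\alpha)(1-\beta)(1-\gamma)\,\mathcal J_{\alpha\beta\gamma}$ into exactly the first line inside the square bracket of \eqref{lama123}. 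Next, using $\frac{1}{(1-\rho)(\alpha-\rho)}=\frac{1}{\alpha-1}\big(\frac1{1-\rho}-\frac1{\alpha-\rho}\big)$ and $\frac1{\rho(1-\rho)}=\frac1\rho+\frac1{1-\rho}$, expand $(1-\alpha)\mathcal L_{\alpha\beta\gamma}$ and $\mathcal L_{000}$ inside the square bracket of \eqref{master3}: the $\frac1{1-\rho}$-contributions cancel identically because $(G_{\rho\beta\gamma}-G_{\rho00})-(G_{\rho\beta\gamma}-1)+(G_{\rho00}-1)\equiv0$, and the $G$-weighted $\frac1{\alpha-\rho}$-pieces collapse to $-\int_0^1 d\rho/(\alpha-\rho)$. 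After these two steps the square bracket of \eqref{master3} has become
\[
\mathcal K_\alpha-\int_0^1\frac{d\rho}{\alpha-\rho}+\int_0^1\frac{G_{\rho00}-1}{\rho}\,d\rho-\mathcal Y-\int_0^1 d\rho\,G_{\rho00},
\]
so the whole reduction comes down to proving the identity
\[
\mathcal K_\alpha=\int_0^1\frac{d\rho}{\alpha-\rho}+\int_0^1\frac{d\rho}{\rho}+\int_0^1\frac{G_{\rho\alpha 0}-G_{\rho00}}{1-\rho}\,d\rho-G_{0\alpha 0}^{-1}\Big(\int_0^1\frac{G_{\rho\alpha 0}}{\rho}\,d\rho+\alpha\int_0^1\frac{G_{\rho\alpha 0}}{1-\alpha\rho}\,d\rho+\int_0^1\frac{G_{\rho00}}{\alpha-\rho}\,d\rho\Big),
\]
all singular kernels being read with the Cauchy-principal-value and endpoint prescription fixed after \eqref{lama}; substituting it into the reduced bracket and reinstating the $\mathcal J$-term then reproduces \eqref{lama123} line by line.

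The displayed identity for $\mathcal K_\alpha$ is the heart of the matter, and the only place where more than algebra is needed. Undoing the changes of variables \eqref{t1}--\eqref{t2} and using \eqref{relationsa}, the numerator of $\mathcal K_\alpha$ equals, up to an explicit constant and the wave-function factor $Z$, the $T^1$-tadpole difference $\sum_{p,q}G_{pq\alpha}-\sum_{p,q}G_{pq0}$; by \eqref{ben} one has $G_{pq\alpha}=G_{\alpha pq}$, so this is $(T^1_{\alpha 00}-T^1_{000})/(Z^2\lambda)$ with $T^1_{abc}=Z^2\lambda\sum_{p,q}G_{apq}$ the tadpole of the Schwinger--Dyson equation \eqref{zon}. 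The plan is then to feed \eqref{zon} back into itself at suitable index configurations (one with first argument $\alpha$ and third argument $0$, together with $(0,0,0)$): since the tadpole depends only on the first argument it is common to the two equations, so $T^1_{\alpha 00}-T^1_{000}$ is traded for boundary values of $\Gamma_{\alpha\beta\gamma}$ and single-summation Ward integrals; the boundary relation $\Gamma_{\alpha 00}=1-G_{0\alpha 0}^{-1}$ (from \eqref{relationsa} and \eqref{ben}) supplies the factor $G_{0\alpha 0}^{-1}$, while converting the leftover single sums $\sum_p G_{p\alpha 0}$, $\sum_p G_{p00}$ back to $\rho$-integrals -- via \eqref{t1} and the partial fractions $\frac{1}{(1-\rho)(1-\alpha\rho)}=\frac{1}{1-\alpha}\big(\frac1{1-\rho}-\frac{\alpha}{1-\alpha\rho}\big)$ -- and using $G_{000}=1$, $\partial G_{000}=0$, produces the three integrals $\int G_{\rho\alpha 0}/\rho$, $\alpha\int G_{\rho\alpha 0}/(1-\alpha\rho)$, $\int G_{\rho00}/(\alpha-\rho)$ and the kernel $\int d\rho/\rho+\int d\rho/(\alpha-\rho)$ on the right of the identity. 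I expect this last part to be the real obstacle: keeping track of the bare/physical mass subtraction and of the factor $Z$ -- which has to be eliminated throughout by means of \eqref{Z-1}, exactly as in the passage from \eqref{lama} to \eqref{master3}, so that all $Z$'s cancel in the end -- and handling the singular integrals near $\rho=0,\alpha,1$ consistently under the stated regularization.
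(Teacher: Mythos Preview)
Your argument is correct and follows essentially the same route as the paper. The partial-fraction reductions of $\mathcal J_{\alpha\beta\gamma}$ and $(1-\alpha)\mathcal L_{\alpha\beta\gamma}-\mathcal L_{000}$ are exactly what is needed, and the key identity you isolate for $\mathcal K_\alpha$ is precisely the paper's equation \eqref{efa}.

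The only noteworthy difference is in how that identity is obtained. You propose to undo the changes of variables \eqref{t1}--\eqref{t2}, invoke the momentum-space Schwinger--Dyson relation \eqref{zon} at the configurations $(\alpha,0,0)$ and $(0,0,0)$, and then redo the variables. The paper instead stays entirely in the $(\alpha,\beta,\gamma,\rho)$ variables: it evaluates the already-transformed closed equation at two index configurations related by the cyclic symmetry \eqref{ben} (so that the same $G$ appears on both left-hand sides while only one side carries a nonzero $\mathcal K$), solves for $\mathcal K_\alpha$, and obtains an expression that still contains a free parameter $\beta$; since $\mathcal K_\alpha$ depends only on $\alpha$, one then sets $\beta=0$ to land on \eqref{efa}. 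This avoids the back-and-forth between discrete sums and integrals and the attendant bookkeeping of the $Z$-factors that you flag as the ``real obstacle''. Your route works, but the paper's is shorter for exactly the reason you anticipate.
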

\begin{proof}
Using the relation \eqref{ben} we can extract the quantity $\mathcal{K}_{\alpha}$ 
 after simplification as
\bea
\mathcal K_{\alpha}&=&-G_{0\alpha\beta}^{-1}\Big[\int_0^1\,d\rho\,\frac{G_{\rho\alpha\beta}}{\rho}-(2\alpha\beta-\alpha-\beta)\int_0^1\,d\rho\,\frac{G_{\rho\alpha\beta}}{1-\alpha\rho-\beta\rho-\alpha\beta+2\alpha\beta\rho}\cr
&+&\int_0^1\,d\rho\,\frac{G_{\rho 0\beta}}{\alpha-\rho}-\beta\int_0^1\,d\rho\,\frac{G_{\rho 0\beta}}{1-\beta\rho}\Big]+\int_0^1\,d\rho\,\frac{G_{\rho\alpha\beta}-G_{\rho 0\beta}}{1-\rho}+\int_0^1\,d\rho\,\Big(\frac{1}{\rho}+\frac{1}{\alpha-\rho}\Big).\cr
&&
\eea
Then remark that $\mathcal{K}_\alpha$ is function of only the parameter $\alpha$.  We then take $\beta=0$ in the last equation and  we get 
\bea\label{efa}
\mathcal K_{\alpha}&=&-G_{0\alpha0}^{-1}\Big(\int_0^1\,d\rho\,\frac{G_{\rho\alpha 0}}{\rho}+\alpha\int_0^1\,d\rho\,\frac{G_{\rho\alpha 0}}{1-\alpha\rho}+\int_0^1\,d\rho\,\frac{G_{\rho 00}}{\alpha-\rho}\Big)\cr
&+&\int_0^1\,d\rho\,\frac{G_{\rho\alpha 0}-G_{\rho 0 0}}{1-\rho}+\int_0^1\,d\rho\,\Big(\frac{1}{\rho}+\frac{1}{\alpha-\rho}\Big).
\eea
By replacing the relation \eqref{efa} in expression \eqref{lama}  we get the desired result.
\end{proof}

Now we are reaching the point where it is possible to give the solution of the equation \eqref{lama123}. Let us write the solution of this equation  as 
\beq
G_{\alpha\beta\gamma}=1+ \sum_{n=1}^\infty\mathcal (\lambda')^nX_{\alpha\beta\gamma}^{(n)}
\eeq
The $n$ order terms $X_{\alpha\beta\gamma}^{(n)}$ can be deduced by iteration. We give here the quantities $X_{\alpha\beta\gamma}^{(1)}$ and $X_{\alpha\beta\gamma}^{(2)}$ in the following statement
\begin{proposition}\label{propsol123}
Pertubatively, at second order in $\lambda$ the symmetry solution of the equation \eqref{lama123}  using the Cauchy principal value is given by
\bea\label{sol123}
G_{\alpha\beta\gamma}&=&1+
\lambda'\mathcal{X}^{(1)}_{\alpha\beta\gamma}
+\lambda'^2\Big\{
\frac{\pi^2}{6}-\frac{3}{2}+\frac{(1-\alpha)(1-\beta)(1-\gamma)}{1-\alpha\beta-\alpha\gamma-\beta\gamma+
2\alpha\beta\gamma}\Big[\mathcal X^{(1)}_{\alpha\beta\gamma}\Big(\ln\frac{(1-\alpha)^2}{\alpha}-1\Big)\cr
&+&\int_0^1\,
d\rho\frac{(2\beta\gamma-\beta-\gamma)
\mathcal{X}^{(1)}_{\rho\beta\gamma}}{1-\beta\rho-\gamma\rho-\beta\gamma
+2\beta\gamma\rho}+
\int_0^1\,d\rho\frac{\mathcal{X}^{(1)}_{\rho
\alpha 0}-\mathcal{X}^{(1)}_{\rho 00}}{1-\rho}-\alpha\int_0^1\,d\rho\,\frac{\mathcal{X}_{\rho\alpha 0}^{(1)}}{1-\alpha\rho}\cr
&
-&\frac{\pi^2}{6}+\frac{3}{2}-\int_0^1\,d\rho\frac{\mathcal X^{(1)}_{\rho \alpha 0}-\mathcal X^{(1)}_{\rho 00}+\mathcal{X}^{(1)}_{0\alpha 0}}{\rho}
-\mathcal{X}^{(1)}_{0\alpha 0}\ln\frac{(1-\alpha)^2}{\alpha}\Big]\Big\}
+O(\lambda'^3),
\eea
where $G_{000}=1$ and where the first order term $\mathcal X^{(1)}_{\alpha\beta\gamma}$ is
\bea
\mathcal X^{(1)}_{\alpha\beta\gamma}&=&1+
\frac{(1-\alpha)(1-\beta)(1-\gamma)}{1-\alpha\beta-\alpha\gamma-\beta\gamma+
2\alpha\beta\gamma}\Big(\ln(1-\alpha)-1+\ln\frac{\beta\gamma-\beta-\gamma+1}{1-\beta\gamma}\Big).
\eea
\end{proposition}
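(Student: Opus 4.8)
The plan is to treat the symmetric closed equation \eqref{lama123} as a fixed point and solve it by iteration in the effective coupling: insert the ansatz $G_{\alpha\beta\gamma}=1+\sum_{n\ge1}(\lambda')^n X^{(n)}_{\alpha\beta\gamma}$ and match powers of $\lambda'$ on the two sides. Because the right-hand side of \eqref{lama123} has the shape $1+\lambda'\{\cdots\}$, the coefficient $X^{(n)}_{\alpha\beta\gamma}$ is just the $(\lambda')^{n-1}$-coefficient of the curly bracket once $G_\bullet$ is replaced by its series; so $X^{(1)}$ is obtained by setting $G_\bullet\equiv1$ inside $\{\cdots\}$, and $X^{(2)}$ by linearising $\{\cdots\}$ about $G_\bullet\equiv1$ and feeding in $X^{(1)}$. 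Order $(\lambda')^0$ merely reproduces $G_{\alpha\beta\gamma}=1$, compatible with $G_{000}=1$ and $\partial G_{000}=0$. Throughout, the singular kernels $\int_0^1 d\rho/\rho$, $\int_0^1 d\rho/(1-\rho)$ and $\mathrm{PV}\!\int_0^1 d\rho/(\alpha-\rho)$ are handled by the Cauchy-principal-value / endpoint-limit prescription stated just before the theorem, and the structural point to exploit is that they enter $\{\cdots\}$ only through combinations whose $\ln\epsilon$ and $\ln(1-\epsilon')$ parts cancel.

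For the first order, putting $G_\bullet\equiv1$ collapses every integral in $\{\cdots\}$ to an elementary rational integral: $\mathcal Y|_{G\equiv1}=0$ since its numerator $G_{\rho\epsilon0}-G_{\rho00}$ vanishes identically, $\int_0^1 d\rho\,G_{\rho00}|_{G\equiv1}=1$, and $G_{0\alpha0}^{-1}|_{G\equiv1}=1$. The dangerous pieces $\int_0^1 d\rho/\rho$ and $\mathrm{PV}\!\int_0^1 d\rho/(\alpha-\rho)$ then appear inside the bracket multiplying $\frac{(1-\alpha)(1-\beta)(1-\gamma)}{1-\alpha\beta-\alpha\gamma-\beta\gamma+2\alpha\beta\gamma}$ with opposite signs — one copy from the first group of $\rho$-integrals, one from the $-G_{0\alpha0}^{-1}(\cdots)$ block — so they drop out, and with $\mathrm{PV}\!\int_0^1\frac{d\rho}{\alpha-\rho}=\ln\frac{\alpha}{1-\alpha}$, $\alpha\int_0^1\frac{d\rho}{1-\alpha\rho}=-\ln(1-\alpha)$ and $\int_0^1\frac{(2\beta\gamma-\beta-\gamma)\,d\rho}{1-\beta\rho-\gamma\rho-\gamma\beta+2\rho\gamma\beta}=\ln\frac{(1-\beta)(1-\gamma)}{1-\beta\gamma}$, the surviving logarithms combine exactly into $\mathcal X^{(1)}_{\alpha\beta\gamma}$.

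For the second order I would linearise $\{\cdots\}$ about $G_\bullet\equiv1$: in every integrand carrying a single $G$-factor replace it by $\mathcal X^{(1)}$; for the term where $G_{\alpha\beta\gamma}$ appears as an outer prefactor of a square bracket whose $G\equiv1$ value is $E^{(0)}:=\ln\frac{(1-\alpha)^2}{\alpha}-1$, use the product rule, which produces the contribution $\mathcal X^{(1)}_{\alpha\beta\gamma}\big(\ln\frac{(1-\alpha)^2}{\alpha}-1\big)$ of \eqref{sol123}; expand $G_{0\alpha0}^{-1}=1-\lambda'\mathcal X^{(1)}_{0\alpha0}+O((\lambda')^2)$, which is the source of the $\mathcal X^{(1)}_{0\alpha0}$-proportional pieces, in particular $-\mathcal X^{(1)}_{0\alpha0}\ln\frac{(1-\alpha)^2}{\alpha}$; and expand $\mathcal Y=\lambda'\int_0^1\frac{d\rho}{\rho}\,\partial_\beta\mathcal X^{(1)}_{\rho\beta0}\big|_{\beta=0}+O((\lambda')^2)$, which is a pure number. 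Substituting the explicit $\mathcal X^{(1)}$ and regrouping gives $X^{(2)}$; the only pieces that admit a fully closed evaluation are the pure numbers coming from $\mathcal Y$ and from $\int_0^1 d\rho\,\mathcal X^{(1)}_{\rho00}$, which one computes via $\int_0^1\frac{\ln(1-\rho)}{\rho}\,d\rho=-\frac{\pi^2}{6}$ together with elementary primitives, and which combine to $\frac{\pi^2}{6}-\frac32$ outside the $\frac{(1-\alpha)(1-\beta)(1-\gamma)}{1-\alpha\beta-\alpha\gamma-\beta\gamma+2\alpha\beta\gamma}$ factor and to $-\frac{\pi^2}{6}+\frac32$ inside it. All the genuinely $\rho$-dependent $\alpha$-integrals (those with $\mathcal X^{(1)}_{\rho\beta\gamma}$ for $\beta,\gamma\ne0$, with kernel $1/(1-\alpha\rho)$, and the subtracted combinations $\frac{\mathcal X^{(1)}_{\rho\alpha0}-\mathcal X^{(1)}_{\rho00}}{1-\rho}$ and $\frac{\mathcal X^{(1)}_{\rho\alpha0}-\mathcal X^{(1)}_{\rho00}+\mathcal X^{(1)}_{0\alpha0}}{\rho}$, whose numerators are arranged precisely so that the endpoint divergences cancel) are left as written in \eqref{sol123}, after using the symmetry relation \eqref{ben} to remove the $1/(\alpha-\rho)$ kernel that appears at intermediate stages.

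The hard part will be the second-order bookkeeping rather than any individual integral: pairing up the many principal-value contributions so that the $\ln\epsilon$ and $\ln(1-\epsilon')$ divergences cancel, tracking how the $\alpha$-dependence injected by the $G_{0\alpha0}^{-1}$ expansion and by $\mathcal Y$ recombines with the logarithmic and dilogarithmic pieces, and verifying that the final expression still satisfies $G_{000}=1$ and $\partial G_{000}=0$. As sanity checks I would specialise to $\alpha=0$ and match the first-order symmetric solution $G_{0\beta\gamma}=1+\lambda'\mathcal K_{0\beta\gamma}$ stated just above, and evaluate at $\alpha=\beta=\gamma=0$ to recover $G_{000}=1$.
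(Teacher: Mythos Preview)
Your approach matches the paper's exactly: the paper does not give a formal proof of this proposition but simply states that ``the $n$ order terms $X^{(n)}_{\alpha\beta\gamma}$ can be deduced by iteration'' and then records, immediately after \eqref{sol123}, closed forms for the two non-elementary integrals $\int_0^1 d\rho\,\mathcal X^{(1)}_{\rho\beta\gamma}/(a-\rho)$ and $\int_0^1 d\rho\,(\mathcal X^{(1)}_{\rho\alpha0}-\mathcal X^{(1)}_{\rho00}+\mathcal X^{(1)}_{0\alpha0})/\rho$ in terms of $Li_2$. Your outline (feed $G\equiv1$ into \eqref{lama123} for $X^{(1)}$, linearise for $X^{(2)}$, use the product rule on the $G_{\alpha\beta\gamma}[\cdots]$ block and the expansion of $G_{0\alpha0}^{-1}$) is precisely that iteration, spelled out in more detail than the paper provides.

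One caution: your last sentence attributes the disappearance of the $\int_0^1 d\rho\,\mathcal X^{(1)}_{\rho\beta\gamma}/(\alpha-\rho)$ term to the cyclic symmetry \eqref{ben}. That relation was already consumed in deriving \eqref{lama123} from \eqref{master3}; once you iterate \eqref{lama123} it plays no further role. At second order the $1/(\alpha-\rho)$ kernel genuinely produces $\int_0^1 d\rho\,\mathcal X^{(1)}_{\rho\beta\gamma}/(\alpha-\rho)$ from the $B$-block and $-\int_0^1 d\rho\,\mathcal X^{(1)}_{\rho00}/(\alpha-\rho)$ from the $-G_{0\alpha0}^{-1}(\cdots)$ block, and the paper treats these by explicit evaluation (the dilogarithm formula just after the proposition) rather than by a symmetry cancellation. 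So in your bookkeeping step keep those principal-value integrals and evaluate them with the paper's $Li_2$ identity instead of trying to cancel them structurally.
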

The  exact value of the integrals in the rhs of \eqref{sol123}  are given using the following relations
\bea
\int_0^1\,d\rho\,\frac{\mathcal{X}^{(1)}_{\rho\beta
\gamma}}{a-\rho}&=&\ln\frac{a}{1-a}+\frac{(1-\beta)(1-\gamma)}{1-a\beta-a\gamma-\beta\gamma+
2a\beta\gamma}\Big(-1+\ln\frac{\beta\gamma
-\beta-\gamma+1}{1-\beta\gamma}\Big)\cr
&\times&\Big((1-a)
\ln\frac{a}{1-a}
+\frac{\beta\gamma-\beta-\gamma+1}{2\beta\gamma-\beta-\gamma}
\ln\frac{\beta\gamma-\beta-\gamma+1}{1-\beta\gamma}\Big)\cr
&+&\frac{(1-a)(1-\beta)(1-\gamma)}{1-a\beta-a\gamma-\beta\gamma+
2a\beta\gamma}\Big(\frac{\pi^2}{6}-Li_2\frac{-a}{1-a}+\ln(1-a)\ln\frac{a}{1-a}-\frac{1}{1-a}\Big)\cr
&+&\frac{(1-\beta\gamma)(1-\beta)(1-\gamma)}{(\beta+\gamma-2\beta\gamma)(1-a\beta-a\gamma-\beta\gamma+
2a\beta\gamma)}\Big(\frac{\pi^2}{6}-Li_2\frac{\beta\gamma-\beta-\gamma+1}{1-\beta\gamma}\cr
&-&\ln\frac{\beta+\gamma-2\beta\gamma}{1-\beta\gamma}
\ln\frac{\beta\gamma-\beta-\gamma+1}{1-\beta\gamma}\Big)
\eea
and
\bea
\int_0^1\,d\rho\frac{\mathcal X^{(1)}_{\rho 
\alpha 0}-\mathcal X^{(1)}_{\rho 00}+\mathcal{X}^{(1)}_{0\alpha 0}}{\rho}&=&\frac{(1-\alpha)^2}{\alpha}\ln(1-\alpha)\Big(\ln(1-\alpha)-1\Big)
-(1-\alpha)\Big(\frac{\pi^2}{6}-1\Big)\cr
&+&
\frac{1-\alpha}{\alpha}\Big(\ln\alpha\ln(1-\alpha)+Li_2(1-\alpha)
-\frac{\pi^2}{6}\Big)+\frac{\pi^2}{6}-1
\eea
where
\beq
Li_2(x)=\sum_{k=1}^\infty \frac{x^k}{k^2},\quad Li_2(1)=\frac{\pi^2}{6},\quad Li_2(-1)
=-\frac{\pi^2}{12},\quad Li_2(0)=0.
\eeq

Let us immediately emphasize that the above solution is related to the coupling constant $\lambda_1$. To establish the full solution of  the two-point functions of our model, which takes into account the three coupling constants $\lambda_\rho,\,\,\rho=1,2,3$ we must use 
 the symmetry condition of  proposition \ref{symnewcal}.  The end result is given by the sum of the three equations \eqref{zon},\eqref{zon1} and \eqref{zon2}.  Therefore the two-point functions $G^{Sym}_{\alpha\beta\gamma}$ of $3D$ tensor model is  given by the relation
\bea\label{solutionfull1}
G_{\alpha\beta\gamma}^{sym}&=&1+
\lambda'_1\mathcal{X}^{(1)}_{\alpha\beta\gamma}
+\lambda'^2_1\Big\{
\frac{\pi^2}{6}-\frac{3}{2}+\frac{(1-\alpha)(1-\beta)(1-\gamma)}{1-\alpha\beta-\alpha\gamma-\beta\gamma+
2\alpha\beta\gamma}\Big[\mathcal X^{(1)}_{\alpha\beta\gamma}\Big(\ln\frac{(1-\alpha)^2}{\alpha}-1\Big)\cr
&+&\int_0^1\,
d\rho\frac{(2\beta\gamma-\beta-\gamma)\mathcal{X}^{(1)}_{\rho\beta\gamma}}{1-\beta\rho-\gamma\rho-\beta\gamma
+2\beta\gamma\rho}+
\int_0^1\,d\rho\frac{\mathcal{X}^{(1)}_{\rho\alpha 0}-\mathcal{X}^{(1)}_{\rho 00}}{1-\rho}-\alpha\int_0^1\,d\rho\,\frac{\mathcal{X}_{\rho\alpha 0}^{(1)}}{1-\alpha\rho}\cr
&
-&\frac{\pi^2}{6}+\frac{3}{2}-\int_0^1\,d\rho\frac{\mathcal X^{(1)}_{\rho \alpha 0}-\mathcal X^{(1)}_{\rho 00}+\mathcal{X}^{(1)}_{0\alpha 0}}{\rho}
-\mathcal{X}^{(1)}_{0\alpha 0}\ln\frac{(1-\alpha)^2}{\alpha}\Big]\Big\}
+O(\lambda_1'^3)\cr
&+&
\lambda'_2\mathcal{X}^{(1)}_{\beta\gamma\alpha}
+\lambda'^2_2\Big\{
\frac{\pi^2}{6}-\frac{3}{2}+\frac{(1-\alpha)(1-\beta)(1-\gamma)}{1-\alpha\beta-\alpha\gamma-\beta\gamma+
2\alpha\beta\gamma}\Big[\mathcal X^{(1)}_{\beta\gamma\alpha}\Big(\ln\frac{(1-\beta)^2}{\beta}-1\Big)\cr
&+&\int_0^1\,
d\rho\frac{(2\alpha\gamma-\alpha-\gamma)
\mathcal{X}^{(1)}_{\rho\gamma\alpha}}{1-\alpha\rho-\gamma\rho-\alpha\gamma
+2\alpha\gamma\rho}+
\int_0^1\,d\rho\frac{\mathcal{X}^{(1)}_{\rho\beta 0}-\mathcal{X}^{(1)}_{\rho 00}}{1-\rho}-\beta\int_0^1\,d\rho\,\frac{\mathcal{X}_{\rho\beta 0}^{(1)}}{1-\beta\rho}\cr
&
-&\frac{\pi^2}{6}+\frac{3}{2}-\int_0^1\,d\rho\frac{\mathcal X^{(1)}_{\rho \beta 0}-\mathcal X^{(1)}_{\rho 00}+\mathcal{X}^{(1)}_{0\beta 0}}{\rho}
-\mathcal{X}^{(1)}_{0\beta 0}\ln\frac{(1-\beta)^2}{\beta}\Big]\Big\}
+O(\lambda'^3_2)\cr
&+&
\lambda'_3\mathcal{X}^{(1)}_{\gamma\beta\alpha}
+\lambda'^2_3\Big\{
\frac{\pi^2}{6}-\frac{3}{2}+\frac{(1-\alpha)(1-\beta)(1-\gamma)}{1-\alpha\beta-\alpha\gamma-\beta\gamma+
2\alpha\beta\gamma}\Big[\mathcal X^{(1)}_{\gamma\beta\alpha}\Big(
\ln\frac{(1-\gamma)^2}{\gamma}-1\Big)\cr
&+&\int_0^1\,
d\rho\frac{(2\alpha\beta-\alpha-\beta)
\mathcal{X}^{(1)}_{\rho\alpha\beta}}{1-\alpha\rho-\beta\rho-\alpha\beta
+2\alpha\beta\rho}+
\int_0^1\,d\rho\frac{\mathcal{X}^{(1)}_{\rho\gamma
 0}-\mathcal{X}^{(1)}_{\rho 00}}{1-\rho}-\gamma\int_0^1\,d\rho\,\frac{\mathcal{X}_{\rho\gamma 0}^{(1)}}{1-\gamma\rho}\cr
&
-&\frac{\pi^2}{6}+\frac{3}{2}-\int_0^1\,d\rho\frac{\mathcal X^{(1)}_{\rho \gamma 0}-\mathcal X^{(1)}_{\rho 00}+\mathcal{X}^{(1)}_{0\gamma 0}}{\rho}
-\mathcal{X}^{(1)}_{0\gamma 0}\ln\frac{(1-\gamma)^2}{\gamma}\Big]\Big\}
+O(\lambda'^3_3)
\eea
where $\lambda'_\rho=2\lambda_\rho/m^2;\,\, \rho=1,2,3$,  $\alpha,\beta,\gamma\in(0,1)$ and $G_{000}=1.$
Noting that the solution \eqref{solutionfull1} satisfies the condition \eqref{ben} if and only if  we set $\lambda_1'=\lambda_2'=\lambda_3'$.
Let us also emphasize that the higher  order solution can be get pertubatively by iteration.

\section{Closed equation for two-point functions of rank 4 TGFT}\label{sec4}
The same method use in last section will be performed here to establish  the renormalized  two-point functions of rank $4$ tensor field firstly given in \cite{BenGeloun:2011rc}. We provide  the master equation of the two-point functions.  The action $S_{4D}$ of the model is also subdivided into two terms as
\bea
S_{4D}=S_{4D}^{\kin}+S_{4D}^{\inter}.
\eea
 The kinetic term $S_{4D}^{\kin}$ is given by
\bea
S_{4D}^{\kin}=\sum_{p_j\in\mathbb{Z}}\vp_{1234}
\Big(\sum_{i=1}^4p_i^2+m^2\Big)\bvp_{1234}.
\eea
Noting that in four dimensional case the renormalization is 	guaranteed by the presence of the propagator associated with the heat kernel
 \cite{Geloun:2011cy}: 
\beq
C([p])=\Big(\sum_{i=1}^4p_i^2+m^2\Big)^{-1}
=M_{1234}^{-1}.
\eeq
 $S_{4D}^{\inter}$ is related to the interaction, which is
divided into three fundamental contributions $V_{6,1}$, $V_{6,2}$ and $V_{4,1}$ given by
\bea \label{S62}
V_{6;1} &=& \sum_{p_j\in\mathbb{Z}}
\varphi_{1234} \,\bar\varphi_{1'234}\,\varphi_{1'2'3'4'} \,\bar\varphi_{1''2'3'4'} \,
\varphi_{1''2''3''4''} \,\bar\varphi_{12''3''4''}+ \text{permutations }  \\
V_{6;2} &=& \sum_{p_j\in\mathbb{Z}}
\varphi_{1234} \,\bar\varphi_{1'2'3'4}\,\varphi_{1'2'3'4'} \,\bar\varphi_{1''234'}\,
\varphi_{1''2''3''4''} \,\bar\varphi_{12''3''4''}+ \text{permutations }\\
V_{4;1} & =& \sum_{p_j\in\mathbb{Z}}\varphi_{1234} \,\bar\varphi_{1'234}\,\varphi_{1'2'3'4'} \,\bar\varphi_{12'3'4'}\, 
+ \text{permutations } 
\eea
and  an anomalous  term, namely $V_{4,2}$
\bea
V_{4;2}&=&
\Big(\sum_{p_j\in\mathbb{Z}} \bar\varphi_{1234} \,
\varphi_{1234} \Big)
\Big(\sum_{p_j\in\mathbb{Z}} \bar\varphi_{1'2'3'4'}\, 
\varphi_{1'2'3'4'}\Big) .
\eea
This last vertex is not taken into account  in the computation of the correlation functions due to the fact that it is disconnected and  does not contribute to the melonic Feynman graph of the theory. This vertex could be interpreted as the generation of a scalar matter field out of pure gravity \cite{BenGeloun:2011rc}. The vertices are represented in figure \ref{fig:Vertex4d}.

 Let us immediately emphasize that the vertices of the type $V_{6,1}$ and $V_{4,1}$ are parametrized by four indices $\rho\in\{1,2,3,4\}$,  and  the vertices contributing to $V_{6,2}$ are parametrized by  six  index values  $\rho\rho'\in\{1.2, 1.3,1.4, 2.3,2.4, 3.4\}$.  The couple $\rho\rho'$ will be totally    symmetric i.e., $\rho\rho'=\rho'\rho$.
\begin{figure}[htbp]
\begin{center}
$V_{6,1;1}$\includegraphics[scale=0.6]{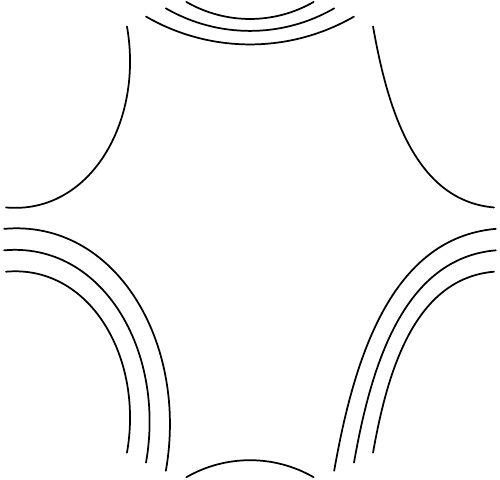}\hspace{0.2cm}
$V_{6,2;14}$\includegraphics[scale=0.6]{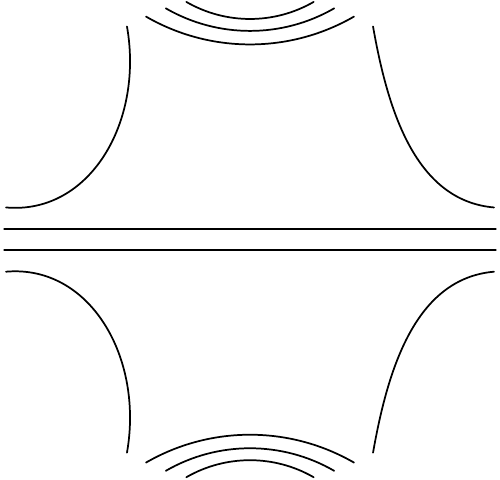}\hspace{0.2cm}
$V_{4,1;1}$\includegraphics[scale=0.6]{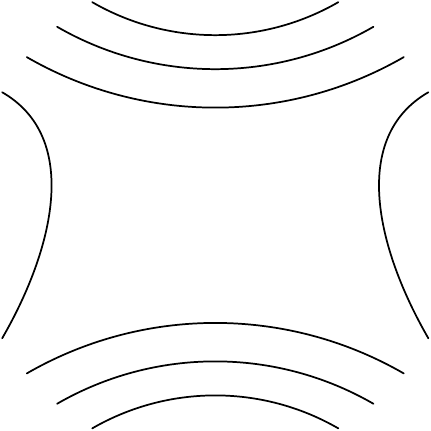}\hspace{0.2cm}
$V_{4,2;1}$\includegraphics[scale=0.6]{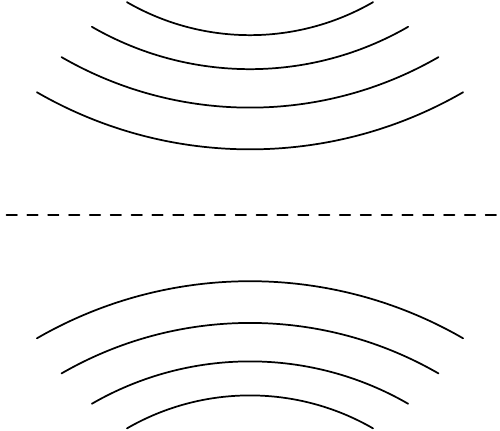}\hspace{0.2cm}
\end{center}
 \caption{Vertex representation of $4D$ tensor model}
  \label{fig:Vertex4d} 
\end{figure} 
One can check that these interaction are invariant under $U(N_a)$ transformations. Then the same procedure of finding the Ward-Takahashi identities applies. 
The  Ward-Takahashi identities  of the equation \eqref{Ward1} is re-expressed as
\bea
\big(M_{m234}-M_{n234}\big)\langle [\vp_{m}\bvp_{n}]_{234}\vp_{n234}\bvp_{ m234}\rangle_c=\langle
\vp_{n234}\bvp_{n234}\rangle_c-
\langle\bvp_{m234}\vp_{m234}\rangle_c\\
\big(M_{1m34}-M_{1n34}\big)\langle [\vp_{m}\bvp_{n}]_{134}\vp_{1n34}\bvp_{ 1m34}\rangle_c=\langle
\vp_{1n34}\bvp_{1n34}\rangle_c-
\langle\bvp_{1m34}\vp_{1m34}\rangle_c\\
\big(M_{12m4}-M_{12n4}\big)\langle [\vp_{m}\bvp_{n}]_{124}\vp_{12n4}\bvp_{ 12m4}\rangle_c=\langle
\vp_{12n4}\bvp_{12n4}\rangle_c-
\langle\bvp_{12m4}\vp_{12m4}\rangle_c\\
\big(M_{123m}-M_{123n}\big)\langle [\vp_{m}\bvp_{n}]_{123}\vp_{123n}\bvp_{ 123m}\rangle_c=\langle
\vp_{123n}\bvp_{123n}\rangle_c-
\langle\bvp_{123m}\vp_{123m}\rangle_c.
\eea
\begin{figure}[htbp]
\begin{center}
\includegraphics[scale=0.45]{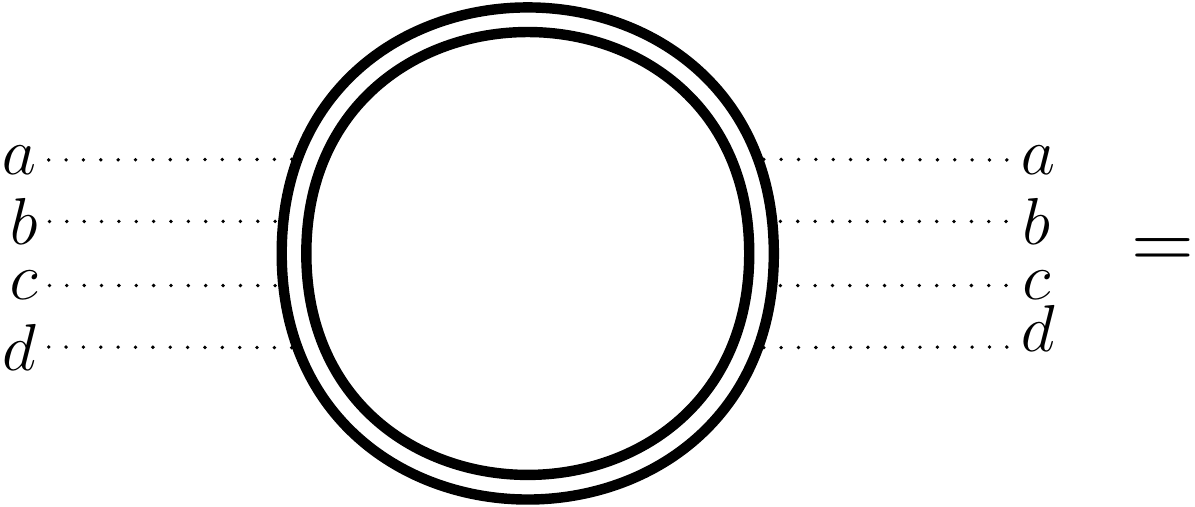}\put(-200,30){$\Gamma_{abcd}=$}\\
$\sum_{\rho}\Big(\Gamma_{abcd}^{6,1;\rho}+\Gamma_{abcd}^{4,1;\rho}\Big)+\sum_{\rho\rho'}\Gamma_{abcd}^{6,2;\rho\rho'}$
\end{center}
 \caption{Schwinger-Dyson equation of rank $4$ tensor model}
  \label{fig:Schwinger2} 
\end{figure}
The figure \ref{fig:Schwinger2}  gives the Schwinger-Dyson equation of the two-point functions.  This figures  collects the 1PI two-point functions.
\begin{figure}[htbp]
\begin{center}
\includegraphics[scale=0.2]{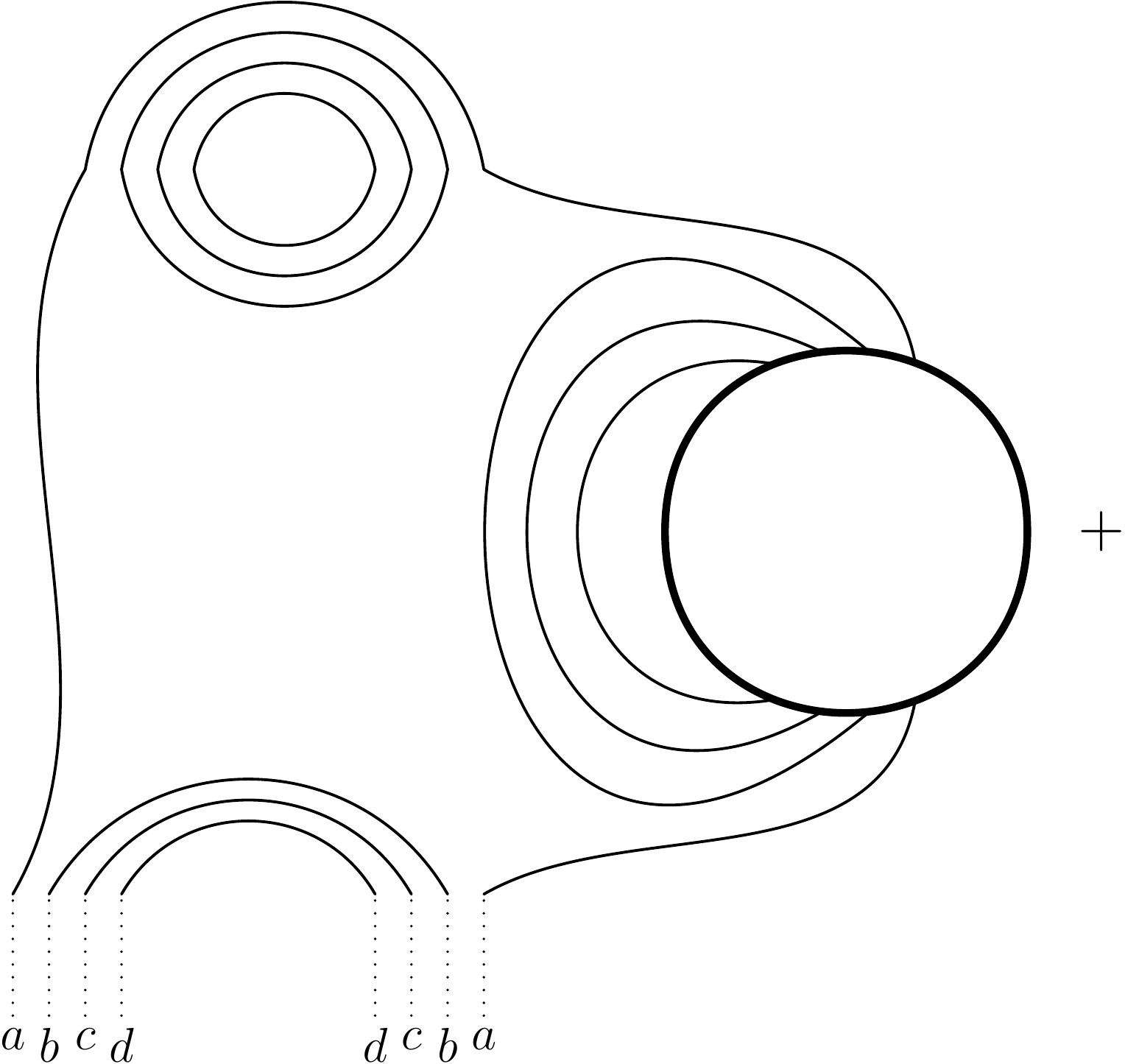}
\includegraphics[scale=0.2]{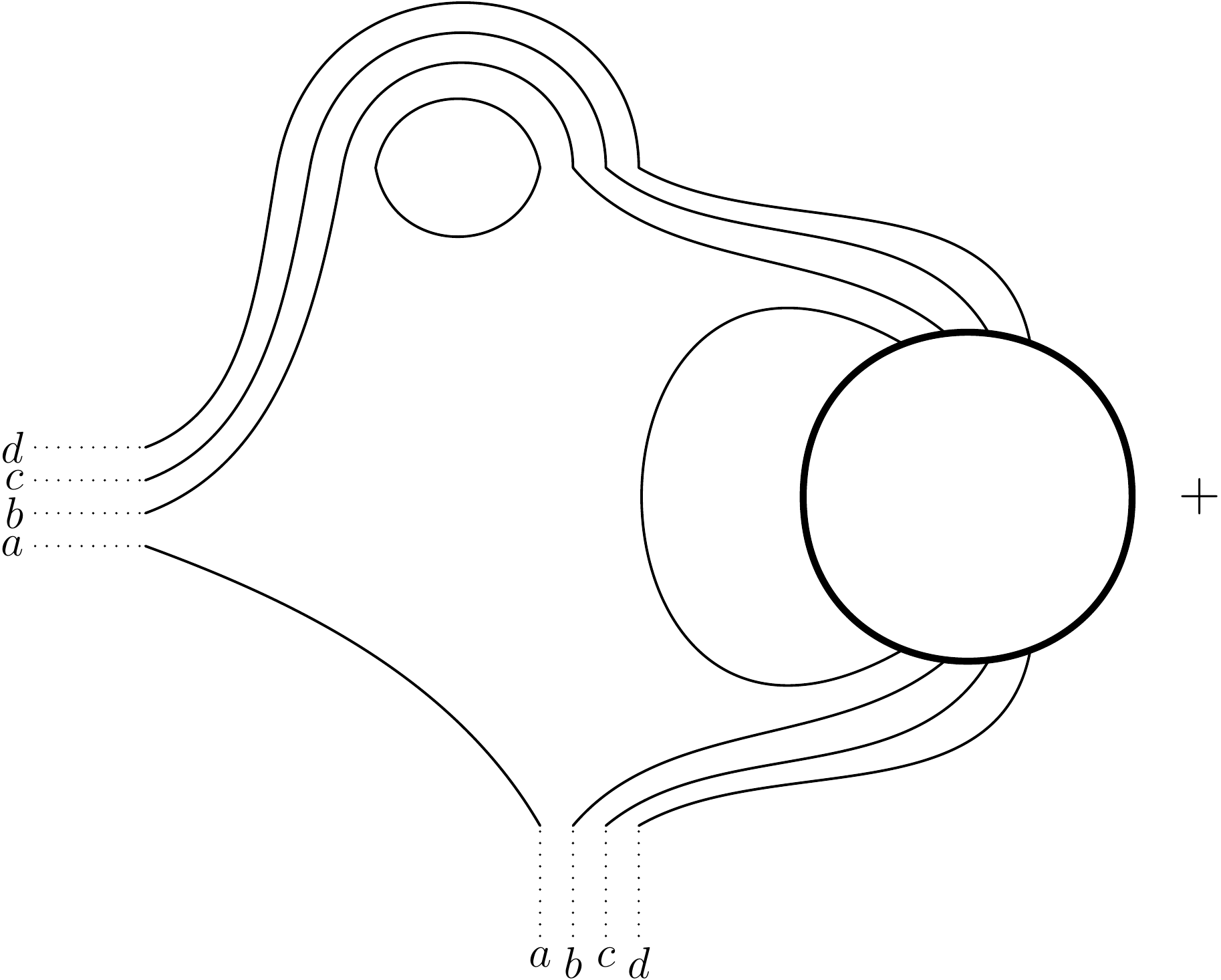}
\includegraphics[scale=0.2]{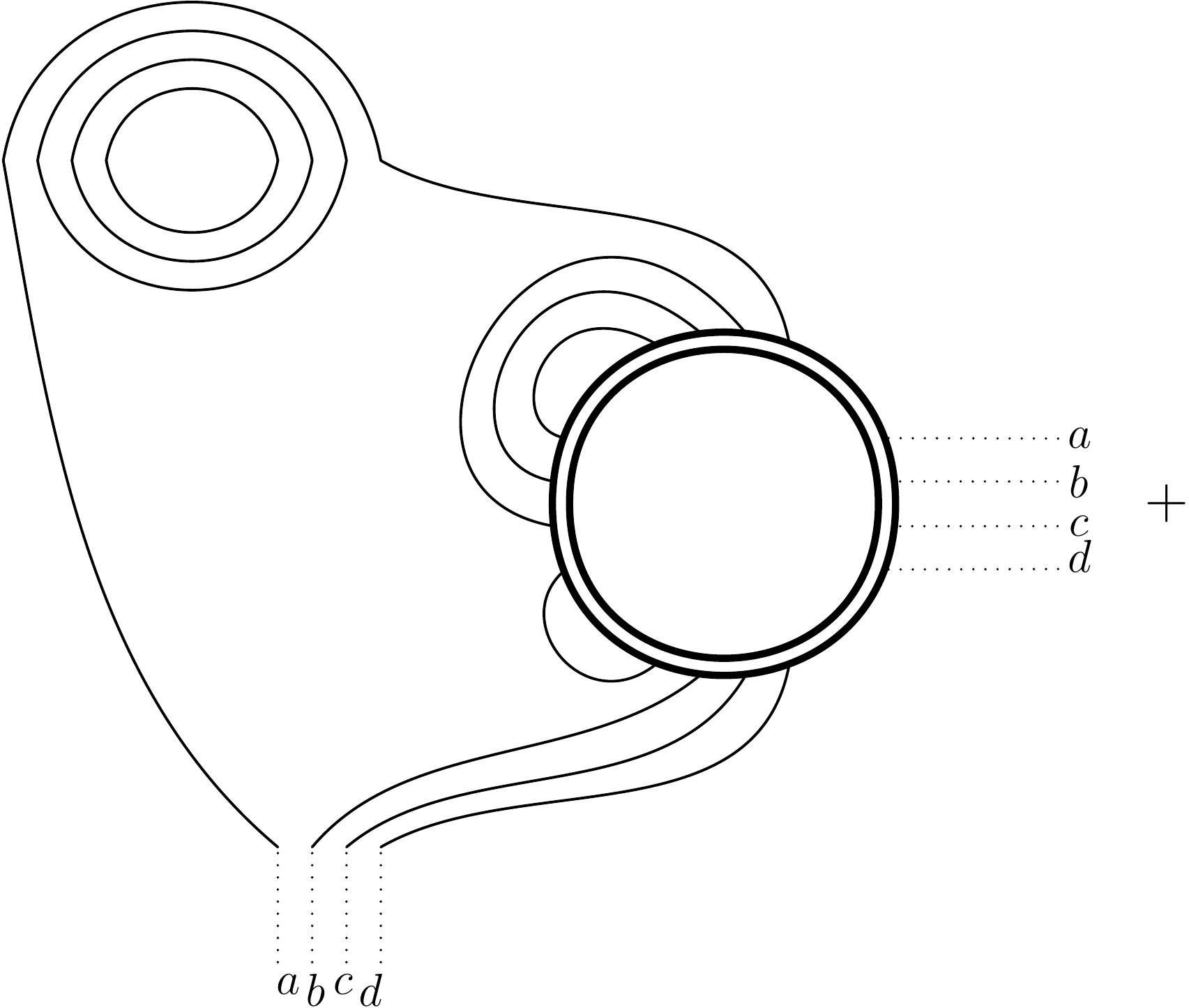}
\put(10,38){ $3$ Permutation $\rho$}
\put(-340,38){$\Gamma_{abcd}^{6,1}=$}
\end{center}
 \caption{}
  \label{fig:Schwinger0} 
\end{figure}
\begin{figure}[htbp]
\begin{center}
\includegraphics[scale=0.2]{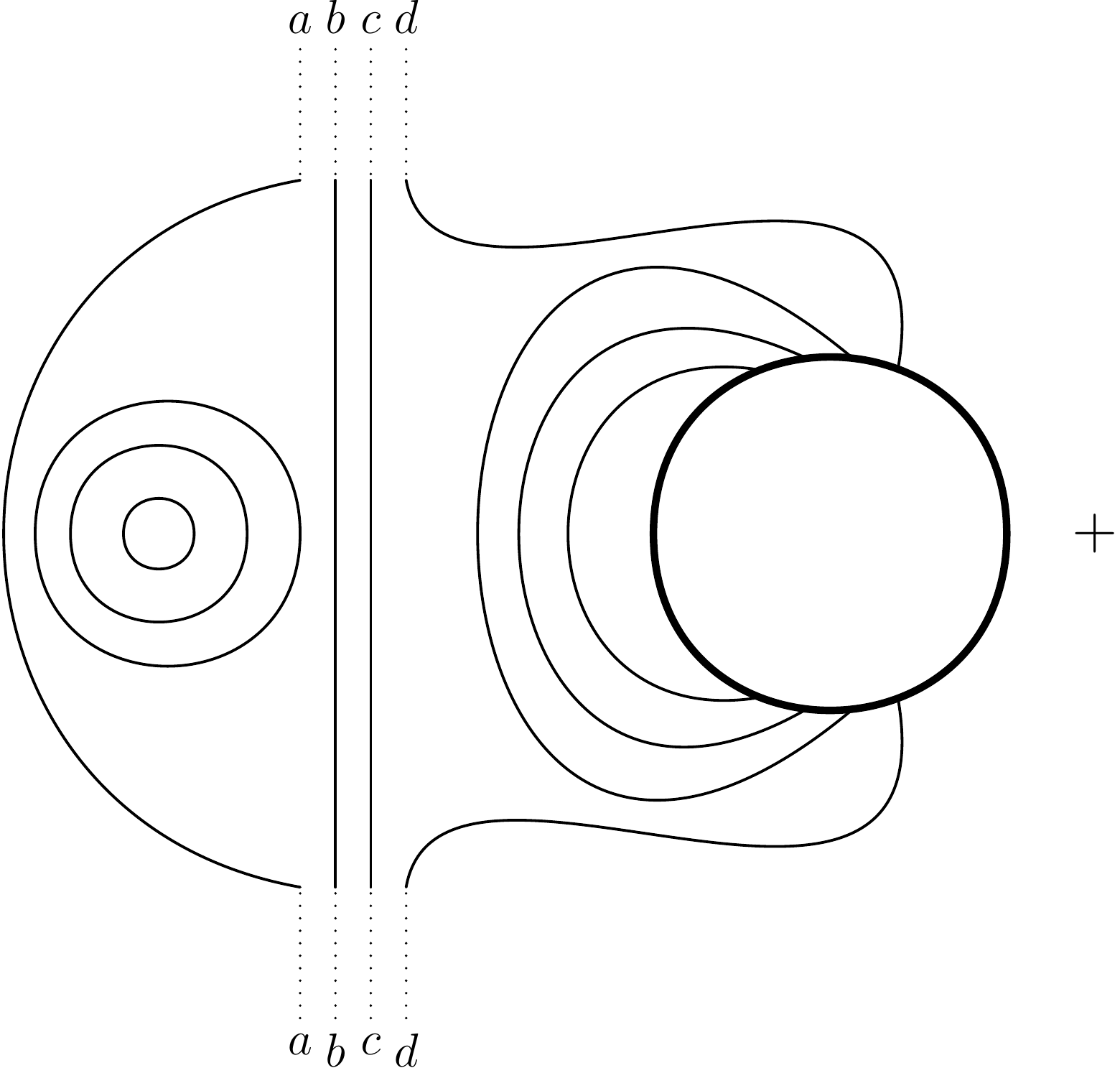}
\includegraphics[scale=0.15]{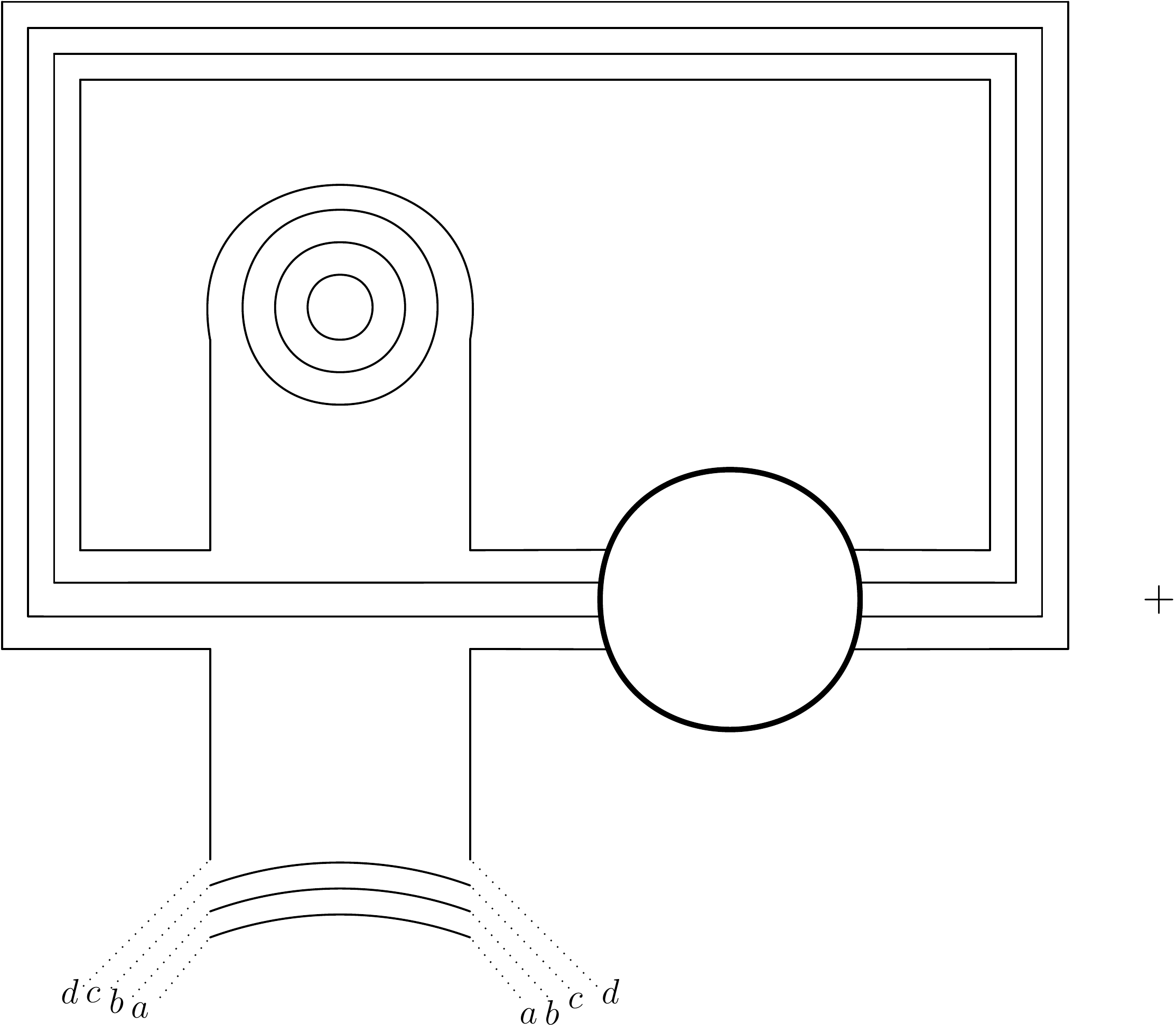}
\includegraphics[scale=0.2]{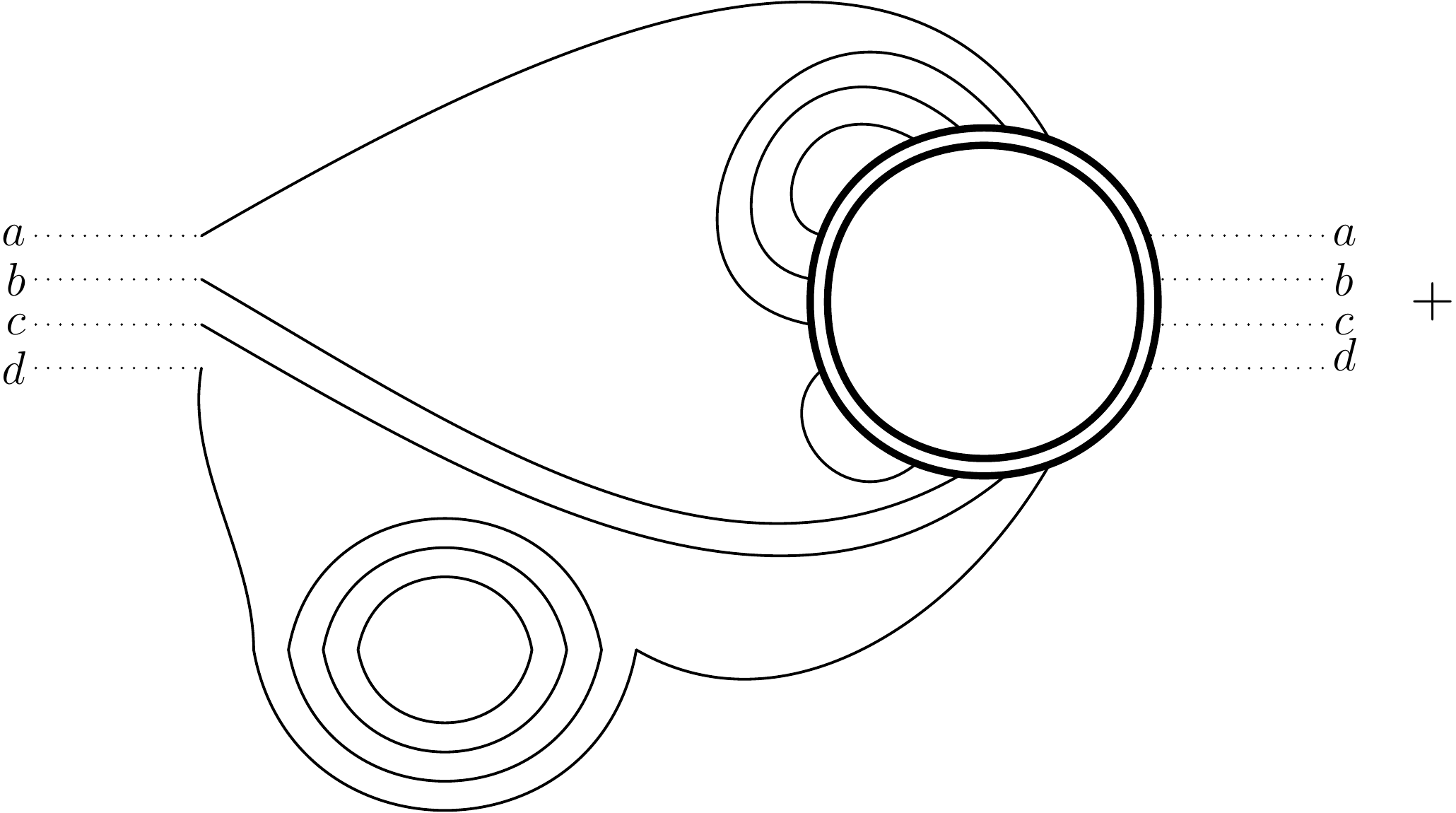}
\put(10,38){ $5$ Permutation $\rho\rho'$}
\put(-360,38){$\Gamma_{abcd}^{6,2}=$}
\end{center}
 \caption{}
  \label{fig:Schwinger00} 
\end{figure}
 \begin{figure}[htbp]
\begin{center}
\includegraphics[scale=0.2]{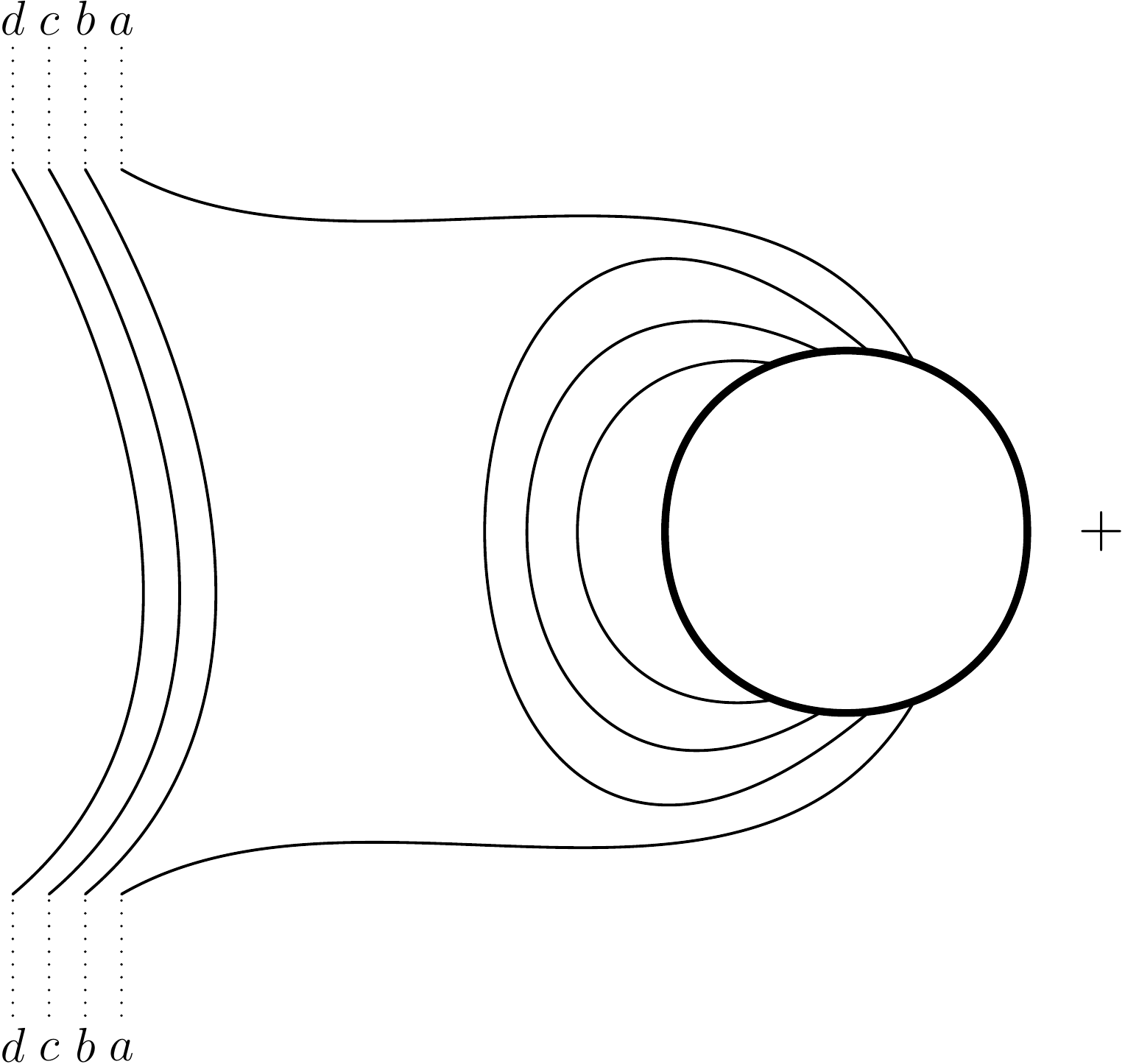}
\includegraphics[scale=0.2]{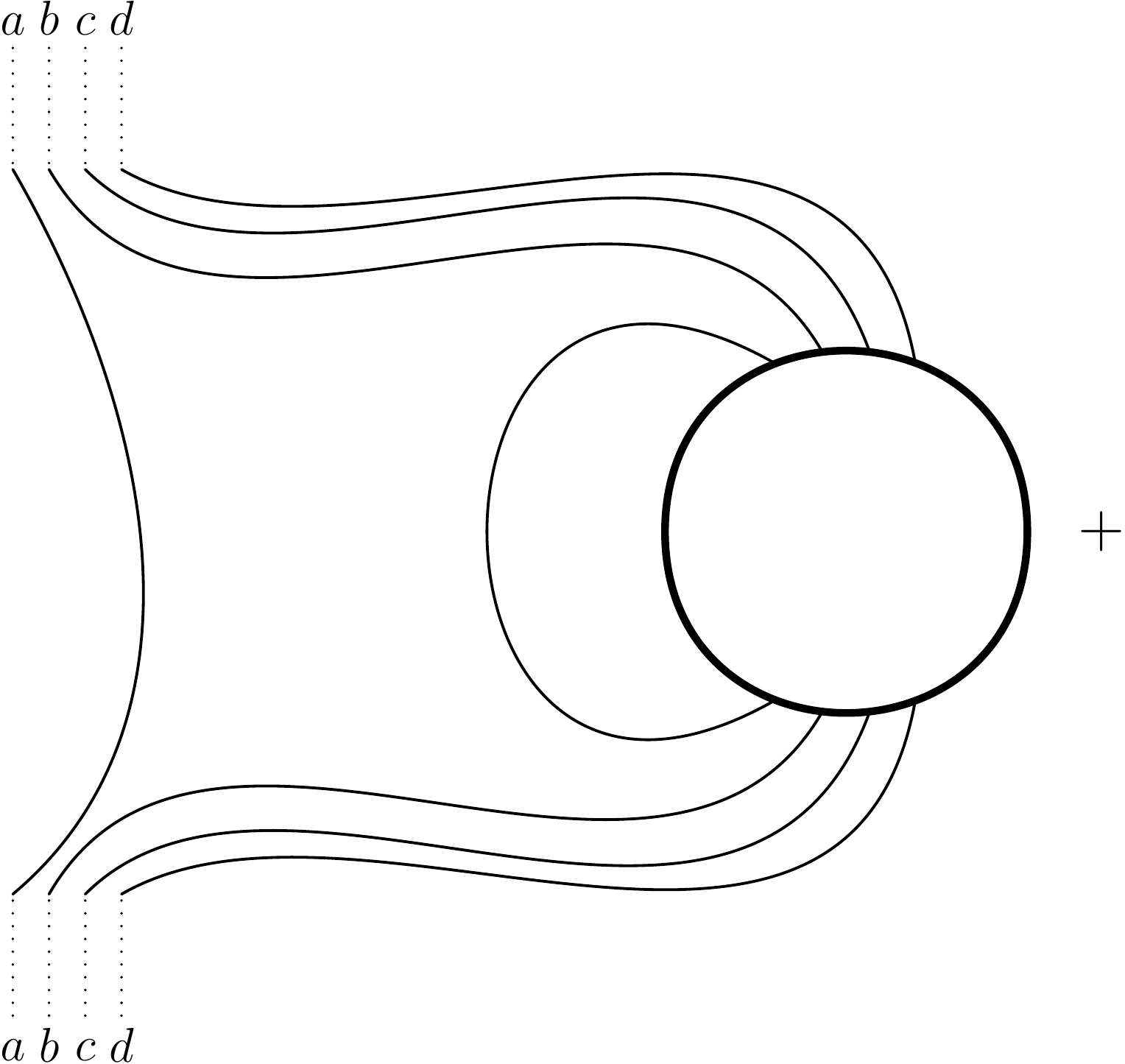}
\includegraphics[scale=0.2]{ward-26.pdf}
\put(10,38){ $3$ Permutation $\rho$}
\put(-350,38){$\Gamma_{abcd}^{4,1}=$}
\end{center}
 \caption{}
  \label{fig:Schwinger000} 
\end{figure}
  Let us discuss  the  contributions in this figure.  The graphs of figure \ref{fig:Schwinger0}  are related to the graphs made with vertex $V_{6,1}$. The first graph of this figure is denoted by $T^{6,1}_{abcd}$ and the sum of the other two  is  $\Sigma^{6,1}_{abcd}$.  The graphs of figure \ref{fig:Schwinger00}  are related to the graphs  built with the vertex $V_{6,2}$.  The first graph of this figure  is called $T^{6,2}_{abcd}$ and the sum of the other two is $\Sigma^{6,2}_{abcd}$.   In the same manner, the graphs of figure \ref{fig:Schwinger000}  take into account the graphs built with vertex $V_{4,1}$. The first graph is called $\Sigma^{4,1}_{abcd}$ and the sum of the over two is $T^{4,1}_{abcd}$.  Then the relations given in figures  \ref{fig:Schwinger0}, \ref{fig:Schwinger00} and \ref{fig:Schwinger000}
 are re-expressed simply as 
\bea
\Gamma_{abcd}^{6,1}=\sum_{\rho}\Gamma_{abcd}^{6,1;\rho},\quad \Gamma_{abcd}^{6,2}=\sum_{\rho\rho'}\Gamma_{abcd}^{6,2;\rho\rho'}\quad \Gamma_{abcd}^{4,1}=\sum_{\rho}\Gamma_{abcd}^{4,1;\rho} 
\eea
with
\bea
\Gamma_{abcd}^{6,1;\rho}=T_{abcd}^{6,1;\rho}+
\Sigma_{abcd}^{6,1;\rho},\quad
\Gamma_{abcd}^{6,2;\rho\rho'}=T_{abcd}^{6,2;\rho\rho'}+
\Sigma_{abcd}^{6,2;\rho\rho'},\quad
\Gamma_{abcd}^{4,1;\rho}=T_{abcd}^{4,1;\rho}+
\Sigma_{abcd}^{4,1;\rho}.
\eea
Therefore the equation on figure \ref{fig:Schwinger2}  takes the form
\bea
\Gamma_{abcd}=\Gamma_{abcd}^{6,1}+
\Gamma_{abcd}^{4,1}+\Gamma_{abcd}^{6,2}.
\eea
All of the above quantities are obtained  by using the following symmetry properties:
\begin{proposition}
\begin{itemize}
\item $\Gamma_{abcd}^{6,1;2}$  can be obtained using $\Gamma_{abcd}^{6,1;1}$
and replaced $a\rightarrow b$ and $b\rightarrow a$. \item $\Gamma_{abcd}^{6,1;3}$ is obtained using $\Gamma_{abcd}^{6,1;1}$
and replaced $a\rightarrow c$, $b\rightarrow a$ and  $c\rightarrow b$.
\item  $\Gamma_{abcd}^{6,1;4}$ is obtained using $\Gamma_{abcd}^{6,1;1}$
and replaced $a\rightarrow d$, $b\rightarrow a$, $c\rightarrow b$ and $d\rightarrow c$.
\end{itemize}
 This above symmetries is well satisfied for $\Gamma_{abcd}^{4,1;\rho}$.  In the case of  $\Gamma_{abcd}^{6,2;\rho\rho'}$ we get:
\begin{itemize}
\item $\Gamma_{abcd}^{6,2;13}$ can be obtained using $\Gamma_{abcd}^{6,2;14}$ and by replaced $a\rightarrow b$ and $b\rightarrow a$.
\item $\Gamma_{abcd}^{6,2;12}$ is obtained by replaced in $\Gamma_{abcd}^{6,2;14}$, $a\rightarrow c$, $b\rightarrow a$ and $c\rightarrow b$.
\item $\Gamma_{abcd}^{6,2;23}$ is obtained by replaced in $\Gamma_{abcd}^{6,2;14}$, $a\rightarrow b$, $b\rightarrow a$,  $c\rightarrow d$ and $d\rightarrow c$.
\item  $\Gamma_{abcd}^{6,2;24}$ is obtained by replaced in $\Gamma_{abcd}^{6,2;14}$, $c\rightarrow d$ and $d\rightarrow c$. 
\item $\Gamma_{abcd}^{6,2;34}$ is obtained by replaced in $\Gamma_{abcd}^{6,2;14}$, $b\rightarrow c$, $c\rightarrow d$ and $d\rightarrow b$.
\end{itemize}
\end{proposition}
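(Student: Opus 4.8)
The plan is to mirror, for rank $4$, the derivation that produced Proposition~\ref{symnewcal} in the rank $3$ case. The starting observation is structural: each of the three families of vertices, $\{V_{6,1;\rho}\}_{\rho=1}^{4}$, $\{V_{4,1;\rho}\}_{\rho=1}^{4}$ and $\{V_{6,2;\rho\rho'}\}$, is carried onto itself, transitively, by the group $S_4$ of permutations of the four strand positions (colours), and within each family every member is the image of a fixed reference vertex ($V_{6,1;1}$, $V_{4,1;1}$, $V_{6,2;14}$) under a suitable colour permutation $\sigma\in S_4$. Since the propagator $C([p])=(\sum_{i=1}^{4}p_i^2+m^2)^{-1}=M_{1234}^{-1}$ and the summation measure $\sum_{p}$ are manifestly invariant under relabelling of the strands, this $S_4$-covariance is inherited by every Feynman amplitude assembled from these ingredients.

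First I would write $\Gamma_{abcd}^{6,1;1}=T_{abcd}^{6,1;1}+\Sigma_{abcd}^{6,1;1}$ explicitly, exactly as in \eqref{1}--\eqref{3}: $T_{abcd}^{6,1;1}$ is a tadpole-type sum $\sum_{p,q,r}G_{\ast}$ with the single external index sitting on the colour distinguished by $V_{6,1;1}$, while $\Sigma_{abcd}^{6,1;1}=\lambda\sum_{p}G_{abcd}^{-1}\,G^{ins}_{[\,\cdot\,]}$ with the insertion pattern dictated by the same vertex, the decomposition being read off from Figures~\ref{fig:Schwinger0}--\ref{fig:Schwinger000} in the same way Figures~\ref{fig:Decomp1}--\ref{fig:Decomp23} were read off in rank $3$. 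Then, for the permutation $\sigma$ with $\sigma(V_{6,1;1})=V_{6,1;\rho}$, relabelling the external indices $(a,b,c,d)$ according to $\sigma$ turns this expression term by term into the analogous expression for $\Gamma_{abcd}^{6,1;\rho}$: the tadpole contraction and the insertion pattern are both transported by $\sigma$, and the only place where a genuine identity is needed — rather than a mere renaming of dummy indices — is when re-expressing $G$ and $G^{ins}$ with permuted arguments, which is supplied by the rank $4$ analogue of the symmetry ansatz \eqref{ben} together with the symmetry of $M$. Reading off which $\sigma$ sends $V_{6,1;1}$ to $V_{6,1;\rho}$ for $\rho=2,3,4$ produces precisely the three index substitutions listed in the proposition; the identical argument applied verbatim to $V_{4,1}$ yields the corresponding statement there.

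For the family $V_{6,2;\rho\rho'}$ the same mechanism applies, but the relevant data is now the action of $S_4$ on unordered pairs of colours: a permutation $\sigma$ carrying the pair $\{1,4\}$ distinguished by $V_{6,2;14}$ onto $\{\rho,\rho'\}$ induces, through the covariance argument above, the index substitution attached to $\Gamma_{abcd}^{6,2;\rho\rho'}$, and the freedom $\sigma\leftrightarrow\sigma\circ(\text{swap of the two distinguished colours})$ is exactly what forces the labels to be symmetric, $\rho\rho'=\rho'\rho$. One also checks that the anomalous vertex $V_{4,2}$ never enters the discussion: being disconnected it contributes to no 1PI two-point diagram (equivalently, it does not feed the melonic sector), so it is absent from $\Gamma_{abcd}$ and requires no symmetry statement.

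The step I expect to be the main obstacle is bookkeeping rather than conceptual: tracking exactly which colour each of $a,b,c,d$ occupies before and after applying $\sigma$, so that the permutation is transcribed with the correct cycle structure and orientation — it is easy to end up with $\sigma$ in place of $\sigma^{-1}$, or to write a $3$-cycle as its inverse. Fixing once and for all a convention for how the external momenta are assigned to the strands of the reference vertices $V_{6,1;1}$, $V_{4,1;1}$ and $V_{6,2;14}$, and then drawing the insertion decompositions uniformly, makes all the substitutions in the proposition fall out mechanically.
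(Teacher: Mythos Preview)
Your approach is correct and, in fact, more explicit than the paper itself: the paper states this proposition without proof, treating the colour-permutation covariance of the vertices $V_{6,1;\rho}$, $V_{4,1;\rho}$, $V_{6,2;\rho\rho'}$ (together with the manifest $S_4$-symmetry of $M_{1234}$ and the summation measure) as self-evident from Figures~\ref{fig:Vertex4d}--\ref{fig:Schwinger000}. Your sketch --- identify the permutation $\sigma\in S_4$ carrying the reference vertex to the target one, then transport the $T+\Sigma$ decomposition term by term --- is exactly the argument that underlies the paper's implicit claim, and your remark about the bookkeeping hazard ($\sigma$ versus $\sigma^{-1}$) is well taken.

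One small point worth flagging: you invoke ``the rank~$4$ analogue of the symmetry ansatz~\eqref{ben}'' to justify re-expressing $G$ and $G^{ins}$ with permuted arguments. In the rank~$3$ section the paper states~\eqref{ben} explicitly as an assumption before Proposition~\ref{symnewcal}; in the rank~$4$ section no analogous assumption is written down. Strictly speaking the proposition as stated is a statement about the \emph{form} of the diagrammatic expressions (which holds by colour covariance alone, with the arguments of $G$ sitting in whatever slots the vertex dictates), and a full permutation symmetry of $G$ is only needed later when one wants to collapse everything to a single reference expression as in~\eqref{dine}. So your argument is fine, but you could sharpen it by separating the purely combinatorial covariance (no assumption on $G$ needed) from the step that actually requires $G$ to be symmetric.
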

We then focus our attention to $\Gamma_{abcd}^1=(\Gamma_{abcd}^{6,1;1}+
\Gamma_{abcd}^{4,1;1})+ \Gamma_{abcd}^{6,2;14}$.  We also call  $G_{[mn]abc}^{ins}$  the two-point functions with insertion (1,2,3) wherein  the momentum indices $p_1,p_2,p_3$  are summed i.e.
\bea
G_{[mn]abc}^{ins}=\sum_{p_1,p_2,p_3}
\langle\vp_{m123}\bvp_{n123}\vp_{nabc}\bvp_{ mabc}\rangle_c.
\eea
The following relations are satisfied:
\bea
&&\Sigma_{abcd}^{6,1;1}=Z^2\lambda_{6,1;1}C_{abcd}\sum_{p}
G_{abcd}^{-1}G_{[ap]bcd}^{ins},\quad T_{abcd}^{6,1;1}=Z^2\lambda_{6,1;1}C_{abcd}
\sum_{p,q,r}G_{pqra},\\
&&\Sigma_{abcd}^{6,2;14}=Z^2\lambda_{6,2;14}C_{abcd}\sum_{p}
G_{abcd}^{-1}G_{[ap]bcd}^{ins},\quad T_{abcd}^{6,2;14}=Z^2\lambda_{6,2;14}C_{abcd}
\sum_{p,q,r}G_{pqra},\\
&&\Sigma_{abcd}^{4,1;1}=Z^2\lambda_{4,1;1}\sum_{p}
G_{abcd}^{-1}G_{[ap]bcd}^{ins},\quad T_{abcd}^{4,1;1}=Z^2\lambda_{6,1;1}
\sum_{p,q,r}G_{pqra}
\eea
and then
\bea
\Gamma_{abcd}^1&=&Z^2C_{abcd}\lambda_{6,1;1}\Big[\sum_{p}
G_{abcd}^{-1}G_{[ap]bcd}^{ins}+
\sum_{p,q,r}G_{pqra}\Big]+Z^2C_{abcd}\lambda_{6,2;14}\Big[\sum_{p}
G_{abcd}^{-1}G_{[ap]bcd}^{ins}\cr
&+&\sum_{p,q,r}G_{pqra}\Big]+Z^2\lambda_{4,1;1}\Big[\sum_{p}
G_{abcd}^{-1}G_{[ap]bcd}^{ins}+
\sum_{p,q,r}G_{pqra}\Big].
\eea
We set $\lambda_{6,1;\rho}=\lambda_{6,1}$, 
$\lambda_{6,2;\rho\rho'}=\lambda_{6,2}$ and $\lambda_{4,1;\rho}=\lambda_{4,1}$. Noting that the connected to point function can be expressed as $G_{abcd}^{-1}=M_{abcd}-\Gamma_{abcd}$. Then we get
\bea\label{dine}
\Gamma_{abcd}^1&=&Z^2M^{-1}_{abcd}\lambda_{6,1}\Big[\sum_{p}
G_{abcd}^{-1}\frac{G_{pbcd}-G_{abcd}}{Z(a^2-p^2)}+
\sum_{p,q,r}G_{pqra}\Big]\cr
&+&Z^2M^{-1}_{abcd}\lambda_{6,2}\Big[\sum_{p}
G_{abcd}^{-1}\frac{G_{pbcd}-G_{abcd}}{Z(a^2-p^2)}+\sum_{p,q,r}G_{pqra}\Big]\cr
&+&Z^2\lambda_{4,1}\Big[\sum_{p}
G_{abcd}^{-1}\frac{G_{pbcd}-G_{abcd}}{Z(a^2-p^2)}+
\sum_{p,q,r}G_{pqra}\Big]\cr
&=&Z^2M^{-1}_{abcd}\lambda_{6,1}\Big[\sum_{p}\Big(\frac{1}{M_{pbcd}-\Gamma_{pbcd}}-\frac{1}{M_{pbcd}-\Gamma_{pbcd}}
\frac{\Gamma_{abcd}-\Gamma_{pbcd}}{Z(a^2-p^2)}\Big)
+
\sum_{p,q,r}\frac{1}{M_{pqra}-\Gamma_{pqra}}\Big]\cr
&+&Z^2M^{-1}_{abcd}\lambda_{6,2}\Big[\sum_{p}\Big(\frac{1}{M_{pbcd}-\Gamma_{pbcd}}
-\frac{1}{M_{pbcd}-\Gamma_{pbcd}}
\frac{\Gamma_{abcd}-\Gamma_{pbcd}}{Z(a^2-p^2)}\Big)+\sum_{p,q,r}\frac{1}{M_{pqra}-\Gamma_{pqra}}\Big]\cr
&+&Z^2\lambda_{4,1}\Big[\sum_{p}\Big(\frac{1}{M_{pbcd}-\Gamma_{pbcd}}
-\frac{1}{M_{pbcd}-\Gamma_{pbcd}}
\frac{\Gamma_{abcd}-\Gamma_{pbcd}}{Z(a^2-p^2)}\Big)+
\sum_{p,q,r}\frac{1}{M_{pqra}-\Gamma_{pqra}}\Big].
\eea
Now we use the Taylor expansion that allows  to pass to the renormalized quantity as
\beq
\Gamma_{abcd}^1=Zm_{bar}^2-m^2+(Z-1)(a^2+b^2+c^2+d^2)+\Gamma_{abcd}^{phys},
\eeq 
with condition $\Gamma_{0000}=0$ and $\partial\Gamma_{0000}=0$. This implies that 
\beq
G_{abcd}^{-1}=a^2+b^2+c^2+d^2+m^2-\Gamma_{abcd}^
{ren}.
\eeq
Then we get the following proposition
\begin{proposition}
The closed equation of the two-point functions of four dimension tensor model is given by
\bea\label{closedp}
&&\quad\quad\quad\quad\quad\quad\quad\quad(Z-1)(a^2+b^2+c^2+d^2)+\Gamma_{abcd}^{phys}\cr
&&=M^{-1}_{abcd}\lambda_{6,1}\Big\{\sum_{p}\Big[\frac{Z}{p^2+b^2+c^2+d^2+m^2-\Gamma_{pbcd}^{phys}}-\frac{1}{m^2}
\frac{M_{abcd}}{(p^2+m^2-\Gamma_{p000}^{phys})}\cr
&&-\frac{Z}{p^2+b^2+c^2+d^2+m^2-\Gamma_{pbcd}^{phys}}
\frac{\Gamma_{abcd}^{phys}-\Gamma_{pbcd}^{phys}}{(a^2-p^2)}+\frac{1}{m^2}\frac{M_{abcd}}{(p^2+m^2-\Gamma_{p000}^{phys})}
\frac{\Gamma_{p000}^{phys}}{p^2}\Big]\cr
&+&
\sum_{p,q,r}\Big[\frac{Z^2}{p^2+q^2+r^2+a^2+m^2-\Gamma_{pqra}^{phys}}-
\frac{1}{m^2}\frac{ZM_{abcd}}{(p^2+q^2+r^2+m^2-\Gamma_{pqr0}^{phys})}\Big]\Big\}
\cr
&&+M^{-1}_{abcd}\lambda_{6,2}\Big\{\sum_{p}\Big[\frac{Z}{p^2+b^2+c^2+d^2+m^2-\Gamma_{pbcd}^{phys}}-\frac{1}{m^2}\frac{M_{abcd}}{(p^2+m^2-\Gamma_{p000}^{phys})}\cr
&-&\frac{Z}{p^2+b^2+c^2+d^2+m^2-\Gamma_{pbcd}^{phys}}
\frac{\Gamma_{abcd}^{phys}-\Gamma_{pbcd}^{phys}}{(a^2-p^2)}+\frac{1}{m^2}\frac{M_{abcd}}{(p^2+m^2-\Gamma_{p000}^{phys})}
\frac{\Gamma_{p000}^{phys}}{p^2}\Big]\cr
&+&
\sum_{p,q,r}\Big[\frac{Z^2}{p^2+q^2+r^2+a^2+m^2-\Gamma_{pqra}^{phys}}-
\frac{1}{m^2}\frac{ZM_{abcd}}{(p^2+q^2+r^2+m^2-\Gamma_{pqr0}^{phys})}\Big]\Big\}
\cr
&&+\lambda_{4,1}\Big\{\sum_{p}\Big[\frac{Z}{p^2+b^2+c^2+d^2+m^2-\Gamma_{pbcd}^{phys}}-\frac{Z}{p^2+m^2-\Gamma_{p000}^{phys}}\cr
&-&\frac{Z}{p^2+b^2+c^2+d^2+m^2-\Gamma_{pbcd}^{phys}}
\frac{\Gamma_{abcd}^{phys}-\Gamma_{pbcd}^{phys}}{(a^2-p^2)}+\frac{Z}{p^2+m^2-\Gamma_{p000}^{phys}}
\frac{\Gamma_{p000}^{phys}}{p^2}\Big]\cr
&&+
\sum_{p,q,r}\Big[\frac{Z^2}{p^2+q^2+r^2+a^2+m^2-\Gamma_{pqra}^{phys}}-\frac{Z^2}{p^2+q^2+r^2+m^2-\Gamma_{pqr0}^{phys}}\Big]\Big\}.
\eea
\end{proposition}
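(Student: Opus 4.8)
The strategy is to mimic step by step the rank-$3$ derivation that produced \eqref{eqself}, now applied to the combination $\Gamma_{abcd}^1=(\Gamma_{abcd}^{6,1;1}+\Gamma_{abcd}^{4,1;1})+\Gamma_{abcd}^{6,2;14}$. I would start from equation \eqref{dine}, which already encodes two inputs: first, the combinatorial decomposition of the Schwinger-Dyson equation of Figures \ref{fig:Schwinger0}, \ref{fig:Schwinger00} and \ref{fig:Schwinger000} into the ``tadpole'' pieces $T_{abcd}^{\bullet}$ (a $3$-fold sum $\sum_{p,q,r}G_{pqra}$, dressed by a propagator $C_{abcd}$ for the $6$-valent vertices and undressed for $V_{4,1}$) and the ``insertion'' pieces $\Sigma_{abcd}^{\bullet}$ built from the two-point function with $(1,2,3)$-insertion $G^{ins}_{[ap]bcd}$; and second, the rank-$4$ Ward--Takahashi identity obtained by specialising \eqref{Ward1} to the kinetic term $M_{abcd}=a^2+b^2+c^2+d^2+m^2$, which gives $(a^2-p^2)\,G^{ins}_{[ap]bcd}=Z^{-1}(G_{pbcd}-G_{abcd})$ after passing to the renormalised propagator, hence the finite-difference quotients $\tfrac{G_{pbcd}-G_{abcd}}{Z(a^2-p^2)}$ appearing in \eqref{dine}. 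Substituting $G_{abcd}^{-1}=M_{abcd}-\Gamma_{abcd}$ then turns \eqref{dine} into a closed expression for $\Gamma_{abcd}^1$ purely in terms of the $\Gamma$'s.

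Next I would carry out the renormalisation exactly as in the rank-$3$ case: Taylor-expand $\Gamma_{abcd}^1$ about vanishing external momenta, writing $\Gamma_{abcd}^1=Zm_{bar}^2-m^2+(Z-1)(a^2+b^2+c^2+d^2)+\Gamma_{abcd}^{phys}$ with the conditions $\Gamma_{0000}^{phys}=0$ and $\partial\Gamma_{0000}^{phys}=0$, so that $G_{abcd}^{-1}=a^2+b^2+c^2+d^2+m^2-\Gamma_{abcd}^{phys}$. One then evaluates the resulting equation at $a=b=c=d=0$ to isolate the mass-renormalisation relation for $Zm_{bar}^2-m^2$ (the analogue of \eqref{reply1}), and subtracts it from the general-momentum equation. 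The $(Z-1)$ wave-function piece on the left is handled by the derivative conditions, which produce the counterpart of \eqref{moimeme}; one does not actually need to solve for $Z$ to state \eqref{closedp}, only to keep the $Z$ and $Z^2$ prefactors in their correct places (each $6$-valent vertex carries $Z^2$ times three renormalised propagators, while the insertion channel loses one power of $Z$ through the Ward identity). Finally, summing the three vertex species with their couplings $\lambda_{6,1}$, $\lambda_{6,2}$, $\lambda_{4,1}$ gives \eqref{closedp}.

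\textbf{Main obstacle.} The one genuinely new feature compared with the rank-$3$ computation is the overall propagator factor $C_{abcd}=M_{abcd}^{-1}$ multiplying the $V_{6,1}$ and $V_{6,2}$ contributions (but \emph{not} the marginal $V_{4,1}$ one). Because of it, the zero-momentum subtraction no longer cancels cleanly: when one evaluates those channels at $a=b=c=d=0$ the prefactor becomes $M_{0000}^{-1}=m^{-2}$, whereas in the general-momentum equation it stays $M_{abcd}^{-1}$, so the difference of the two equations retains explicit factors $\tfrac{1}{m^2}\tfrac{M_{abcd}}{\,p^2+m^2-\Gamma_{p000}^{phys}\,}$ and $\tfrac{1}{m^2}\tfrac{M_{abcd}}{\,p^2+q^2+r^2+m^2-\Gamma_{pqr0}^{phys}\,}$ in the $\lambda_{6,1},\lambda_{6,2}$ blocks of \eqref{closedp}, which have no analogue in \eqref{eqself}. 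Keeping track of this asymmetry, and of the mismatched $Z$ versus $Z^2$ weights between the $3$-fold tadpole sums and the single-sum insertion terms, is the delicate bookkeeping part; once it is done correctly, the remaining manipulations are the same routine algebra (insertion of the mass-renormalisation identity, cancellation of the purely $m$-dependent divergences against their zero-momentum counterterms) that led from \eqref{reply} and \eqref{reply1} to \eqref{eqself}.
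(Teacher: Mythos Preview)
Your proposal is correct and follows essentially the same route as the paper: start from \eqref{dine}, pass to renormalised quantities via the Taylor expansion $\Gamma_{abcd}^1=Zm_{bar}^2-m^2+(Z-1)(a^2+b^2+c^2+d^2)+\Gamma_{abcd}^{phys}$, evaluate at $a=b=c=d=0$ to obtain the mass-renormalisation relation (the analogue of \eqref{reply1}), and subtract it from the general-momentum equation. You have also correctly isolated the only genuinely new bookkeeping point, namely that the propagator prefactor $C_{abcd}=M_{abcd}^{-1}$ on the $\lambda_{6,1}$ and $\lambda_{6,2}$ channels becomes $M_{0000}^{-1}=1/(Zm^2)$ at zero momentum, producing the explicit $\tfrac{1}{m^2}M_{abcd}$ factors in \eqref{closedp} that have no rank-$3$ counterpart.
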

\begin{proof} The equation \eqref{closedp} can be simply obtained using the relation of  $Zm_{bar}^2-m^2$ in the same way of the last section as 
\bea\label{samary}
&&Zm_{bar}^2-m^2=ZM^{-1}_{0000}\lambda_{6,1}\Big[\sum_{p}\Big(\frac{1}{p^2+m^2-\Gamma_{p000}^{phys}}-\frac{1}{p^2+m^2-\Gamma_{p000}^{phys}}
\frac{\Gamma_{p000}^{phys}}{p^2}\Big)\cr
&&+
\sum_{p,q,r}\frac{Z}{p^2+q^2+r^2+m^2-\Gamma_{pqr0}^{phys}}\Big]
+ZM^{-1}_{0000}\lambda_{6,2}\Big[\sum_{p}\Big(\frac{1}{p^2+m^2-\Gamma_{p000}^{phys}}\cr
&&-\frac{1}{p^2+m^2-\Gamma_{p000}^{phys}}
\frac{\Gamma_{p000}^{phys}}{p^2}\Big)+\sum_{p,q,r}\frac{Z}{p^2+q^2+r^2+m^2-\Gamma_{pqr0}^{phys}}\Big]
+Z\lambda_{4,1}\Big[\sum_{p}\Big(\frac{1}{p^2+m^2-\Gamma_{p000}^{phys}}
\cr
&&-\frac{1}{p^2+m^2-\Gamma_{p000}^{phys}}
\frac{\Gamma_{p000}^{phys}}{p^2}\Big)+
\sum_{p,q,r}\frac{Z}{p^2+q^2+r^2+m^2-\Gamma_{pqr0}^{phys}}\Big],\quad M_{0000}^{-1}=\frac{1}{Zm^2}
\eea
Then \eqref{closedp} takes the form by replacing the relation \eqref{samary} into the right hand side of equation \eqref{dine}.
\end{proof}
Let us remark that the continuous limit of the equation \eqref{closedp}  can be  built. We identify the sum  as $\sum_p=2\int_0^\infty\,dp$ and $\sum_{p,q,r}=2\int_0^\infty\,p^2dp$. We also impose the cutoff $p_\Lambda$ in the $UV$ and changing the variables as
\bea
a^2=m^2\frac{\alpha}{1-\alpha},\quad b^2=m^2\frac{\beta}{1-\beta},\quad c^2=m^2\frac{\gamma}{1-\gamma},\cr d^2=m^2\frac{\epsilon}{1-\epsilon},\quad p^2=m^2\frac{\rho}{1-\rho},\quad p^2_\Lambda=m^2\frac{\Lambda}{1-\Lambda}.
\eea
Now let us define the two quantities $s(\alpha,\beta,\gamma,\epsilon)$ and $p(\alpha,\beta,\gamma,\epsilon)$
as
\beq
s(\alpha,\beta,\gamma,\epsilon)=1-\alpha\beta-\alpha\gamma-\alpha\epsilon
-\beta\gamma-\beta\epsilon-\gamma\epsilon+2\alpha\beta\gamma
+2\alpha\beta\epsilon+2\alpha\gamma\epsilon+2\beta\gamma\epsilon
-3\alpha\beta\gamma\epsilon
\eeq
and
\beq
p(\alpha,\beta,\gamma,\epsilon)=(1-\alpha)(1-\beta)(1-\gamma)(1-\epsilon).
\eeq
The equation \eqref{closedp} is re-expressed  as
\bea\label{cloclo}
&& m^2(Z-1)\frac{p(\alpha,\beta,\gamma,\epsilon)}{(1-\alpha)(1-\beta)(1-\gamma)(1-\epsilon)}+m^2\frac{\Gamma_{\alpha\beta\gamma\epsilon}}{(1-\alpha)(1-\beta)(1-\gamma)(1-\epsilon)}\cr
&=&2M^{-1}_{\alpha\beta\gamma\epsilon}(\lambda_{6,1}+\lambda_{6,2})\Big\{\int_0^\Lambda\,\frac{1}{2m}\sqrt{\frac{1-\rho}{m\rho}}\frac{d\rho}{(1-\rho)^2}\Big[\frac{Z(1-\rho)(1-\beta)(1-\gamma)(1-\epsilon)}{s(\rho,\beta,\gamma,\epsilon)-\Gamma_{\rho\beta\gamma\epsilon}}\cr
&-&\frac{1}{m^2}\frac{M_{\alpha\beta\gamma\epsilon}(1-\rho)}{1-\Gamma_{\rho 000}}-\frac{Z(1-\rho)}{(s(\rho,\beta,\gamma,\epsilon)-\Gamma_{\rho\beta\gamma\epsilon})}\frac{(1-\rho)
\Gamma_{\alpha\beta\gamma\epsilon}-(1-\alpha)
\Gamma_{\rho\beta\gamma\epsilon}}{(\alpha-\rho)}\cr
&+&\frac{1}{m^2}\frac{M_{\alpha\beta\gamma\epsilon}(1-\rho)}{(1-\Gamma_{\rho 000})}\frac{\Gamma_{\rho 000}}{\rho}\Big]
+\int_0^\Lambda\frac{1}{2m}\sqrt{\frac{m\rho}{1-\rho}}\frac{d\rho}{(1-\rho)^2}\Big[\frac{Z^2(1-\rho)^3(1-\alpha)}{s(\rho,\rho,\rho,\alpha)-\Gamma_{\rho\rho\rho\alpha}}\cr
&-&\frac{1}{m^2}\frac{ZM_{\alpha\beta\gamma\epsilon}(1-\rho)^3}{(2\rho^3-3\rho^2+1-\Gamma_{\rho\rho\rho 0})}\Big]\Big\}\cr
&+&\lambda_{4,1}\Big\{\int_0^\Lambda\,\frac{1}{2m}\sqrt{\frac{1-\rho}{m\rho}}\frac{d\rho}{(1-\rho)^2}\Big[\frac{Z(1-\rho)(1-\beta)(1-\gamma)(1-\epsilon)}{s(\rho,\beta,\gamma,\epsilon)-\Gamma_{\rho\beta\gamma\epsilon}}\cr
&-&\frac{Z(1-\rho)}{1-\Gamma_{\rho 000}}-\frac{Z(1-\rho)}{(s(\rho,\beta,\gamma,\epsilon)-\Gamma_{\rho\beta\gamma\epsilon})}\frac{(1-\rho)
\Gamma_{\alpha\beta\gamma\epsilon}-(1-\alpha)
\Gamma_{\rho\beta\gamma\epsilon}}{(\alpha-\rho)}\cr
&+&\frac{Z(1-\rho)}{(1-\Gamma_{\rho 000})}\frac{\Gamma_{\rho 000}}{\rho}\Big]
+\int_0^\Lambda\frac{1}{2m}\sqrt{\frac{m\rho}{1-\rho}}\frac{d\rho}{(1-\rho)^2}\Big[\frac{Z^2(1-\rho)^3(1-\alpha)}{s(\rho,\rho,\rho,\alpha)-\Gamma_{\rho\rho\rho\alpha}}\cr
&-&\frac{Z^2(1-\rho)^3}{(2\rho^3-3\rho^2+1-\Gamma_{\rho\rho\rho 0})}\Big]\Big\}.
\eea
The wave function $Z$ can be also  deduced  as
\bea\label{Znew}
Z=\frac{1-\frac{2}{m^2}(\lambda_{6,1}+\lambda_{6,2})\int_0^\Lambda\,\frac{d\rho}{2m^3}\sqrt{\frac{1-\rho}{m\rho}}\Big(G_{\rho 000}+\frac{G'_{\rho 000}}{\rho}\Big)}{1+\lambda_{4,1}\int_0^\Lambda\,\frac{d\rho}{2m^3}\sqrt{\frac{1-\rho}{m\rho}}\Big(G_{\rho 000}+\frac{G'_{\rho 000}}{\rho}\Big)},
\eea
where
\bea\label{changevar}
s(\alpha,\beta,\gamma,\epsilon)-\Gamma_{\alpha\beta\gamma\epsilon}=\frac{s(\alpha,\beta,\gamma,\epsilon)}{G_{\alpha\beta\gamma\epsilon}},
\eea
and 
\beq
M_{\alpha\beta\gamma\epsilon}=Zm^2\frac{s(\alpha,\beta\gamma,\epsilon)}{p(\alpha,\beta\gamma,\epsilon)}.
\eeq
Finally by replacing  the expressions \eqref{Znew} and \eqref{changevar}  in the equation \eqref{cloclo}, we obtain the closed equation in the continuous limit, which will also be   fully addressed in forthcoming  work.



\section{ Conclusion}\label{sec5}
In the present paper, we have presented a perturbative calculation of two-point correlation functions of  rank $3$ TGFT. As discussed earlier the correlation functions are given  by
combining  Ward-Takahashi identities and Schwinger-Dyson equations that allows  to establish the appropriate closed equation.
The closed equation in the $4D$ case is also given.

 In this work, we proved that the nonperturbative techniques as developed in \cite{Grosse:2013iva}\cite{Grosse:2012uv}\cite{Grosse:2009pa}  can be reported to the tensor situation. Indeed, although, we only solve our closed form equations for the two-point functions
at initial orders, it is very promising to see that we can obtain even solutions in this highly combinatoric case. 
As future investigations, we can now undertake a calculation of the general solution at all orders of the coupling constants for both rank 3 and 4 models.



\section*{Acknowledgements}
Discussions with Joseph Ben Geloun, Raimar Wulkenhaar and Vincent Rivasseau are gratefully acknowledged. 
This research was
supported in part by Perimeter Institute for Theoretical Physics and Fields Institute  for Research in Mathematical Sciences (Toronto). Research at Perimeter Institute is supported by the
Government of Canada through Industry Canada and by the Province of Ontario through the Ministry of Research
and Innovation.



\end{document}